\newcommand{\pO}{{\sf Player 1}}
\newcommand{\pT}{{\sf Player 2}}
\newcommand{\AO}{A}
\newcommand{\AT}{B}
\newcommand{\aO}{a}
\newcommand{\aT}{b}
\newcommand{\Col}{C}
\newcommand{\tr}{\mathrm{col}}
\newcommand{\hOS}{h}
\newcommand{\ol}{\overline}
\newcommand{\aH}{\rho}
\newcommand{\aHT}{\beta}
\newcommand{\aHinf}{\boldsymbol{\aH}}
\newcommand{\aHTinf}{\boldsymbol{\aHT}}
\newcommand{\sO}{s}
\newcommand{\mf}{\mu}
\newcommand{\R}{\mathbb{R}}
\newcommand{\N}{\mathbb{N}}
\newcommand{\e}{\epsilon}
\newcommand{\lex}{\mathrm{lex}}
\title{Time-aware uniformization of winning strategies} 
\titlerunning{Time-aware uniformization of winning strategies}
\author{St{\'e}phane Le Roux}{LSV, Universit{\'e} Paris-Saclay, ENS Paris-Saclay, CNRS, France \and \url{http://www.lsv.fr/~leroux/} }{leroux@lsv.fr}{}{}
\authorrunning{S. Le Roux}
\keywords{Two-player win/lose games, imperfect information, criterion for existence of uniform winning strategies, finite memory}
\begin{document}

\maketitle

\begin{abstract}
Two-player win/lose games of infinite duration are involved in several disciplines including computer science and logic. If such a game has deterministic winning strategies, one may ask how simple such strategies can get. The answer may help with actual implementation, or to win despite imperfect information, or to conceal sensitive information especially if the game is repeated.

Given a concurrent two-player win/lose game of infinite duration, this article considers equivalence relations over histories of played actions. A classical restriction used here is that equivalent histories have equal length, hence \emph{time awareness}. A sufficient condition is given such that if a player has winning strategies, she has one that prescribes the same action at equivalent histories, hence \emph{uniformization}. The proof is fairly constructive and preserves finiteness of strategy memory, and counterexamples show relative tightness of the result. Several corollaries follow for games with states and colors.
\end{abstract}

\section{Introduction}

In this article, two-player win/lose games of infinite duration are games where two players \emph{concurrently and deterministically} choose one action each at every of infinitely many rounds, and ``in the end'' exactly one player wins. Such games (especially their simpler, turn-based variant) have been used in various fields ranging from social sciences to computer science and logic, e.g. in automata theory \cite{EJ91, Mostowski91} and in descriptive set theory \cite{Martin75}.

Given such a game and a player, a fundamental question is whether she has a winning strategy, i.e. a way to win regardless of her opponent's actions. If the answer is positive, a second fundamental question is whether she has a simple winning strategy. More specifically, this article investigates the following \emph{strategy uniformization} problem: consider an equivalence relation $\sim$ over histories, i.e. over sequences of played actions; if a player has a winning strategy, has she a winning $\sim$-strategy, i.e. a strategy prescribing the same action after equivalent histories? This problem is relevant to imperfect information games and beyond.

This article provides a sufficient condition on $\sim$ and on the winning condition of a player such that, if she has a winning strategy, she has a winning $\sim$-strategy. The sufficient condition involves time awareness of the player, but \emph{perfect recall} (rephrased in Section~\ref{sect:main-def-res}) is not needed. On the one hand, examples show the tightness of the sufficient condition in several directions; on the other hand, further examples show that the sufficient condition is not strictly necessary.

The proof of the sufficient condition has several features. First, from any winning strategy $\sO$, it derives a winning $\sim$-strategy $\sO \circ f$. The map $f$ takes as input the true history of actions, and outputs a well-chosen \emph{virtual history} of equal length. Second, the derivation $\sO \mapsto \sO \circ f$ is $1$-Lipschitz continuous, i.e., \emph{reactive}, as in reactive systems. (Not only the way of playing is reactive, but also the synthesis of the $\sim$-strategy.) Third, computability of $\sim$ and finiteness of the opponent action set make the derivation \emph{computable}. As a consequence in this restricted context, if the input strategy is computable, so is the uniformized output strategy. Fourth, \emph{finite-memory} implementability of the strategies is preserved, if the opponent action set is finite. Fifth, strengthening the sufficient condition by assuming \emph{perfect recall} makes the virtual-history map $f$ definable incrementally (i.e. by mere extension) as the history grows. This simplifies the proofs and improves the memory bounds.

The weaker sufficient condition, i.e. when not assuming perfect recall, has an important corollary about concurrent games with states and colors: if \emph{any} winning condition (e.g. not necessarily Borel) is defined via the colors, a player who can win can do so by only taking the history of colors and the current state into account, instead of the full history of actions. Finiteness of the memory is also preserved, if the opponent action set is finite. Two additional corollaries involve the energy winning condition or a class of winning conditions laying between B{\"u}chi and Muller.

Both the weaker and the stronger sufficient conditions behave rather well algebraically. In particular, they are closed under arbitrary intersections. This yields a corollary involving the conjunction of the two aforementioned winning conditions, i.e., energy and (sub)Muller.

Finding sufficient conditions for strategy uniformization may help reduce the winning strategy search space; or help simplify the notion of strategy: instead of expecting a precise history as an input, it may just expect an equivalence class, e.g. expressed as a simpler trace.

The strategy uniformization problem is also relevant to \emph{protagonist-imperfect-information} games, where the protagonist cannot distinguish between equivalent histories; and also to \emph{antagonist-imperfect-information} games, where the protagonist wants to behave the same after as many histories as possible to conceal information from her opponent or anyone (partially) observing her actions: indeed the opponent, though losing, could try to lose in as many ways as possible over repeated plays of the game, to learn the full strategy of \pO\/, i.e. her capabilities. In connection with the latter, this article studies the strategy \emph{maximal} uniformization problem: if there is a winning strategy, is there a maximal $\sim$ such that there is a winning $\sim$-strategy? A basic result is proved and examples show its relative tightness.

\paragraph*{Related works}

The distinction between perfect and imperfect information was already studied in \cite{NM44} for finite games. Related concepts were clarified in \cite{Kuhn50} by using terms such as information partition and perfect recall: this article is meant for a slightly more general setting and thus may use different terminologies.

I am not aware of results similar to my \emph{sufficient condition for universal existence}, but there is an extensive literature, starting around \cite{Reif84}, that studies related \emph{decision problems of existence}: in some class of games, is the existence of a uniform winning strategy decidable and how quickly? Some classes of games come from strategy logic, introduced in \cite{CHP10} and connected to information imperfectness, e.g., in \cite{BMMRV17}. Some other classes come from dynamic epistemic logic, introduced in \cite{Hintikka62} and connected to games, e.g., in \cite{Benthem01} and to decision procedures, e.g., in \cite{MPS19}. Among these works, some~\cite{MP14} have expressed the need for general frameworks and results about uniform strategies; others~\cite{BD18} have studied subtle differences between types of information imperfectness.

Imperfect information games have been also widely used in the field of security, see e.g. the survey \cite[Section 4.2]{RESDSW10}. The aforementioned strategy maximal uniformization problem could be especially relevant in this context.





\paragraph*{Structure of the article}

Section~\ref{sect:main-def-res} presents the main results on the strategy uniformization problem; Section~\ref{sect:def-cor-sc} presents various corollaries about games with states and colors; Section~\ref{sect:sasc} proves the main result and a corollary under the stronger assumption that yields a stronger conclusion; Sections~\ref{sect:wawc} and \ref{sect:memory} prove the main result and the remaining corollaries under the weaker assumption that yields a weaker conclusion, respectively in the general case and in the memory-aware case; Section~\ref{sect:cbusc} discusses notions and properties used in the main results as well as alternative notions; Section~\ref{sect:tight} shows the tightness of the sufficient condition in several directions; finally, Section~\ref{sect:mhc} presents basic results for the strategy \emph{maximal} uniformization problem and examples showing some tightness.

\section{Main definitions and results}\label{sect:main-def-res}

The end of this section discusses many aspects of the forthcoming definitions and results.

\paragraph*{Definitions on game theory}

In this article, a {\bf two-player win/lose game} is a tuple $\langle \AO, \AT, W\rangle$ where $\AO$ and $\AT$ are non-empty sets and $W$ is a subset of infinite sequences over $\AO \times \AT$, i.e. $W \subseteq (\AO \times \AT)^\omega$. Informally, \pO\/ and \pT\/ concurrently choose one action in $\AO$ and $\AT$, respectively, and repeat this $\omega$ times. If the produced sequence is in $W$, \pO\/ wins and \pT\/ loses, otherwise \pT\/ wins and \pO\/ loses. So $W$ is called the winning condition (of \pO\/).

The {\bf histories} are the finite sequences over $\AO \times \AT$, denoted by $(\AO \times \AT)^*$. The {\bf opponent-histories} are $\AT^*$. The {\bf runs} and {\bf opponent-runs} are their infinite versions.

A \pO\/ {\bf strategy} is a function from $\AT^*$ to $\AO$. Informally, it tells \pO\/ which action to choose depending on the opponent-histories, i.e. on how \pT\/ has played so far.

The {\bf induced history} function $\hOS: ((\AT^* \to \AO) \times \AT^*) \to (\AO \times \AT)^*$ expects a \pO\/ strategy and an opponent-history as inputs, and outputs a history. It is defined inductively: $\hOS(\sO,\e) := \e$ and $\hOS(\sO,\aHT\cdot \aT) := \hOS(\sO,\aHT) \cdot (\sO(\aHT),\aT)$ for all $(\aHT,\aT) \in \AT^* \times \AT$. Informally, $h$ outputs the very sequence of pairs of actions that are chosen if \pO\/ follows the given strategy while \pT\/ plays the given opponent-history. Note that $\aHT \mapsto \hOS(\sO,\aHT)$ preserves the length and the prefix relation, i.e. $\forall \aHT,\aHT' \in \AT^*, |\hOS(\sO,\aHT)| = |\aHT| \wedge (\aHT \sqsubseteq \aHT' \Rightarrow \hOS(\sO,\aHT) \sqsubseteq  \hOS(\sO,\aHT'))$.

The function $\hOS$ is extended to accept opponent-runs (in $\AT^\omega$) and then to output runs: $\hOS(\sO,\aHTinf)$ is the only run whose prefixes are the $\hOS(\sO,\aHTinf_{\leq n})$ for $n \in \N$, where $\aHTinf_{\leq n}$ is the prefix of $\aHTinf$ of length $n$. A \pO\/ strategy $\sO$ is a {\bf winning strategy} if $\hOS(\sO, \aHTinf) \in W$ for all $\aHTinf \in \AT^\omega$.

\paragraph*{Definitions on equivalence relations over histories}

Given a game $\langle \AO, \AT, W\rangle$, a {\bf strategy constraint} (constraint for short) is an equivalence relation over histories. Given a constraint $\sim$, a strategy $\sO$ is said to be a $\boldsymbol{\sim}${\bf -strategy} if $\hOS(\sO,\aHT) \sim \hOS(\sO,\aHT') \Rightarrow \sO(\aHT) = \sO(\aHT')$ for all opponent-histories $\aHT,\aHT' \in \AT^*$. Informally, a $\sim$-strategy behaves the same after equivalent histories that are compatible with $\sO$.

Useful predicates on constraints, denoted $\sim$, are defined below.
\begin{enumerate}
\item\label{sim-la} Time awareness: $\aH\sim\aH' \,\Rightarrow\, |\aH| = |\aH'|$, where $|\aH|$ is the length of the sequence/word $\aH$.
\item\label{sim-cas} Closedness by adding a suffix: $\aH \sim \aH'\,\Rightarrow\, \aH\aH'' \sim \aH'\aH''$.
\item Perfect recall: $(\aH \sim \aH' \wedge |\aH| = |\aH'|) \,\Rightarrow\, \forall n \leq |\aH|,\, \aH_{\leq n} \sim \aH'_{\leq n}$

\item\label{sim-wctl} Weak $W$-closedness: $\forall \aHinf,\aHinf' \in (\AO \times \AT)^\omega,(\forall n \in \N, \aHinf_{\leq n} \sim \aHinf'_{\leq n}) \Rightarrow (\aHinf \in W \Leftrightarrow \aHinf' \in W)$

\item\label{cond3} Strong $W$-closedness: $\forall \aHinf,\aHinf' \in (\AO \times \AT)^\omega,$\\
$(\forall n \in \N, \exists \gamma \in (\AO \times \AT)^*, \aHinf_{\leq n}\gamma \sim \aHinf'_{\leq n+|\gamma|}) \Rightarrow (\aHinf \in W \Rightarrow \aHinf' \in W)$
\end{enumerate}
Note that the first three predicates above constrain only (the information available to) the strategies, while the last two predicates constrain also the winning condition.

\paragraph*{Definitions on automata theory}

The {\bf automata} in this article have the classical form $(\Sigma,Q,q_0,\delta)$ where $q_0 \in Q$ and $\delta: Q \times \Sigma \to Q$, possibly with additional accepting states $F \subseteq Q$ in the definition. The state space $Q$ may be infinite, though. The transition function is lifted in two ways by induction. First, to compute the current state after reading a word: $\delta^+(\e) := q_0$ and $\delta^+(ua) := \delta(\delta^+(u), a)$ for all $(u ,a)\in \Sigma^* \times \Sigma$. Second, to compute the sequence of visited states while reading a word: $\delta^{++}(\e) := q_0$ and $\delta^{++}(u a) = \delta^{++}(u)\delta^+(ua)$. Note that $|\delta^{++}(u)| = |u|$ for all $u \in \Sigma^*$.

Given a game $\langle \AO, \AT, W\rangle$, a {\bf memory-aware implementation} of a strategy $\sO$ is a tuple $(M,m_0,\sigma,\mf,)$ where $M$ is a (in)finite set (the memory), $m_0 \in M$ (the initial memory state), $\sigma: M \to \AO$ (the choice of action depending on the memory state), and $\mf : M \times \AT \to M$ (the memory update), such that $\sO = \sigma \circ \mf^+$, where $\mf^+: \AT^* \to M$ (the ``cumulative'' memory update) is defined inductively: $\mf^+(\e) := m_0$ and $\mf^+(\aHT \aT) = \mf(\mf^+(\aHT),\aT)$ for all $(\aHT ,\aT)\in \AT^* \times \AT$. If $M$ is finite, $\sO$ is said to be a {\bf finite-memory strategy}.

{\bf Word pairing}: for all $n \in \N$, for all $u,v \in \Sigma^n$, let ${\bf u \boldsymbol{ \|} v }:= (u_1,v_1) \dots (u_n,v_n) \in (\Sigma^2)^n$.

A time-aware constraint is {\bf $2$-tape-recognizable} using memory states $Q$, if there is an automaton $((\AO \times \AT)^2,Q,q_0,F,\delta)$ such that $u \sim v$ iff $\delta^+(u \| v) \in F$. (It implies $q_0 = \delta^+(\e \| \e) \in F$.) If moreover $Q$ is finite, the constraint is said to be {\bf $2$-tape-regular}. Recognition of relations by several tapes was studied in, e.g., \cite{RS59}. Note that $2$-tape regularity of $\sim$ was called indistinguishability-based in \cite{BD18}.

\paragraph*{Main results}


Let us recall additional notions first. Two functions of domain $\Sigma^*$ that coincide on inputs of length less than $n$ but differ for some input of length $n$ are said to be at distance $\frac{1}{2^n}$. In this context, a map from strategies to strategies (or to $\AT^* \to \AT^*$) is said to be $1$-Lipschitz continuous if from any input strategy that is partially defined for opponent-histories of length up to $n$, one can partially infer the output strategy for opponent-histories of length up to $n$.

\begin{theorem}\label{thm:main}
 Consider a game $\langle \AO, \AT, W\rangle$ and a constraint $\sim$ that is time-aware and closed by adding a suffix. The two results below are independent.
 \begin{enumerate}
 \item\label{thm:main1} {\bf Stronger assumptions and conclusions}: If $\sim$ is also perfectly recalling and weakly $W$-closed, there exists a map $f: (\AT^* \to \AO)  \to (\AT^* \to \AT^*)$ satisfying the following.
 \begin{enumerate}
 \item\label{thm:main10} For all $\sO: \AT^* \to \AO$ let $f_{\sO}$ denote $f(\sO)$; we have $\sO \circ f_{\sO}$ is a $\sim$-strategy, and $\sO \circ f_{\sO}$ is winning if $\sO$ is winning.
 
   \item\label{thm:main11} The map $f$ is $1$-Lipschitz continuous. 
  \item\label{thm:main12} For all $\sO$ the map $f_{\sO}$ preserves the length and the prefix relation.
   \item\label{thm:main13} The map $\sO \mapsto \sO \circ f_{\sO}$ is $1$-Lipschitz continuous. 
  \item\label{thm:main14} If $\AT$ is finite and $\sim$ is computable, $f$ is also computable; and as a consequence, so is $\sO \circ f_{\sO}$ for all computable $\sO$.
  
    \item\label{thm:main15} If $\sO$ has a memory-aware implementation using memory states $M$, and if $\sim$ is $2$-tape recognizable by an automaton with accepting states $F$, then $\sO \circ f_{\sO}$ has a memory-aware implementation using memory states $M \times F$.
 \end{enumerate}

 \item\label{thm:main2} 
 {\bf Weaker assumptions (no perfect recall) and conclusions}: If $\sim$ is also strongly $W$-closed, there exists a self-map of the \pO\/ strategies that satisfies the following.
  \begin{enumerate}
 \item\label{thm:main20} It maps strategies to $\sim$-strategies, and winning strategies to winning strategies.
 
 \item\label{thm:main21} It is $1$-Lipschitz continuous.

 \item\label{thm:main22} If $\AT$ is finite and $\sim$ is computable, the self-map is also computable.
  \end{enumerate}
Moreover, if $\sim$ is $2$-tape recognizable using memory $M_{\sim}$, and if there is a winning strategy with memory $M_\sO$, there is also a winning $\sim$-strategy using memory $\mathcal{P}(M_\sO \times M_\sim)$.

\end{enumerate}
\end{theorem}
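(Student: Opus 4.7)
The plan is to derive, from any winning strategy $\sO$, a virtual-opponent-history map $f_\sO$ (Part~\ref{thm:main1}) or a transformed strategy $\sigma$ (Part~\ref{thm:main2}) so that $\sO\circ f_\sO$ (resp.\ $\sigma$) plays as $\sO$ would against a virtual opponent-run that is $\sim$-equivalent (resp.\ strong-$W$-compatible) to the true play, transferring winning back via the relevant $W$-closedness hypothesis. For Part~\ref{thm:main1}, I would define $f_\sO : \AT^* \to \AT^*$ by induction on length, maintaining three invariants: (i) length- and prefix-preservation; (ii) $\hOS(\sO, f_\sO(\aHT)) \sim \hOS(\sO \circ f_\sO, \aHT)$ for every $\aHT$; and (iii) $f_\sO(\aHT)$ depends only on the $\sim$-class of $\hOS(\sO \circ f_\sO, \aHT)$ at length $|\aHT|$. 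Invariant (iii) immediately makes $\sO\circ f_\sO$ a $\sim$-strategy, and (ii) together with weak $W$-closedness transfers winning from $\sO$ to $\sO\circ f_\sO$. Perfect recall is precisely what makes (iii) extendible: each $\sim$-class $C$ at length $n+1$ has a unique parent $\sim$-class $C^-$ at length $n$, so the common value $v_{C^-} = f_\sO(\aHT)$ for all $\aHT$ hitting $C^-$ is well-defined. At the inductive step, for each $C$ I would pick a canonical letter $\aT^*_C\in\AT$ with $\hOS(\sO, v_{C^-}\cdot \aT^*_C) \in C$; such a letter exists because, for any true $\aHT\cdot \aT$ landing in $C$, closedness by adding a suffix applied to (ii) at level $n$ gives $\hOS(\sO, v_{C^-}\cdot \aT) \in C$. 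Setting $f_\sO(\aHT\cdot \aT) := v_{C^-}\cdot \aT^*_C$ preserves (i)--(iii); the remaining subclaims drop out by inspection ($1$-Lipschitz continuity since the length-$(n+1)$ extension only probes $\sO$ at length $\leq n$; computability from finite $\AT$ and decidable $\sim$; memory state $(\mf^+(f_\sO(\aHT)),\, \delta^+(\hOS(\sO\circ f_\sO,\aHT) \| \hOS(\sO, f_\sO(\aHT)))) \in M\times F$, which by (ii) always stays in $F$ and pins down the next $\aT^*_C$).

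For Part~\ref{thm:main2}, perfect recall fails so $\sim$-classes need not refine coherently across levels and no incrementally built virtual history suffices. Instead I would track, at true history $\aHT$ of length $n$, the set $U_n(\aHT)$ of virtual opponent-histories $v \in \AT^n$ whose virtual run $\hOS(\sO, v)$ remains strong-$W$-compatible with $\hOS(\sigma, \aHT)$, and set $\sigma(\aHT) := \sO(v^*)$ for $v^*$ chosen from $U_n(\aHT)$ by a fixed canonical rule. Since strong-$W$-compatibility is a property of the $\sim$-class of the true history, $U_n(\aHT)$ depends only on that class, and hence so does $\sigma(\aHT)$, yielding the $\sim$-strategy property. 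The $1$-Lipschitz and computability clauses come by inspection, and for the memory-aware ``moreover'' statement I would encode $U_n$ as its image in $M_\sO\times M_\sim$ under the product of the two memories, giving the stated $\mathcal{P}(M_\sO\times M_\sim)$ bound via a standard subset construction.

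The main obstacle is the winning argument in Part~\ref{thm:main2}. For any opponent-run $\aHTinf$, I would build the tree of virtual opponent-histories that remain strong-$W$-compatible with $\hOS(\sigma, \aHTinf)$; collapsed by memory states it is finitely branching (at most $|\AT|$ children). It is infinite because the antecedent of strong $W$-closedness can, at every depth, be witnessed by some finite virtual extension under the winning strategy $\sO$. K\"onig's lemma then yields an infinite branch $v^\omega$ with $\hOS(\sO, v^\omega) \in W$, and strong $W$-closedness lifts membership in $W$ to $\hOS(\sigma, \aHTinf)$. The delicate point is calibrating the canonical selection rule so that $U_n$ stays non-empty along the actually played path (not merely somewhere in the tree), which likely requires the rule itself to be derived from a well-ordering compatible with the forward-closure of strong-$W$-compatibility under opponent actions.
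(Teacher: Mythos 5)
Your Part~\ref{thm:main1} is essentially the paper's own proof, repackaged per equivalence class: your canonical letter $\aT^*_C$ is exactly the paper's representative function $c(\aHT,\aT) := \min_<\{c \in \AT \mid \hOS(\sO,\aHT\aT) \sim \hOS(\sO,\aHT c)\}$ (note that for infinite $\AT$ you need a well-order on $\AT$, i.e.\ the axiom of choice, to make the selection canonical --- say so), your invariant (ii) is the paper's key formula $\hOS(\sO \circ f_{\sO},\aHT) \sim \hOS(\sO, f_{\sO}(\aHT))$, and your invariant (iii) is precisely what the paper proves by induction from perfect recall to get the $\sim$-strategy property. Your memory state, tracking the pair (true history $\|$ virtual history) in the $2$-tape automaton, is a harmless variant of the paper's diagonal pair (virtual $\|$ virtual); both stay in $F$ and both determine the next canonical letter from $(m,q)$ and the observed opponent action, since $\sO \circ f_{\sO}(\aHT) = \sigma(m)$. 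This half is fine.

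The genuine gap is the winning argument in Part~\ref{thm:main2}. Items \ref{thm:main20}--\ref{thm:main22} make no finiteness assumption whatsoever on $\AT$, yet your extraction of the infinite virtual run rests on K\"onig's lemma. With infinite $\AT$ the tree of representatives is not finitely branching, and ``collapsed by memory states'' does not rescue it: the general statement assumes no memory implementation at all, and the collapse is finite only when the memory is. The paper proves the finite-$\AT$ case exactly as you propose (Proposition~\ref{prop:finite-wawc}), and then must replace K\"onig for the general theorem: take $f_{\sO}(\aH) := \min_{<_{\lex}^{|\aH|}}\{\aHT \in \AT^* \mid \hOS(\sO,\aHT)\sim\aH\}$ for a well-order $<$ on $\AT$ (well-ordering principle), show that for each fixed $n$ the prefixes $(f_{\sO}\circ\hOS(\sO',\aHTinf'_{\leq n+k}))_{\leq n}$ form a $\leq_{\lex}^n$-non-increasing sequence in $k$ (this uses closedness by adding a suffix and lex-minimality), hence stabilize by well-foundedness; the stabilized prefixes cohere into the infinite virtual run, and the suffixes $\gamma(n)$ demanded by strong $W$-closedness fall out of the stabilization indices. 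This interplay of the lexicographic well-orders is the one idea your proposal is missing.

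Relatedly, you misdiagnose where the delicacy lies. Non-emptiness of $U_n$ along the actually played path needs no calibration of the selection rule: closedness by adding a suffix gives $\hOS(\sO',\aHT\aT) \sim \hOS(\sO,(f_{\sO}\circ\hOS(\sO',\aHT))\,\aT)$, so the previous representative extended by the true opponent action is always a witness, for \emph{any} choice function. The well-order earns its keep not there but in the convergence argument above. For the ``moreover'' memory clause, your powerset encoding of $U_n$ in $\mathcal{P}(M_\sO \times M_\sim)$ with selection by a minimum over the $M_\sO$-component is the paper's construction (importantly, one cannot reuse the lex-min strategy $\sO'$ there, since lex-minima are not recoverable from the subset alone), and in that memory-aware setting the paper itself falls back on K\"onig's lemma with finite $\AT$, so your reasoning is adequate for that clause only.
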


 Lemma~\ref{lem:constraint-algebra} below shows that the five constraint predicates behave rather well algebraically. It will be especially useful when handling Boolean combinations of winning conditions.
 
\begin{lemma}\label{lem:constraint-algebra}

Let $\AO$, $\AT$, and $I$ be non-empty sets. For all $i \in I$ let $W_i \subseteq (\AO \times \AT)^\omega$ and $\sim_i$ be a constraint over $(\AO \times \AT)^*$. Let $\sim$ be another constraint.
\begin{enumerate}
 \item\label{lem:constraint-algebra1} If $\sim_i$ is time-aware (resp. closed by adding a suffix, resp. perfectly recalling) for all $i \in I$, so is $\cap_{i \in I}\sim_i$.
 
 \item\label{lem:constraint-algebra2} If $\sim_i$ is weakly (strongly) $W_i$-closed for all $i \in I$, then $\cap_{i \in I}\sim_i$ is weakly (strongly) $\cap_{i \in I}W_i$-closed.
 

\item\label{lem:constraint-algebra4} If $\sim$ is weakly (strongly) $W_i$-closed for all $i \in I$, then $\sim$ is weakly (strongly) $\cup_{i \in I}W_i$-closed.
 

\end{enumerate}
\end{lemma}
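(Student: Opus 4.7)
The plan is to proceed by pure definition chasing in each item, relying on two elementary distributions: $\rho\, (\cap_{i \in I} \sim_i)\, \rho'$ is equivalent to $\forall i \in I,\ \rho \sim_i \rho'$, and $\aHinf \in \cap_i W_i$ (resp.\ $\cup_i W_i$) is equivalent to $\forall i,\ \aHinf \in W_i$ (resp.\ $\exists i,\ \aHinf \in W_i$). I will verify each item by unfolding the relevant predicate and redistributing the quantifier over $i$ across the implication.

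For item~\ref{lem:constraint-algebra1}, I will handle the three predicates in turn. For time-awareness, whose conclusion $|\rho|=|\rho'|$ does not mention $\sim$, I assume $\rho\, (\cap_i \sim_i)\, \rho'$, pick any $i_0 \in I$ (which exists since $I$ is non-empty), and invoke $\rho \sim_{i_0} \rho'$. For closedness by adding a suffix and for perfect recall, whose conclusions do mention the relation, obtaining them for $\cap_i \sim_i$ amounts to showing $\rho\rho'' \sim_i \rho'\rho''$ and $\rho_{\leq n} \sim_i \rho'_{\leq n}$ for every $i \in I$, and this follows pointwise from $\rho \sim_i \rho'$ together with the assumed property of each $\sim_i$.

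For item~\ref{lem:constraint-algebra2}, the prefix/suffix-style hypothesis of weak or strong closedness of $\cap_i \sim_i$ is stronger than the same hypothesis stated for any single $\sim_i$, so applying the closedness of each $\sim_i$ individually will yield the winning-set equivalence (weak case) or implication (strong case) for each $W_i$ separately; these indexed statements then combine to the desired one for $\cap_i W_i$, using that $\forall i$ distributes over $\Leftrightarrow$ and $\Rightarrow$. For item~\ref{lem:constraint-algebra4}, the hypothesis already concerns the single relation $\sim$, so weak or strong $W_i$-closedness of $\sim$ applies directly and yields the winning-set equivalence or implication for each $W_i$; a biconditional is preserved under existential quantification over $i$ in the weak case, and in the strong case, if $\aHinf \in W_{i_0}$ for some $i_0 \in I$ then $\aHinf' \in W_{i_0} \subseteq \cup_i W_i$.

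I do not expect any real obstacle: the proof is mechanical throughout. The only point that warrants minor care is the $\forall n\, \exists\gamma$ quantifier alternation in the strong-closedness definition, where I must note that the witness $\gamma$ provided by the intersection hypothesis is already common to all $i \in I$ and can thus be reused per-$i$ without any choice-theoretic concerns.
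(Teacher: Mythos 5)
Your proof is correct and follows essentially the same definition-chasing route as the paper's own proof: item~\ref{lem:constraint-algebra1} pointwise (with a single $i_0$ sufficing for time-awareness), item~\ref{lem:constraint-algebra2} by weakening the intersection hypothesis to each $\sim_i$ individually, and item~\ref{lem:constraint-algebra4} by picking a witness index with $\aHinf \in W_{i_0}$. You even flag the one point of substance the paper also makes explicit, namely that in the strong-closedness case the witness $\gamma$ supplied by the $\cap_{i \in I}\sim_i$ hypothesis is the same for all $i \in I$ and is simply reused per index.
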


\paragraph*{Comments on the definitions and results}

In the literature, \pO\/ {\bf strategies} sometimes have type $(\AO \times \AT)^* \to \AO$. In this article, they have type $\AT^* \to \AO$ instead. Both options would work here, but the latter is simpler.

In the literature, \pO\/ {\bf winning strategies} are sometimes defined as strategies winning against all \pT\/ strategies. In this article, they win against all opponent runs instead. Both options would work here, but the latter is simpler.

Consider a game $\langle \AO, \AT, W\rangle$ and its {\bf sequentialized version} where \pO\/ plays first at each round. It is well-known and easy to show that a \pO\/ strategy wins the {\bf concurrent version} iff she wins the sequential version. I have two reasons to use concurrent games here, though. First, the notation is nicer for the purpose at hand. Second, concurrency does not rule out (semi-)deterministic determinacy of interesting classes of games as in \cite{AAH19} and \cite{SLR18}, and using a sequentialized version of the main result to handle these concurrent games would require cumbersome back-and-forth game sequentialization that would depend on the winner. That being said, many examples in this article are, morally, sequential/turn-based games.

{\bf Strong $W$-closedness} is indeed stronger than {\bf weak $W$-closedness}, as will be proved. Besides these two properties, which relate $\sim$ and $W$, the other predicates on $\sim$ alone are classical when dealing with information imperfectness (possibly known under different names).

However strong the {\bf strong $W$-closedness} may seem, it is strictly weaker than the conjunction of {\bf  perfect recall} and {\bf weak $W$-closedness}, as will be proved. This justifies the attributes stronger/weaker assumptions in Theorem~\ref{thm:main}. Note that the definition of strong $W$-closedness involves only the implication $\aHinf \in W \Rightarrow \aHinf' \in W$, as opposed to an equivalence.

The update functions of memory-aware implementations have type $M \times \AT \to M$, so, informally, they observe only the memory internal state and the opponent's action. In particular they do not observe for free any additional state of some system.

The notion of {\bf $2$-tape recognizability} of equivalence relations is natural indeed, but so is the following. An equivalence relation $\sim$ over $\Sigma^*$ is said to be {\bf $1$-tape recognizable} using memory $Q$ if there exists an automaton $(\Sigma,Q,q_0,\delta)$ such that $u \sim v$ iff $\delta^+(u) = \delta^+(v)$. In this case there are at most $|Q|$ equivalence classes. If $Q$ is finite, $\sim$ is said to be {\bf $1$-tape regular}. When considering time-aware constraints, $2$-tape recognizability is strictly more general, as will be proved, and it yields more general results. A detailed discussion can be found in \cite{BD18}.

Here, the two notions of {\bf recognizability} require nothing about the cardinality of the state space: what matters is the (least) cardinality that suffices. The intention is primarily to invoke the results with finite automata, but allowing for infinite ones is done at no extra cost.

In the {\bf memory} part of Theorem~\ref{thm:main}.\ref{thm:main1}, the Cartesian product involves only the accepting states $F$, but it only spares us one state: indeed, in a automaton that is $2$-tape recognizing a perfectly recalling $\sim$, the non-final states can be safely merged into a trash state. In Theorem~\ref{thm:main}, however, the full $M_{\sim}$ is used and followed by a powerset construction. So by Cantor's theorem, the memory bound in Theorem~\ref{thm:main}.\ref{thm:main2} increases strictly, despite finiteness assumption for $\AT$.

Generally speaking, {\bf $1$-Lipschitz continuous} functions from infinite words to infinite words correspond to (real-time) reactive systems; {\bf continuous} functions correspond to reactive systems with unbounded delay; and {\bf computable functions} to reactive systems with unbounded delay that can be implemented via Turing machines. Therefore both $1$-Lipschitz continuity and computabilty are desirable over continuity (and both imply continuity). 

In Theorem~\ref{thm:main}.\ref{thm:main1}, the derived $\sim$-strategy is of the form $\sO \circ f_{\sO}$, i.e. it is essentially the original $\sO$ fed with modified inputs, which are called virtual opponent-histories. Theorem~\ref{thm:main}.\ref{thm:main11} means that it suffices to know $\sO$ for opponent-history inputs up to some length to infer the corresponding virtual history map $f_{\sO}$ for inputs up to the same length. Theorem~\ref{thm:main}.\ref{thm:main12} means that for each fixed $\sO$, the virtual opponent-history is extended incrementally as the opponent-history grows. The assertations \ref{thm:main11} and \ref{thm:main12} do not imply one another a priori, but that they both hold implies Theorem~\ref{thm:main}.\ref{thm:main13} indeed; and Theorem~\ref{thm:main}.\ref{thm:main13} means that one can start synthesizing a $\sim$-strategy and playing accordingly on inputs up to length $n$ already when knowing $\sO$ on inputs up to length $n$. This process is even computable in the setting of Theorem~\ref{thm:main}.\ref{thm:main14}.

In Theorem~\ref{thm:main}.\ref{thm:main2}, the derived $\sim$-strategy has a very similar form, but the $f_{\sO}$ no longer preserves the prefix relation since the perfect recall assumption is dropped. As a consequence, the virtual opponent-history can no longer be extended incrementally: backtracking is necessary. Thus there is no results that correspond to Theorems~\ref{thm:main}.\ref{thm:main11} and \ref{thm:main}.\ref{thm:main12}, yet one retains both $1$-Lipschitz continuity of the self-map and its computability under suitable assumptions: Theorems~\ref{thm:main}.\ref{thm:main21} and \ref{thm:main}.\ref{thm:main22} correspond to Theorems~\ref{thm:main}.\ref{thm:main13} and \ref{thm:main}.\ref{thm:main14}, respectively.

In Lemma~\ref{lem:constraint-algebra}, {\bf constraints intersection} makes sense since the intersection of equivalence relations is again an equivalence relation. This is false for unions; furthermore, taking the equivalence relation generated by a union of equivalence relations would not preserve weak or strong $W$-closedness.

\section{Application to concurrent games with states and colors}\label{sect:def-cor-sc}

It is sometimes convenient, for intuition and succinctness, to define a winning condition not as a subset of the runs, but in several steps via states and colors. Given the current state, a pair of actions chosen by the players produces a color and determines the next state, and so on. The winning condition is then defined in terms of infinite sequences of colors.

\begin{definition}
An initialized arena is a tuple $\langle \AO, \AT, Q, q_0, \delta, \Col, \tr\rangle$ such that
\begin{itemize}
\item $\AO$ and $\AT$ are non-empty sets (of actions of \pO\/ and \pT\/),
\item $Q$ is a non-empty set (of states),
\item $q_0 \in Q$ (is the initial state),
\item $\delta: Q \times \AO \times \AT \to Q$ (is the state update function).
\item $\Col$ is a non-empty set (of colors),
\item $\tr : Q \times  \AO \times \AT \to \Col$ (is a coloring function).
\end{itemize}
Providing an arena with some $W \subseteq \Col^\omega$ (a winning condition for \pO\/) defines a game.
\end{definition}
In such a game, a triple in $Q \times \AO \times \AT$ is informally called an edge because it leads to a(nother) state via the udpate function $\delta$. Between two states there are $|\AO \times \AT|$ edges. Note that the colors are on the edges rather than on the states. This is generally more succinct and it is strictly more expressive in the following sense: in an arena with finite $Q$, infinite $\AO$ or $\AT$, and colors on the edges, infinite runs may involve infinitely many colors. However, it would never be the case if colors were on the states.

The coloring function $\tr$ is naturally extended to finite and infinite sequences over $\AO \times \AT$. By induction, $\tr^{++}(\e) := \e$, and $\tr^{++}(\aO,\aT) := \tr(q_0,\aO,\aT)$, and $\tr^{++}(\aH(\aO,\aT)) := \tr^{++}(\aH)\tr(\delta^+(\aH),\aO,\aT)$. Then $\tr^{\infty}(\aHinf)$ is the unique sequence in $\Col^\omega$ such that $\tr^{++}(\aHinf_{\leq n})$ is a prefix of $\tr^{\infty}(\aHinf)$ for all $n \in \N$. Note that $|\tr^{++}(\aH)| = |\aH|$ for all $\aH \in (\AO \times \AT)^*$.

The histories, strategies, and winning strategies of the game with states and colors are then defined as these of $\langle \AO, \AT, (\tr^{\infty})^{-1}[W] \rangle$, which is a game as defined in Section~\ref{sect:main-def-res}. Conversely, a game $\langle \AO, \AT, W \rangle$ may be seen as a game with states and colors $ \langle  \AO, \AT, Q, q_0, \delta, \Col, \tr,W\rangle$ where $\Col = \AO \times \AT$, and $Q = \{q_0\}$, and $\tr(q_0,\aO,\aT) = (\aO,\aT)$ for all $(\aO,\aT) \in \AO \times \AT$.

Recall that the update functions of memory-aware implementations have type $M \times \AT \to M$, so they do not observe the states in $Q$ for free. This difference with what is customary in some communities is harmless in terms of finiteness of the strategy memory, though.

\paragraph*{A universal result for concurrent games}

Corollary~\ref{cor:main} below considers games with states and colors. Corollary~\ref{cor:main}.\ref{cor:main1} (resp. \ref{cor:main}.\ref{cor:main2}) is a corollary of Theorem~\ref{thm:main}.\ref{thm:main1} (resp. Theorem~\ref{thm:main}.\ref{thm:main2}). It says that if there is a winning strategy, there is also one that behaves the same after histories of pairs of actions that yield  the same sequence of states (resp. the same current state) and the same sequence of colors. Note that no assumption is made on the winning condition in Corollary~\ref{cor:main}: \emph{it need not be even Borel}.

\begin{corollary}\label{cor:main}
Consider a game with states and colors $G = \langle \AO, \AT, Q, q_0, \delta, \Col, \tr, W\rangle$ such that \pO\/ has a winning strategy $\sO$. The two results below are independent.
\begin{enumerate}
 \item\label{cor:main1} Then she has a winning strategy $\sO'$ (obtained in a Lipschitz manner from $\sO$) that satisfies
\[
\forall \aHT,\aHT' \in \AT^*,\quad \delta^{++} \circ \hOS(\sO',\aHT) = \delta^{++} \circ \hOS(\sO',\aHT) \,\wedge\, \tr^{++} \circ \hOS(\sO',\aHT) = \tr^{++} \circ \hOS(\sO',\aHT')  \quad\Rightarrow\quad \sO'(\aHT) = \sO'(\aHT')
\]
 Furthermore, if $\sO$ can be implemented via memory space $M$, so can $\sO'$; and if $\AT$ is finite and $\sim$ is computable, $\sO'$ is obtained in a computable manner from $\sO$.
 
 \item\label{cor:main2} Then she has a winning strategy $\sO'$ (obtained in a Lipschitz manner from $\sO$) that satisfies
\[
\forall \aHT,\aHT' \in \AT^*,\quad \delta^{+} \circ \hOS(\sO',\aHT) = \delta^{+} \circ \hOS(\sO',\aHT) \,\wedge\, \tr^{++} \circ \hOS(\sO',\aHT) = \tr^{++} \circ \hOS(\sO',\aHT')  \quad\Rightarrow\quad \sO'(\aHT) = \sO'(\aHT')
\]
Furthermore, if $\sO$ can be implemented via memory $M$, then $\sO'$ can be implemented via memory size $2^{|M|(|Q|^2+1)}$; and if $\AT$ is finite and $\sim$ is computable, $\sO'$ is obtained in a computable manner from $\sO$.
\end{enumerate} 
\end{corollary}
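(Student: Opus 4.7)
The plan is to derive both parts from Theorem~\ref{thm:main} by recasting the game with states and colors as the standard game $\langle \AO, \AT, W'\rangle$ with $W' := (\tr^{\infty})^{-1}[W]$, and by choosing two suitable equivalence relations on $(\AO\times\AT)^*$ that precisely reflect the equalities appearing on the right-hand sides of the conclusions.

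For part~\ref{cor:main1}, I would set $\aH \sim_1 \aH'$ iff $\delta^{++}(\aH) = \delta^{++}(\aH')$ and $\tr^{++}(\aH) = \tr^{++}(\aH')$. Since $|\delta^{++}(\aH)|=|\aH|$, equivalent histories automatically have equal length (time-awareness). Closedness by adding a suffix reduces to the observation that $\sim_1$-equivalence forces $\delta^+(\aH) = \delta^+(\aH')$, so appending any $\aH''$ extends both state- and color-sequences identically. Perfect recall is immediate since $\delta^{++}$ and $\tr^{++}$ are computed prefix by prefix. Weak $W'$-closedness follows because componentwise $\sim_1$-equivalence of the $\aHinf_{\leq n}$ forces $\tr^{\infty}(\aHinf) = \tr^{\infty}(\aHinf')$, hence equivalence of $W'$-membership. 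Applying Theorem~\ref{thm:main}.\ref{thm:main1} then yields $\sO' := \sO \circ f_{\sO}$, whose defining implication is exactly the one stated in Corollary~\ref{cor:main}.\ref{cor:main1}. The Lipschitz, computability, and memory claims follow from Theorem~\ref{thm:main}.\ref{thm:main13}, \ref{thm:main}.\ref{thm:main14}, and \ref{thm:main}.\ref{thm:main15}, once one notes that $\sim_1$ is $2$-tape recognized by an automaton tracking the pair of current states in $Q\times Q$ and rejecting as soon as a step's two colors disagree (so that the non-accepting part collapses to a single trash state).

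For part~\ref{cor:main2}, I would take $\aH \sim_2 \aH'$ iff $|\aH|=|\aH'|$, $\delta^+(\aH) = \delta^+(\aH')$, and $\tr^{++}(\aH) = \tr^{++}(\aH')$. Time-awareness is built in; closedness by adding a suffix again follows from $\delta^+$-agreement together with the inductive definition of $\tr^{++}$; perfect recall fails in general (intermediate states may differ), which is precisely why this constraint belongs to the weaker-hypothesis part of the theorem. The central verification is strong $W'$-closedness: assuming that for each $n$ there is $\gamma$ with $\aHinf_{\leq n}\gamma \sim_2 \aHinf'_{\leq n+|\gamma|}$, the equality $\tr^{++}(\aHinf_{\leq n}\gamma) = \tr^{++}(\aHinf'_{\leq n+|\gamma|})$ forces, on the first $n$ positions, $\tr^{++}(\aHinf_{\leq n}) = \tr^{++}(\aHinf'_{\leq n})$; letting $n\to\infty$ gives $\tr^{\infty}(\aHinf) = \tr^{\infty}(\aHinf')$ and hence even the equivalence $\aHinf \in W' \Leftrightarrow \aHinf' \in W'$, which is more than needed. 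Theorem~\ref{thm:main}.\ref{thm:main2} then supplies $\sO'$, with Lipschitz and computability properties coming from \ref{thm:main}.\ref{thm:main21}--\ref{thm:main22}. The memory bound $2^{|M|(|Q|^2+1)}$ comes from the final clause of Theorem~\ref{thm:main}.\ref{thm:main2}: $\sim_2$ is $2$-tape recognized by an automaton whose states are $Q\times Q$ together with one trash state entered whenever the two current states disagree or a step produces different colors, giving $|M_{\sim_2}| = |Q|^2+1$, and the powerset construction yields $|\mathcal{P}(M_\sO\times M_{\sim_2})| = 2^{|M|(|Q|^2+1)}$. I expect the only mildly subtle step to be the strong $W'$-closedness verification above; everything else is routine bookkeeping from the definitions of $\delta^{++}$ and $\tr^{++}$.
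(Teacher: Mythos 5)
Your overall route is exactly the paper's own: both parts are proved there with precisely your two constraints ($\tr^{++}$- and $\delta^{++}$-equality for part~\ref{cor:main1}; $\tr^{++}$- and $\delta^{+}$-equality for part~\ref{cor:main2}), the same verifications (time awareness via length of the color sequence, suffix-closedness, perfect recall only in the first case, weak $W'$-closedness from equality of all color prefixes, and strong $W'$-closedness via your restriction-to-the-first-$n$-positions argument, which is the paper's), and the same invocations of Theorem~\ref{thm:main}.\ref{thm:main1}, resp.\ Theorem~\ref{thm:main}.\ref{thm:main2}, applied to $G' = \langle \AO, \AT, (\tr^{\infty})^{-1}[W]\rangle$. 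Up to the memory claims, this is correct and essentially identical to the paper.

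The genuine flaw is that your two $2$-tape automata are swapped. For $\sim_1$ you trash only on a color mismatch; with accepting states the diagonal, such an automaton accepts $u \| v$ iff the color sequences and the \emph{final} states agree, i.e., it recognizes the coarser relation of part~\ref{cor:main2} (intermediate states may diverge and reconverge), which is not perfectly recalling, so it cannot be fed into Theorem~\ref{thm:main}.\ref{thm:main15} for this constraint. The correct automaton for $\sim_1$ must trash as soon as the next states \emph{or} the colors differ; the reachable accepting part then lies on the diagonal, so the device is effectively $Q \cup \{\mathrm{trash}\}$ with $F \cong Q$ --- note that Theorem~\ref{thm:main}.\ref{thm:main15} then yields memory $M \times F \cong M \times Q$, not the ``same memory $M$'' asserted in the statement; the paper reaches $|F| = 1$ by claiming a two-state automaton staying in $\{q_0\} = F$ ``until states or colors differ'', a transition rule that in fact needs the current common game state, so your version is the honest one but falls short of the stated bound. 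Conversely, for $\sim_2$ you enter the trash ``whenever the two current states disagree or a step produces different colors'': trashing on intermediate state disagreement wrongly rejects $\sim_2$-equivalent pairs whose state sequences diverge and later reconverge, so as described your automaton recognizes (essentially) $\sim_1$, and the memory clause of Theorem~\ref{thm:main}.\ref{thm:main2} applied to it would only give uniformity with respect to that finer relation --- the stated conclusion (same current state and same color sequence imply same action) would not follow. The fix is the paper's automaton: states $Q^2 \cup \{q_{-1}\}$, trash \emph{only} on color mismatch, accepting set exactly the diagonal $\{(q,q) \mid q \in Q\}$; your state count $|Q|^2+1$ and the bound $2^{|M|(|Q|^2+1)}$ are then as in the paper.
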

On the one hand, Corollary~\ref{cor:main} exemplifies the benefit of dropping  the perfect recall assumption to obtain winning strategies that are significantly more uniform. On the other hand, it exemplifies the memory cost of doing so, which corresponds to the proof-theoretic complexification from Theorem~\ref{thm:main}.\ref{thm:main1} to Theorem~\ref{thm:main}.\ref{thm:main2}, as will be discussed in later sections.

To prove Corollary~\ref{cor:main}.\ref{cor:main2} directly, a natural idea is to ``copy-paste'', i.e., rewrite the strategy at equivalent histories. If done finitely many times, it is easy to prove that the derived strategy is still winning, but things become tricky if done infinitely many times, as it should.

Note that in Corollary~\ref{cor:main}, assumptions and conclusions apply to both players: indeed, since no assumption is made on $W$, its complement satisfies all assumptions, too.

A consequence of Corollary~\ref{cor:main} is that one could define \emph{state-color strategies} as functions in $(Q^* \times C^*) \to \AO$ or even $(Q \times C^* )\to \AO$, while preserving existence of winning strategies. How much one would benefit from doing so depends on the context.

In the remainder of this section, only the weaker sufficient condition, i.e., Theorem~\ref{thm:main}.\ref{thm:main2} is invoked instead of Theorem~\ref{thm:main}.\ref{thm:main1}.

\paragraph*{Between B{\"u}chi and Muller}

In Corollary~\ref{cor:main} the exact sequence of colors mattered, but in some cases from formal methods, the winning condition is invariant under shuffling of the color sequence. Corollary~\ref{cor:shuffle} below provides an example where Theorem~\ref{thm:main} applies (but only to \pO\/). Note that the games defined in Corollary~\ref{cor:shuffle} are a subclass of the concurrent Muller games, where finite-memory strategies suffice \cite{GH82}, and a superclass of the concurrent B{\"u}chi games, where positional (aka memoryless) strategies suffice. In this intermediate class from Corollary~\ref{cor:shuffle}, however, positional strategies are not sufficient: indeed, consider a three-state one-player game where $q_1$ and $q_2$ must be visited infinitely often and where $q_0$ lies between them. Thus, Corollary~\ref{cor:shuffle} is not a corollary of well-known results.

\begin{corollary}\label{cor:shuffle}
Consider a game with states and colors $G = \langle \AO, \AT, Q, q_0, \delta, \Col, \tr, W\rangle$ with finite $Q$ and $C$, and where $W$ is defined as follows: let $(C_i)_{i \in I}$ be subsets of $C$, and let $\boldsymbol{\gamma} \in W$ if there exists $i \in I$ such that all colors in $C_i$ occur infinitely often in $\boldsymbol{\gamma}$.

If \pO\/ has a winning strategy, she has a finite-memory one that behaves the same if the current state and the multiset of seen colors are the same.
\end{corollary}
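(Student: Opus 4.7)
The plan is to apply Theorem~\ref{thm:main}.\ref{thm:main2} to the equivalence $\sim$ on $(\AO \times \AT)^*$ defined by $\aH \sim \aH'$ iff $|\aH|=|\aH'|$, $\delta^+(\aH)=\delta^+(\aH')$, and $\tr^{++}(\aH)$ and $\tr^{++}(\aH')$ are permutations of one another. By construction a strategy is a $\sim$-strategy precisely when it behaves the same on histories that reach the same current state with the same multiset of colors, so any $\sim$-winning strategy already gives the uniformity part of the conclusion.

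I would first verify the three hypotheses. Time-awareness is immediate, since permutations have the same length. Closedness by adding a suffix holds because $\sim$-equivalent histories end at the same state, so a common suffix produces identical sequences of colors and reaches the same final state, preserving both the multiset and the current state. For strong $W$-closedness, suppose that for every $n$ there is $\gamma$ with $\aHinf_{\leq n}\gamma \sim \aHinf'_{\leq n+|\gamma|}$; then the multiset of colors of $\aHinf_{\leq n}$ is a sub-multiset of that of $\aHinf'_{\leq n+|\gamma|}$, so every color occurring at least $k$ times in $\aHinf_{\leq n}$ occurs at least $k$ times in some prefix of $\aHinf'$. Consequently, the set of colors occurring infinitely often in $\aHinf$ is included in that of $\aHinf'$; and since $W$-membership only requires the inf-often set to cover some $C_i$, it is monotone in this set, so $\aHinf \in W$ forces $\aHinf' \in W$. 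Theorem~\ref{thm:main}.\ref{thm:main2} therefore provides a winning $\sim$-strategy for Player 1.

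The remaining task, and the main obstacle, is obtaining finite memory. The direct bound $\mathcal{P}(M_\sO \times M_\sim)$ from Theorem~\ref{thm:main}.\ref{thm:main2} is of no help here since the canonical $2$-tape automaton recognising $\sim$ has unbounded state space---it must record the coordinate-wise multiset difference. To sidestep this I would exploit the Muller structure: because $Q$ and $C$ are finite, $W$ is $\omega$-regular and Player 1 already has \emph{some} finite-memory winning strategy in $G$ by Gurevich--Harrington~\cite{GH82}. The plan for combining this with $\sim$-uniformity is to collapse the infinite quotient arena $Q \times \N^C$ along a bisimulation that merges any two multisets whose differences lie beyond a bounded range---this is legitimate precisely because $W$ is insensitive to any finite portion of the multiset---and then to transport a positional winning strategy from the resulting finite quotient back into $G$. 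The delicate step, which I expect to be the main technical hurdle, is to produce this bisimulation so that it simultaneously respects $\sim$ and the Muller acceptance condition; once it does, the transported strategy is simultaneously finite-memory, $\sim$-uniform, and winning.
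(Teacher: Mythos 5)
Your first two paragraphs coincide with the paper's own proof: it uses exactly your constraint (equal current state and equal multiset of seen colors), checks time awareness, closedness by adding a suffix, and strong $W$-closedness by the same monotonicity argument (every color occurring in $\tr^{++}(\aHinf_{\leq n})$ occurs at least as often in some prefix of the colors of $\aHinf'$, so colors occurring infinitely often transfer, and membership in $W$ is monotone in that set), and then invokes Theorem~\ref{thm:main}.\ref{thm:main2} together with \cite{GH82}. Where you part ways is the finite-memory half: the paper simply feeds the finite-memory winning strategy supplied by \cite{GH82} into the memory clause of Theorem~\ref{thm:main}.\ref{thm:main2} and concludes. Your worry that this clause is not obviously applicable is not frivolous---the multiset relation is indeed not $2$-tape regular, so the bound $\mathcal{P}(M_\sO \times M_\sim)$ is only finite if one exhibits a suitable finite $M_\sim$, a point on which the paper is laconic---but what you propose in its place contains a genuine gap, and in fact the key step fails.

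The failure is in your quotient/bisimulation plan. The multiset of seen colors is monotone, so in any finite quotient of $Q \times \N^{\Col}$ obtained by identifying multisets beyond a bounded range, the multiset component freezes along every play in which the relevant colors occur unboundedly often; a strategy transported from such a quotient is therefore eventually positional. The paper's own three-state example (both $q_1$ and $q_2$ must be visited infinitely often, with $q_0$ between them) then refutes the plan: on any candidate winning play all counts eventually saturate, after which the transported strategy plays one fixed action at $q_0$ forever and visits only one of $q_1,q_2$ from then on, hence loses---even though a winning finite-memory $\sim$-strategy (alternation) plainly exists in that game. Tracking saturated count \emph{differences} instead does not repair this, since a saturated difference cannot be updated deterministically by a finite automaton (an increment applied at the saturation boundary is ambiguous). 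So the bisimulation you defer to as ``the main technical hurdle'' does not exist in the form you describe. The conceptual root is that a $\sim$-strategy need only prescribe equal actions at equivalent histories \emph{consistent with itself}, i.e.\ when $\hOS(\sO,\aHT) \sim \hOS(\sO,\aHT')$, not at all pairs of equivalent histories of the arena; your construction enforces the latter, strictly stronger property, which the example shows is unattainable. The paper's memory-aware construction in Theorem~\ref{thm:main}.\ref{thm:main2} exploits precisely this slack, via a subset construction tracking pairs of runs of the new strategy against the original one rather than multisets. In sum: the uniformization part of your proposal is correct and matches the paper, but the finite-memory part remains unproven, and the route you sketch would fail.
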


\paragraph*{Energy games}

The energy winning condition relates to real-valued colors. It requires that at every finite prefix of a run, the sum of the colors seen so far is non-negative. More formally, $\forall \aHinf \in (\AO \times \AT)^\omega,\, \aHinf \in W\, \Leftrightarrow\, \forall n \in \N,\, 0 \leq \sum \tr^{++}(\aHinf_{\leq n})$.

Corollary~\ref{cor:energy} is weaker than the well-known positional determinacy of turn-based energy games, but its proof will be reused in that of Corollary~\ref{cor:muller-energy}.

\begin{corollary}\label{cor:energy}
If \pO\/ has a winning strategy in an energy game $G = \langle \AO, \AT, Q, q_0, \delta, \R, \tr, W\rangle$, she has one that behaves the same if the current time, state, and energy level are the same.
\end{corollary}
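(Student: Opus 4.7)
The plan is to reduce Corollary~\ref{cor:energy} to Corollary~\ref{cor:main}.\ref{cor:main2} applied to an unfolded version of $G$ whose state tracks the current energy level, thereby converting the quantitative energy condition into a pure safety condition. Define a new game $G' := \langle \AO, \AT, Q', q'_0, \delta', \Col', \tr', W'\rangle$ with $Q' := (Q \times \R) \cup \{\bot\}$, initial state $q'_0 := (q_0, 0)$, and transitions that simultaneously update the original state and the running energy: set $\delta'((q, e), \aO, \aT) := (\delta(q, \aO, \aT),\, e + \tr(q, \aO, \aT))$ when $e + \tr(q, \aO, \aT) \geq 0$, and $\delta'((q, e), \aO, \aT) := \bot$ otherwise, with $\bot$ absorbing. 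Let $\Col' := \{s, l\}$, let $\tr'$ return $s$ exactly on transitions whose target is not $\bot$, and let $W' := \{s^\omega\}$.

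Since strategies in $G$ and $G'$ are both functions $\AT^* \to \AO$, a short induction on history length shows that a strategy is winning in $G$ iff it is winning in $G'$: keeping every prefix sum of colors non-negative in $G$ is the same as never producing the color $l$ in $G'$. Feeding the given winning \pO\/ strategy $\sO$ to Corollary~\ref{cor:main}.\ref{cor:main2} applied to $G'$ therefore yields a \pO\/ strategy $\sO'$ that is winning in $G'$, and hence in $G$, and that satisfies, for all $\aHT, \aHT' \in \AT^*$, that $(\delta')^+ \circ \hOS(\sO', \aHT) = (\delta')^+ \circ \hOS(\sO', \aHT')$ together with $(\tr')^{++} \circ \hOS(\sO', \aHT) = (\tr')^{++} \circ \hOS(\sO', \aHT')$ implies $\sO'(\aHT) = \sO'(\aHT')$.

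It remains to unpack these two equalities on the histories actually produced by $\sO'$. Because $\sO'$ is winning in $G'$, every history $\hOS(\sO', \aHT)$ avoids $\bot$, so $(\delta')^+ \circ \hOS(\sO', \aHT)$ is a pair $(q, e)$ whose components are the current state in $G$ and the accumulated energy, while $(\tr')^{++} \circ \hOS(\sO', \aHT) = s^{|\aHT|}$, which is determined by $|\aHT|$ alone. Hence the coincidence of the $(\delta')^+$-images and of the $(\tr')^{++}$-images on two opponent-histories amounts precisely to sameness of current state, current energy level, and current time, which is the desired uniformity. The main conceptual step is this unfolding; everything else is bookkeeping, and no finiteness is lost since the existence part of Corollary~\ref{cor:main}.\ref{cor:main2} places no constraint on $|Q|$.
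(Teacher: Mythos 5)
Your proof is correct, but it takes a genuinely different route from the paper's. You unfold the energy level into the state space, taking $Q' = (Q \times \R) \cup \{\bot\}$ with an absorbing sink, which turns the quantitative energy condition into the pure safety condition $W' = \{s^\omega\}$, and you then invoke the existence part of Corollary~\ref{cor:main}.\ref{cor:main2} on the unfolded game: sameness of $(\delta')^{+}$ packages current state and energy, while sameness of $(\tr')^{++}$, which equals $s^{|\aHT|}$ on histories compatible with a winning strategy (these never reach $\bot$), packages current time. The paper instead works directly on $G$: it hand-crafts the constraint $\aH \sim \aH'$ iff $|\aH| = |\aH'|$, $\delta^+(\aH) = \delta^+(\aH')$, $\sum \tr^{++}(\aH) = \sum \tr^{++}(\aH')$, and the signs of all prefix sums agree; it verifies time awareness, closedness by adding a suffix, and strong $W$-closedness (the sign-profile component is exactly what makes the last check go through), applies Theorem~\ref{thm:main}.\ref{thm:main2}, and finally discharges the sign-profile component by noting that a winning strategy keeps the energy non-negative throughout --- the precise counterpart of your observation that winning histories avoid $\bot$. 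What your reduction buys: the delicate strong-$W$-closedness verification is inherited from the proof of Corollary~\ref{cor:main}.\ref{cor:main2} rather than redone, and the argument is a clean, standard product construction; you are also right that no finiteness is lost, since the existence part of Corollary~\ref{cor:main}.\ref{cor:main2} places no cardinality constraint on the state space. What the paper's approach buys: it stays in the original arena, and, more importantly, its explicit constraint is reused verbatim in Corollary~\ref{cor:muller-energy}, where Lemma~\ref{lem:constraint-algebra}.\ref{lem:constraint-algebra2} intersects it with the constraint from Corollary~\ref{cor:shuffle}; your unfolding would not plug into that intersection argument as directly, because the constraints being intersected must live over the same game. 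One cosmetic point: your claim that every history $\hOS(\sO',\aHT)$ avoids $\bot$ implicitly uses that every finite compatible history extends to a compatible run (true since $\AT \neq \emptyset$); this is worth one sentence but is not a gap.
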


\paragraph*{Conjunction of winning conditions}

Corollary~\ref{cor:muller-energy} below strengthens Corollary~\ref{cor:shuffle} (finite-memory aside) by considering the conjunction of the original Muller condition and the energy condition, which works out by Lemma~\ref {lem:constraint-algebra}.\ref{lem:constraint-algebra2}.

\begin{corollary}\label{cor:muller-energy}
 Consider a game with states and colors $G = \langle \AO, \AT, Q, q_0, \delta, \Col \times \R, \tr, W\rangle$ with finite $Q$ and $C$ and where $W \subseteq (\Col \times \R)^\omega$ is defined as follows: let $(C_i)_{i \in I}$ be subsets of $C$, and let $\boldsymbol{\gamma} \in W$ if there exists $i \in I$ such that all colors in $C_i$ occur infinitely often in $\pi_1(\boldsymbol{\gamma})$ (the sequence of the first components) and if the energy level (on the second component) remains non-negative throughout the run, i.e. $\sum \pi_2 \circ \tr^{++}(\boldsymbol{\gamma}_{\leq n} )$ for all $n \in \N$.

If \pO\/ has a winning strategy, she has one that behaves the same if the current state, the multiset of seen colors, and the energy level are the same.
\end{corollary}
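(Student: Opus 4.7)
The plan is to obtain the corollary as a direct application of Theorem~\ref{thm:main}.\ref{thm:main2} to the intersection of the two constraints used to prove Corollaries~\ref{cor:shuffle} and~\ref{cor:energy}, exploiting the algebraic closure provided by Lemma~\ref{lem:constraint-algebra}.

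First, I would set up the two component constraints on histories $\rho,\rho' \in (\AO \times \AT)^*$. Define $\sim_M$ by $\rho \sim_M \rho'$ iff $|\rho|=|\rho'|$, $\delta^+(\rho)=\delta^+(\rho')$, and the multiset of first-component colors along $\rho$ coincides with that along $\rho'$. Define $\sim_E$ by $\rho \sim_E \rho'$ iff $|\rho|=|\rho'|$, $\delta^+(\rho)=\delta^+(\rho')$, and the second-component sums $\sum \pi_2\circ \tr^{++}(\rho) = \sum \pi_2\circ \tr^{++}(\rho')$ agree. Both relations are time-aware and closed by adding a suffix (the state, multiset, and sum each evolve in a way determined purely by the current state and the appended edge, which is shared). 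Writing $W_M$ and $W_E$ for the (sub-)Muller and energy components of $W$, so that $W = W_M \cap W_E$, I rely on the proofs of Corollary~\ref{cor:shuffle} and Corollary~\ref{cor:energy} to obtain that $\sim_M$ is strongly $W_M$-closed and $\sim_E$ is strongly $W_E$-closed.

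Second, I would form $\sim\, :=\, \sim_M \cap \sim_E$. By Lemma~\ref{lem:constraint-algebra}.\ref{lem:constraint-algebra1}, $\sim$ is again time-aware and closed by adding a suffix. By Lemma~\ref{lem:constraint-algebra}.\ref{lem:constraint-algebra2}, $\sim$ is strongly $(W_M \cap W_E)$-closed, i.e.\ strongly $W$-closed. The hypotheses of Theorem~\ref{thm:main}.\ref{thm:main2} are thus satisfied, and a winning strategy of \pO{} can be turned into a winning $\sim$-strategy $\sO'$. By construction of $\sim$, two histories are $\sim$-equivalent whenever they reach the same current state, exhibit the same multiset of colors on the first component, and produce the same energy level on the second component; hence $\sO'$ prescribes the same action on any two such histories (that are compatible with $\sO'$), which is exactly the statement of the corollary.

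The only non-routine step is the strong $W_M$- and $W_E$-closedness of the two component constraints, which is exactly the work carried out in the proofs of Corollaries~\ref{cor:shuffle} and~\ref{cor:energy}; once those are in hand, Lemma~\ref{lem:constraint-algebra}.\ref{lem:constraint-algebra2} does the gluing for free, and Theorem~\ref{thm:main}.\ref{thm:main2} does the uniformization. Consistently with the statement, I would not claim finite-memory implementability: the energy level is an unbounded integer/real quantity, so the automaton $2$-tape recognizing $\sim_E$ (and hence $\sim$) is a priori infinite-state, and the memory bound from Theorem~\ref{thm:main}.\ref{thm:main2} need not be finite — which is why the ``finite-memory aside'' caveat in the prose before the corollary is necessary.
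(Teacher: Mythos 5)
Your overall architecture is exactly the paper's: take the constraints underlying Corollaries~\ref{cor:shuffle} and \ref{cor:energy}, intersect them, invoke Lemma~\ref{lem:constraint-algebra}.\ref{lem:constraint-algebra1}--\ref{lem:constraint-algebra2}, and apply Theorem~\ref{thm:main}.\ref{thm:main2}. However, the relation $\sim_E$ you wrote down is \emph{not} the constraint from the proof of Corollary~\ref{cor:energy}, and with your definition the key step fails. You require only equal length, equal current state, and equal total second-component sum; the paper's relation additionally requires that the \emph{signs of the energy level agree at every prefix}: for all $n \leq |\aH|$, $0 \leq \sum \pi_2 \circ \tr^{++}(\aH_{\leq n})$ iff $0 \leq \sum \pi_2 \circ \tr^{++}(\aH'_{\leq n})$. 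That clause is precisely what makes strong $W_E$-closedness work: from $\aHinf_{\leq n}\gamma \sim \aHinf'_{\leq n+|\gamma|}$ one reads off, at prefix index $n \leq n+|\gamma|$, that the two runs have energy levels of equal sign at time $n$. Without it, strong $W_E$-closedness is false. Concretely, take a one-state arena with second-component values $\pm 1$ available at every step; let $\aHinf$ carry values $+1,+1,+1,\dots$ (so $\aHinf \in W_E$) and $\aHinf'$ carry $-1,+1,+1,\dots$ (so $\aHinf' \notin W_E$). For every $n$, extending $\aHinf_{\leq n}$ by a single $-1$-step $\gamma$ gives $\sum \pi_2 \circ \tr^{++}(\aHinf_{\leq n}\gamma) = n-1 = \sum \pi_2 \circ \tr^{++}(\aHinf'_{\leq n+1})$, with equal lengths and equal (unique) state, so the premise of strong $W_E$-closedness holds for your $\sim_E$ while its conclusion fails. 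The point is that $\gamma$ is an arbitrary extension, not a prefix of $\aHinf$, so equality of the \emph{final} sums after padding carries no information about intermediate non-negativity. Hence your appeal to ``the work carried out in the proof of Corollary~\ref{cor:energy}'' does not cover the coarser relation you actually defined.

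The repair is to take $\sim_E$ verbatim from Corollary~\ref{cor:energy} (with the prefix-sign clause), intersect with $\sim_M$ as you do, and apply the theorem; this yields a winning $\sim$-strategy for a relation \emph{finer} than the one in the corollary's statement. One further observation, which the paper makes explicitly in Corollary~\ref{cor:energy} and leaves implicit here, then closes the gap: since the resulting strategy is winning, every history consistent with it has non-negative energy level at all prefixes, so on consistent histories the prefix-sign clause is automatically satisfied and the finer relation collapses to ``same current state, same multiset of colors, same energy level''; the $\sim$-strategy is therefore a witness for the corollary as stated. Your closing remark about not claiming finite memory is consistent with the paper: the multiset and the energy level range over infinitely many values, so the $2$-tape automaton for $\sim$ is a priori infinite-state and Theorem~\ref{thm:main}.\ref{thm:main2} gives no finite bound.
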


The diversity of the above corollaries which follow rather easily from Theorem~\ref{thm:main}, especially from Theorem~\ref{thm:main}.\ref{thm:main2}, should suggest its potentiel range of application.



\section{Variant with stronger assumptions and conclusions}\label{sect:sasc}

This section proves Theorem~\ref{thm:main}.\ref{thm:main1} and Corollary~\ref{cor:main}.\ref{cor:main1}, after a quick reminder about orders.


{\bf Total orders:} A strict total/linear order is a transitive, total, irreflexive binary relation. Such relations are represented by $<$. The reflexive closure of $<$ is denoted $\leq$.

{\bf Well-orders:} A well-order is a strict total order $<$ on some set $\Sigma$ such that every non-empty subset of $\Sigma$ has a mimimum with respect to $<$. Equivalently, it is a strict total order with no infinite descending chain $x_0 > x_1 > x_2 > \dots$.

{\bf Lexicographic orders:} Let $<$ be a strict total order over a set $\Sigma$. The lexicographic extensions of $<$ to $\Sigma^*$ are defined by induction as follows: $<_{\lex}^0$ is the empty relation over $\Sigma$, and for all $n \in \N$, $u,v \in \Sigma^n$, and $x,y \in \Sigma$,
\begin{align}
ux <_{\lex}^{n+1} vy \quad\stackrel{def}{\Longleftrightarrow}\quad u <_{\lex}^n v\quad \vee\quad  (u = v \wedge x < y)
\end{align}
It is well-known that each $<_{\lex}^n$ is a strict total order, and even a well-order if $<$ is a well-order. In this article only words of equal length will be compared lexicographically, so one may write $<_{\lex}$ instead of $<_{\lex}^n$ when the context is clear.


\begin{proof}[Proof of Theorem~\ref{thm:main}.\ref{thm:main1}]
Let $\sO$ be a \pO\/ strategy. Let $<$ be a well-order over $\AT$ (by the well-ordering principle for infinite $\AT$). For all $(\aHT,\aT) \in \AT^* \times \AT$ let $c(\aHT,\aT) := \min_<\{c \in \AT \mid \hOS(\sO,\aHT\aT) \sim \hOS(\sO,\aHT c)\}$. (The set $\{c \in \AT \mid \hOS(\sO,\aHT\aT) \sim \hOS(\sO,\aHT c)\}$ is non-empty, as witnessed by $\aT$, so it has a $<$-minimum since $<$ is a well-order.) Informally, if \pT\/ has played $\aHT \aT$, \pO\/ may pretend that he played $\aHT c(\aHT,\aT)$ instead; this is a harmless approximation up to $\sim$, under the condition that \pO\/ plays according to $\sO$. Note that
\begin{align}
\forall (\aHT,\aT,\aT')  \in \AT^* \times \AT \times \AT,\quad \hOS(\sO,\aHT\aT) \sim \hOS(\sO,\aHT \aT') \Rightarrow c(\aHT,\aT) = c(\aHT,\aT') \label{impl:h-c}
\end{align}
By (\ref{impl:h-c}) above, the function $c$ provides a representative, called a virtual action, for all actions yielding $\sim$-equivalent histories after $\aHT$. It is used incrementally below to provide virtual opponent-histories, i.e. representatives for opponent-histories.

Let $f_{\sO}: \AT^* \to \AT^*$ be defined by induction by $f_{\sO}(\e) := \e$ and $f_{\sO}(\aHT \aT) := f_{\sO}(\aHT) c(f_{\sO}(\aHT),\aT)$. Informally, if \pT\/ has played $\aHT$, \pO\/ will pretend that he played $f_{\sO}(\aHT)$ instead, and she will play accordingly, using the same strategy $\sO$. Said otherwise, the new \pO\/ strategy is $\sO \circ f_{\sO}$. 

{\bf $1$-Lipschitz continuity and computability: } The definition of $f_{\sO}$ for inputs of length up to $n$ involves only outputs of $\sO$ for inputs of length up to $n$, via $\hOS$ used in $c$. So the map $\sO \mapsto f_{\sO}$ is $1$-Lipschitz continuous, thus proving (\ref{thm:main11}). Moreover, from its inductive definition, $f_{\sO}$ clearly preserves the length and the prefix relation, thus proving (\ref{thm:main12}). As a consequence of the two facts above, the map $\sO \mapsto \sO \circ f_{\sO}$ is also $1$-Lipschitz continuous, i.e. if one knows $\sO$ for inputs of length up to $n$, one also knows $\sO \circ f_{\sO}$ for inputs of length up to $n$, thus proving (\ref{thm:main13}). If $\AT$ is finite, the order $<$ is computable. If in addition $\sim$ is computable, so is the function $c$, and subsequently the maps $\sO \mapsto f_{\sO}$ and $\sO \mapsto \sO \circ f_{\sO}$, thus proving (\ref{thm:main14}).

{\bf Intermediate results:} The formulas (\ref{equi-f-b}) and (\ref{circ-f-f}) below will be used several times in this proof.
\begin{align}
\forall(\aHT,\aT) \in \AT^* \times \AT, \quad \hOS(\sO,f_{\sO}(\aHT)\aT) & \sim \hOS(\sO,f_{\sO}(\aHT)c(f_{\sO}(\aHT),\aT)) \mbox{ by definition of } c,\nonumber\\
  & = \hOS(\sO,f_{\sO}(\aHT \aT)) \mbox{ by ``folding'' the definition of }f_{\sO}.\label{equi-f-b}
\end{align}
Let us prove (\ref{circ-f-f}) below by induction on $\aHT$.
\begin{align}
\forall \aHT \in \AT^*,\quad \hOS(\sO \circ f_{\sO},\aHT) \sim \hOS(\sO, f_{\sO}(\aHT)) \label{circ-f-f}
\end{align}
For the base case, trivially $\e = \e$. For the inductive case, 
\begin{align*}
\hOS(\sO \circ f_{\sO},\aHT \aT) & = \hOS(\sO \circ f_{\sO},\aHT) (\sO \circ f_{\sO} (\aHT),\aT) \mbox{ by unfolding the definition of }\hOS,\\
    & \sim \hOS(\sO, f_{\sO}(\aHT)) (\sO \circ f_{\sO} (\aHT),\aT) \mbox{ by I.H. and closedness by adding a suffix},\\
    & = \hOS(\sO, f_{\sO}(\aHT)) (\sO(f_{\sO}(\aHT)),\aT) \mbox{ by definition of function composition},\\
    & = \hOS(\sO, f_{\sO}(\aHT) \aT) \mbox{ by folding the definition of }\hOS.
\end{align*}
One completes the induction step by combining the above $\sim$-equivalence with Formula (\ref{equi-f-b}).

{\bf Proof that $\sO \circ f_{\sO}$ is winning if $\sO$ is:} Since $f_{\sO}$ preserves the length and the prefix relation, one can extend its domain and codomain to $\AT^\omega$: for all $\aHTinf \in \AT^\omega$ let $f_{\sO}(\aHTinf)$ be the only sequence $\aHTinf' \in \AT^\omega$ such that $\aHTinf'_{\leq n} = f_{\sO}(\aHTinf_{\leq n})$ for all $n \in \N$. For all $\aHTinf \in \AT^\omega$ and $n \in \N$
\begin{align*}
\hOS(\sO \circ f_{\sO}, \aHTinf)_{\leq n} & =  \hOS(\sO \circ f_{\sO}, \aHTinf_{\leq n})\mbox{ by preservation of length and of the prefix relation}\\
  & \sim  \hOS(\sO, f_{\sO}(\aHTinf_{\leq n})) \mbox{ by Formula (\ref{circ-f-f})},\\
  & =  \hOS(\sO, f_{\sO}(\aHTinf)_{\leq n}) = \hOS(\sO, f_{\sO}(\aHTinf))_{\leq n}.
\end{align*}
If $\sO$ is winning, $\hOS(\sO, f_{\sO}(\aHTinf)) \in W$, and $\hOS(\sO \circ f_{\sO}, \aHTinf) \in W$ by weak $W$-closedness, so $\sO \circ f_{\sO}$ is also winning.

{\bf Proof that $\sO \circ f_{\sO}$ is a $\sim$-strategy:} it suffices to prove the following, by induction on $\aHT$.
\[
\forall \aHT,\aHT' \in \AT^*, \quad \hOS(\sO \circ f_{\sO},\aHT) \sim \hOS(\sO \circ f_{\sO},\aHT') \Rightarrow f_{\sO}(\aHT) = f_{\sO}(\aHT')
\]
Then from the conclusion $f_{\sO}(\aHT) = f_{\sO}(\aHT')$ one obtains $\sO \circ f_{\sO}(\aHT) = \sO \circ f_{\sO}(\aHT')$. For the base case, trivially $f_{\sO}(\e) = f_{\sO}(\e)$. For the inductive case, let us assume that $\hOS(\sO \circ f_{\sO},\aHT \aT) \sim \hOS(\sO \circ f_{\sO},\aHT' \aT')$, so $\hOS(\sO \circ f_{\sO},\aHT)(\sO \circ f_{\sO}(\aHT),\aT) \sim \hOS(\sO \circ f_{\sO},\aHT')(\sO \circ f_{\sO}(\aHT'), \aT')$ by unfolding the definition of $\hOS$, so $\hOS(\sO \circ f_{\sO},\aHT) \sim \hOS(\sO \circ f_{\sO},\aHT')$ by perfect recall, so $f_{\sO}(\aHT) = f_{\sO}(\aHT')$ by induction hypothesis. The assumption $\hOS(\sO \circ f_{\sO},\aHT \aT) \sim \hOS(\sO \circ f_{\sO},\aHT' \aT')$ implies $\hOS(\sO,f_{\sO}(\aHT) \aT) \sim \hOS(\sO, f_{\sO}(\aHT) \aT')$ by (\ref{circ-f-f}) and (\ref{equi-f-b}), and by rewriting $f_{\sO}(\aHT')$ with $f_{\sO}(\aHT)$. So $c(f_{\sO}(\aHT),\aT) = c(f_{\sO}(\aHT),\aT')$ by (\ref{impl:h-c}). Therefore $f_{\sO}(\aHT \aT) = f_{\sO}(\aHT)c(f_{\sO}(\aHT),\aT) = f_{\sO}(\aHT')c(f_{\sO}(\aHT),\aT') = f_{\sO}(\aHT' \aT')$, thus completing the induction.

{\bf The memory-aware implementation of $\sO \circ f_{\sO}$:} Let us assume that some $(M,m_0,\sigma,\mf)$ is a memory-aware implementation of $\sO$, and that $\sim$ is recognized by an automaton $\mathcal{A} = ((\AO \times \AT)^2,Q,q_0,F,\delta)$. A natural idea is, as $\aHT$ is being incrementally extended by \pT, to keep track of both (interdependently) the strategy memory state via $\mf$, and $f_{\sO}(\aHT)$ via $\mathcal{A}$, and to feed the former to $\sigma$.  

Let $\overline{m}_0 := (m_0,q_0) \in M \times F$, which will be the new initial memory state. Let $\overline{M} := M \times F$, and for all $(m,q) \in M \times F$ let $\overline{\sigma}(m,q) = \sigma(m)$. For the memory update, let 
\[
\begin{array}{rrcl}
\overline{\mf} : & \overline{M} \times \AT & \longrightarrow & \overline{M}\\
    & (m,q,\aT) & \longmapsto & (\mf(m,c'), \delta(q,(\sigma(m),c'),(\sigma(m),c'))\\
    &&& \mbox{ where }c' = c'(m,q,\aT) = \min_{<}\{c \in \AT \mid \delta(q,(\sigma(m),b),(\sigma(m),c)) \in F\}
\end{array}
\]
To see that $\overline{\mf}$ is well-defined, check that $c'(m,q,\aHT)$ is well-defined: $\delta(q,(\sigma(m),b),(\sigma(m),b)) \in F$ due to closedness by adding a suffix and since $q \in F$ by definition of $\overline{M}$, so the set at hand is non-empty and has a $<$-minimum since $<$ is a well-order. 

The aim is to prove that $\sO \circ f_{\sO} = \overline{\sigma} \circ \overline{\mf}^+$. The main step is to prove the equation below, by induction on $\aHT$. Then, it suffices to compose both sides of the equation with $\overline{\sigma}$, to invoke $\overline{\sigma}(m,q) = \sigma(m)$ on the right-hand side, and to recall that $\sO = \sigma \circ \mf^+$ by definition.
\[
\forall \aHT \in \AT^*, \quad \overline{\mf}^+(\aHT) = \big(\mf^+ \circ f_{\sO}(\aHT), \delta^+\big(\hOS(\sO,f_{\sO}(\aHT)) \| \hOS(\sO,f_{\sO}(\aHT))\big)\big)
\]
Base case: $\overline{\mf}^+(\e)  = \overline{m}_0 = (m_0,q_0) = (\mf^+(\e), \delta^+(\e)) = \big(\mf^+ \circ f_{\sO}(\e), \delta^+\big(\hOS(\sO,f_{\sO}(\e)) \| \hOS(\sO,f_{\sO}(\e))\big)\big)$. For the inductive case, let us first note that for all $(\aHT,\aT,c) \in \AT^* \times \AT \times \AT$,
\begin{align}
&\delta^+\big(\hOS(\sO,f_{\sO}(\aHT)\aT) \| \hOS(\sO,f_{\sO}(\aHT)c)\big) \nonumber\\
& = \quad \delta\big(\delta^+\big(\hOS(\sO,f_{\sO}(\aHT)) \| \hOS(\sO,f_{\sO}(\aHT))\big), (\sO \circ f_{\sO}(\aHT),\aT), (\sO \circ f_{\sO}(\aHT),c)\big) \nonumber\\
& \quad \quad \mbox{ by unfolding the definition of }\hOS \mbox{ and by definition of } \delta^+ ,\nonumber\\
& = \quad \Delta(\aT,c) \mbox{ by rewriting twice } \sO \mbox{ with } \sigma \circ \mf^+\mbox{, which is correct by definition},\label{equi.h-delta}\\
& \quad \quad \mbox{ where } \Delta(\aT,c) := \delta\big(\delta^+\big(\hOS(\sO,f_{\sO}(\aHT)) \| \hOS(\sO,f_{\sO}(\aHT))\big), (\sigma \circ \mf^+ \circ f_{\sO}(\aHT),\aT), (\sigma \circ \mf^+ \circ f_{\sO}(\aHT),c)\big).\nonumber
\end{align}
Futhermore, $\hOS(\sO,f_{\sO}(\aHT)\aT) \sim  \hOS(\sO,f_{\sO}(\aHT)c)$ iff $\delta^+\big(\hOS(\sO,f_{\sO}(\aHT)\aT) \|  \hOS(\sO,f_{\sO}(\aHT)c)\big) \in F$ (by definition of the automaton recognizing $\sim$) iff $\Delta(\aT,c)  \in F$ (by Equality (\ref{equi.h-delta})), so $c(f_{\sO}(\aHT),\aT) =$\\$c'(\mf^+ \circ f_{\sO}(\aHT), \delta^+\big(\hOS(\sO,f_{\sO}(\aHT)) \| \hOS(\sO,f_{\sO}(\aHT))\big), \aT)$, which is written $c''$ for short. Therefore,
\begin{align*}
\overline{\mf}^+(\aHT \aT) & = \overline{\mf}(\overline{\mf}^+(\aHT), \aT) \mbox{ by unfolding the definition of }\overline{\mf}^+,\\
  & = \overline{\mf}\big(\mf^+ \circ f_{\sO}(\aHT), \delta^+\big(\hOS(\sO,f_{\sO}(\aHT)) \| \hOS(\sO,f_{\sO}(\aHT))\big), \aT\big) \mbox{ by I.H.},\\
  & = \big(\mf(\mf^+ \circ f_{\sO}(\aHT),c''), \Delta(c'',c'') \big)\mbox{ by definition of }\overline{\mf},\\
  & = \big(\mf^+ (f_{\sO}(\aHT) c'')), \Delta(c'',c'') \big) \mbox{ by folding the definition of }\mf^+,\\
  & = \big(\mf^+ (f_{\sO}(\aHT) c''), \delta^+\big(\hOS(\sO,f_{\sO}(\aHT)c'') \| \hOS(\sO,f_{\sO}(\aHT)c'')\big)), \mbox{ by Equality (\ref{equi.h-delta})},\\
  & = \big(\mf^+ \circ f_{\sO}(\aHT \aT), \delta^+\big(\hOS(\sO,f_{\sO}(\aHT \aT)) \| \hOS(\sO,f_{\sO}(\aHT \aT))\big)) \mbox{ since }c'' = c(f_{\sO}(\aHT),\aT) \mbox{, as argued} \\
  & \quad \quad \mbox {above, and by definition of }f_{\sO}. \mbox{ This completes the induction}.
\end{align*}
\end{proof}

Let us comment on the proof of Theorem~\ref{thm:main}.\ref{thm:main1}. First, the well-order $<$ is only used to define a choice function for all non-empty subsets of $\AT$, i.e. further order-theoretic properties are not used. Second, the key object derived from a strategy $\sO$ is the representative function $f_{\sO}$ that helps build the sought strategy $\sO \circ f_{\sO}$. The property that $\sO \circ f_{\sO}$ is winning (if $\sO$ is) does not rely on $\sim$'s perfectly recalling, but the property that it is a $\sim$-strategy does.

\begin{proof}[Proof of Corollary~\ref{cor:main}.\ref{cor:main1}]
Let $\sO$ be a \pO\/ winning strategy. Let $\sim$ be the equivalence relation over $(\AO \times \AT)^*$ defined by $\aH \sim \aH'$ if $\tr^{++}(\aH) = \tr^{++}(\aH')$ and $\delta^{++}(\aH) = \delta^{++}(\aH')$. This $\sim$ is time-aware since, e.g. $\tr^{++}(\aH) = \tr^{++}(\aH')$ implies $|\aH| = |\aH'|$. This $\sim$ is closed by adding a suffix: let us assume $\aH \sim \aH'$, and let $\aH'' \in (\AO \times \AT)^*$; it is straightforward to show by induction on $\aH''$ that $\aH \aH''\sim \aH' \aH''$. This $\sim$ is perfectly recalling: given $\aH (\aO,\aT)\sim \aH'(\aO',\aT')$, it is straightforward to show that $\aH \sim \aH'$. This $\sim$ is weakly $W$-closed: let $(\aHinf,\aHinf') \in ((\AO \times \AT)^\omega)^2$ and let us assume that $\aHinf_{\leq n} \sim \aHinf'_{\leq n}$ for all $n \in \N$. In particular $\tr^{++}(\aHinf_{\leq n}) = \tr^{++}(\aHinf'_{\leq n})$ for all $n \in \N$, so $\tr^{\infty}(\aHinf) = \tr^{\infty}(\aHinf')$, which implies $\aHinf \in (\tr^{\infty})^{-1}[W] \Leftrightarrow \aHinf' \in (\tr^{\infty})^{-1}[W]$. 

Recall that, by definition, the strategies in $G' = \langle \AO, \AT, (\tr^{\infty})^{-1}[W]\rangle$ are exactly the strategies in $G$, and that a strategy is winning in $G$ iff it is winning in $G'$. So, by Theorem~\ref{thm:main}.\ref{thm:main1} applied to the game $G'$, $\sO \circ f_{\sO}$ is a winning $\sim$-strategy also in $G$.

Furthermore, there is a two-state automaton that recognizes $\sim$, by staying in $\{q_0\} = F$ until states or colors differ . So again by Theorem~\ref{thm:main}.\ref{thm:main1}, $\sO \circ f_{\sO}$ uses the same memory as $\sO$.
\end{proof}


\section{Weaker assumptions and conclusions}\label{sect:wawc}

This section proves the first part of Theorem~\ref{thm:main}.\ref{thm:main2}, i.e. without memory awareness. From now on, perfect recall is no longer assumed. To compensate this, the assumption on weak $W$-closedness will be replaced with strong $W$-closedness. First, Proposition~\ref{prop:finite-wawc} below proves a result for finite $\AT$. This assumption is dropped in the first part of Theorem~\ref{thm:main}.\ref{thm:main2}, whose proof consists in replacing and expanding the last part of the proof of Proposition~\ref{prop:finite-wawc}.

\begin{proposition}\label{prop:finite-wawc}
Consider a game $\langle \AO, \AT, W\rangle$ with finite $\AT$ and a constraint $\sim$ that is time-aware, closed by adding a suffix, and strongly $W$-closed.
Then there exists a self-map on the \pO\/ strategies that is $1$-Lipschitz continuous, maps strategies to $\sim$-strategies, and maps winning strategies to winning strategies.

If $\sim$ is computable, so is the aforementioned self-map.

\begin{proof}
Let $\sO$ be a \pO\/ strategy. Let $<$ be a strict total order over $\AT$, and for all $n \in \N$ let $<_{\lex}^n$ be its lexicographic extensions to $\AT^n$, which is a well-order, as mentioned in Section~\ref{sect:sasc}. Note that $\{\aHT \in \AT^* \mid \hOS(\sO,\aHT) \sim \aH\} \neq \emptyset$ for all $\aH \in \tilde{\hOS}(\sO,\AT^*) := \{\aH \in (\AO \times \AT)^* \mid \exists \aH' \in \hOS(\sO,\AT^*), \aH \sim \aH'\}$, and let $f_{\sO}: \tilde{\hOS}(\sO,\AT^*) \to \AT^*$ be defined by $f_{\sO}(\aH) = \min_{<_\lex^{|\aH|}}\{\aHT \in \AT^* \mid \hOS(\sO,\aHT) \sim \aH\}$. Note that $f_{\sO}$ was define differently in the proof of Theorem~\ref{thm:main}.\ref{thm:main1}, let alone that it had a different domain. The following property is easy to check.
\begin{align}
\forall \aH,\aH' \in \tilde{\hOS}(\sO,\AT^*), \quad\hOS(\sO,f_{\sO}(\aH))  \sim \aH  \quad\wedge \quad(\aH \sim \aH' \Rightarrow f_{\sO}(\aH) = f_{\sO}(\aH')) \label{prop:sim-finite-action}
\end{align}

Let us define a strategy $\sO'$ such that $\hOS(\sO',\AT^*) \subseteq \tilde{\hOS}(\sO,\AT^*)$ by induction on its argument: base case, note that $\hOS(\sO',\e) = \e = \hOS(\sO,\e) \in \tilde{\hOS}(\sO,\AT^*)$ and let $\sO'(\e) := \sO \circ f_{\sO}  \circ \hOS(\sO',\e) = \sO(\e)$. Inductive case, let $(\aHT,\aT) \in \AT^* \times \AT$ be such that $\hOS(\sO',\aHT) \in \tilde{\hOS}(\sO,\AT^*)$ and $\sO'(\aHT) = \sO \circ f_{\sO}  \circ \hOS(\sO',\aHT)$. Then
\begin{align}
\hOS(\sO',\aHT \aT) & = \hOS(\sO',\aHT)(\sO'(\aHT),\aT) \mbox{ by unfolding the definition of }\hOS,\nonumber\\
  & \sim \hOS(\sO,  f_{\sO}  \circ \hOS(\sO',\aHT))(\sO'(\aHT),\aT) \mbox{ by Property (\ref{prop:sim-finite-action}) and closedness by adding a suffix},\nonumber\\
  & = \hOS(\sO, f_{\sO}  \circ \hOS(\sO',\aHT))(\sO \circ f_{\sO}  \circ \hOS(\sO',\aHT),\aT)\mbox{ by rewriting } \sO'(\aHT),\nonumber\\
  & = \hOS(\sO, (f_{\sO}  \circ \hOS(\sO',\aHT)) \aT)\mbox{ by folding the definition of }\hOS.\label{eq:s-s-f}
\end{align}
So $\hOS(\sO',\aHT \aT) \in \tilde{\hOS}(\sO,\AT^*)$ and one can set $\sO'(\aHT \aT) := \sO \circ f_{\sO}  \circ \hOS(\sO',\aHT \aT)$, which is well-defined since $\hOS(\sO',\aHT \aT)$ does not depend on $\sO'(\aHT \aT)$. This completes the induction. Note that the map $\sO \mapsto \sO'$ is $1$-Lipschitz continuous due to its inductive definition. 

Furthermore, $\sO'$ is a $\sim$-strategy: let $\aHT,\aHT' \in \AT^*$ be such that $\hOS(\sO',\aHT) \sim \hOS(\sO',\aHT')$. Then
\begin{align*}
\sO'(\aHT) & = \sO \circ f_{\sO}  \circ \hOS(\sO',\aHT) \mbox{ by definition of }\sO'\\
        & = \sO \circ f_{\sO}  \circ \hOS(\sO',\aHT') \mbox{ by Property (\ref{prop:sim-finite-action})}\\
        & = \sO'(\aHT') \mbox{ again by definition of }\sO'.
\end{align*}

Since $\AT$ is finite, $<$ is computable, and so are its lexicographic extensions. Therefore, if $\sim$ is computable, so is the map $\sO \mapsto \sO'$ via $\hOS$ and the definition of $f_{\sO}$.

Now, let us assume that $\sO$ is winning, i.e. $\hOS(\sO,\AT^\omega) \subseteq W$, consider an arbitrary $\aHTinf' \in \AT^\omega$, and let $\aHinf' := \hOS(\sO',\aHTinf')$. For all $n \in \N$, let $\delta(n) := \hOS(\sO,f_{\sO} \circ \hOS(\sO',\aHTinf'_{\leq n}))$, and consider the tree induced by all the $f_{\sO} \circ \hOS(\sO',\aHTinf'_{\leq n})$ (and their prefixes): it is finitely branching since $\AT$ is finite, so by Koenig's Lemma this tree has an infinite branch. Said otherwise, there is $\aHTinf \in \AT^\omega$ such that for all $n \in \N$ there is $k \in \N$ such that $\aHTinf_{\leq n} = f_{\sO} \circ \hOS(\sO',\aHTinf'_{\leq n+k})_{\leq n}$. Let $\aHinf = \hOS(\sO,\aHTinf)$. So for all $n \in \N$ there is $k \in \N$ such that $\aHinf_{\leq n} = \hOS(\sO,\aHTinf_{\leq n}) = \hOS(\sO, f_{\sO} \circ \hOS(\sO',\aHTinf'_{\leq n+k})_{\leq n}) = \delta(n+k)_{\leq n}$, i.e. there exists $\gamma$ such that $\aHinf_{\leq n}\gamma = \delta(n+|\gamma|)$. Furthermore, $\aHinf'_{\leq n+|\gamma|} = \hOS(\sO',\aHTinf'_{\leq n+|\gamma|}) \sim \delta(n+|\gamma|)$ for all $n \in \N$ by Property (\ref{prop:sim-finite-action}), so $\aHinf_{\leq n}\gamma \sim \aHinf'_{\leq n+|\gamma|}$. Since $\aHinf \in \hOS(\sO,\AT^\omega)  \subseteq W$, also $\aHinf' \in W$ by strong $W$-closedness. Therefore $\sO'$ is winning, since $\aHTinf'$ was arbitrary.

\end{proof}
\end{proposition}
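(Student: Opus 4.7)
The plan is to build $\sO \mapsto \sO'$ by picking, for each reachable history class, a canonical virtual opponent-history. Fix a strict total order $<$ on the finite set $\AT$, and let $<_{\lex}^n$ denote its lexicographic extension to $\AT^n$, which is a well-order. For any \pO\/ strategy $\sO$ and any $\aH \in \tilde{\hOS}(\sO,\AT^*) := \{\aH \in (\AO \times \AT)^* \mid \exists \aHT,\, \hOS(\sO,\aHT) \sim \aH\}$, time-awareness lets me set $f_{\sO}(\aH) := \min_{<_{\lex}^{|\aH|}}\{\aHT \in \AT^{|\aH|} \mid \hOS(\sO,\aHT) \sim \aH\}$. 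Two properties are immediate from this definition: $\hOS(\sO,f_{\sO}(\aH)) \sim \aH$, and $f_{\sO}(\aH) = f_{\sO}(\aH')$ whenever $\aH \sim \aH'$.

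Next I would define $\sO'$ by induction on opponent-histories, verifying simultaneously that $\hOS(\sO',\aHT) \in \tilde{\hOS}(\sO,\AT^*)$ and setting $\sO'(\aHT) := \sO(f_{\sO}(\hOS(\sO',\aHT)))$. The induction step uses closedness by adding a suffix: extending $\hOS(\sO',\aHT)$ and the $\sim$-equivalent history $\hOS(\sO,f_{\sO}(\hOS(\sO',\aHT)))$ by the same action pair preserves $\sim$. The $\sim$-strategy property is then immediate, because $\hOS(\sO',\aHT) \sim \hOS(\sO',\aHT')$ forces $f_{\sO}$ to return the same word on both inputs. Since $f_{\sO}(\aH)$ only queries $\sO$ on opponent-histories of length at most $|\aH|$, the map $\sO \mapsto \sO'$ is $1$-Lipschitz continuous; and when $\sim$ is additionally computable, the finiteness of $\AT^{|\aH|}$ makes the lex-minimization algorithmic, so the self-map is computable as well.

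The main obstacle is establishing winning preservation without perfect recall, since the virtual opponent-histories $f_{\sO}(\aHinf'_{\leq n})$ need not be prefix-compatible across $n$. Fix any opponent-run $\aHTinf'$ and set $\aHinf' := \hOS(\sO',\aHTinf')$. For each $n$, the word $f_{\sO}(\aHinf'_{\leq n}) \in \AT^n$ satisfies $\hOS(\sO,f_{\sO}(\aHinf'_{\leq n})) \sim \aHinf'_{\leq n}$. The decisive move is to apply K\"onig's lemma to the prefix-closure of $\{f_{\sO}(\aHinf'_{\leq n}) \mid n \in \N\}$, which is finitely branching because $\AT$ is finite and has nodes at every depth; this yields $\aHTinf \in \AT^\omega$ such that, for every $n$, $\aHTinf_{\leq n}$ is a prefix of some $f_{\sO}(\aHinf'_{\leq n+k})$. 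Setting $\aHinf := \hOS(\sO,\aHTinf)$, one obtains for every $n$ a $\gamma$ with $\aHinf_{\leq n}\gamma \sim \aHinf'_{\leq n+|\gamma|}$, which is exactly the premise of strong $W$-closedness. Since $\sO$ is winning, $\aHinf \in W$, hence $\aHinf' \in W$, and $\sO'$ is winning.
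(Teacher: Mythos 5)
Your proposal is correct and takes essentially the same route as the paper's proof: the lexicographically minimal virtual opponent-history $f_{\sO}$ on $\tilde{\hOS}(\sO,\AT^*)$, the inductive definition $\sO'(\aHT) := \sO \circ f_{\sO} \circ \hOS(\sO',\aHT)$ kept well-defined via closedness by adding a suffix, and K\"onig's lemma combined with strong $W$-closedness for winning preservation. All steps match the paper's argument, including the handling of the non-prefix-compatible virtual histories.
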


Let us comment on the proof of Proposition~\ref{prop:finite-wawc}. First, like for Theorem~\ref{thm:main}.\ref{thm:main1}, the well-order $<$ is only used to define a choice function, but this time for some non-empty subsets of $\AT^*$ via its well-ordered extensions, namely the $<_{\lex}^n$. Second, $f_{\sO}$ is now defined directly over $\AT^*$, i.e. not in an incremental fashion. Third, the sought strategy $\sO'$ is defined inductively by $\sO'(\aHT \aT) := \sO \circ f_{\sO}  \circ \hOS(\sO',\aHT \aT)$, i.e. not only on \pT\/ action sequence (as in the proof of Theorem~\ref{thm:main}.\ref{thm:main1}), but also on $\sO'$ itself. Fourth, whereas $\hOS(\sO \circ f_{\sO},\AT^*) \subseteq \hOS(\sO, \AT^*)$ in the proof of Theorem~\ref{thm:main}.\ref{thm:main1}, now only $\hOS(\sO',\AT^*) \subseteq  \tilde{\hOS}(\sO, \AT^*)$ holds. Fifth, since the representatives are no longer prefixes of each other, they do no induce a canonical infinite representative. One retrieves one thanks to (weak) Koenig's Lemma, thus incurring a loss in terms of computability and constructivity. Sixth, the strong $W$-closedness seems to be needed, as opposed to the weak one. Indeed, in the induced infinite tree, there is no reason why there should exist an infinite branch with all (or even any) prefixes being original representatives. (E.g. the only infinite branch in the tree induced by the $0^n1$ is $0^\omega$.)

Theorem~\ref{thm:main}.\ref{thm:main2}, whose first part is proved below, drops the finiteness assumption for $\AT$, so the proof can no longer rely on Koenig's Lemma.

\begin{proof}[Proof of Theorem~\ref{thm:main}.\ref{thm:main2} without memory awareness]
The beginning of this proof is like the beginning of the proof of Proposition~\ref{prop:finite-wawc}, apart from the sentence ``Let $<$ be a strict total order over $\AT$'', which is replaced with the sentence ``By the well-ordering principle (equivalent to the axiom of choice), let $<$ be a well-order over $\AT$.'' More importantly, the last paragraph (starting with ``Now, let us assume that $\sO$'') is replaced with the remainder of this proof.

Let $(\aHT,\aT) \in \AT^* \times \AT$. By Property (\ref{eq:s-s-f}), $\aH := \hOS(\sO',\aHT \aT)  \sim \hOS(\sO, (f_{\sO}  \circ \hOS(\sO',\aHT)) \aT)$. So $f_{\sO}(\aH) \leq_{\lex}^{|\aH|} (f_{\sO}  \circ \hOS(\sO',\aHT)) \aT$ by definition of $f_{\sO}$, so $(f_{\sO}  \circ \hOS(\sO',\aHT \aT))_{\leq |\aHT|} \leq_{\lex}^{|\aHT|} f_{\sO}  \circ \hOS(\sO',\aHT)$ by definition of the $<_{\lex}^n$. Therefore $(f_{\sO}  \circ \hOS(\sO',\aHT \aT))_{\leq n} \leq_{\lex}^{n} (f_{\sO}  \circ \hOS(\sO',\aHT))_{\leq n}$ for all $n \leq |\aHT|$.


Next, let $\aHTinf' \in \AT^\omega$. By the above property, for all $n \in \N$ the sequence $(f_{\sO}  \circ \hOS(\sO' ,\aHTinf'_{\leq n+k})_{\leq n})_{k \in \N}$ is $\leq_{\lex}^n$-non-increasing, so it is eventually constant from some index on, since $<_{\lex}^n$ is a well-order. Let $k_n$ be the minimum such index. Thus, $k_n \leq k_{n+1} + 1$ for all $n \in \N$. Let $\aHT(n) := f_{\sO}  \circ \hOS(\sO' ,\aHTinf'_{\leq n+k_n})_{\leq n}$. By $k_n \leq k_{n+1}+1$, we find that $\aHT(n)$ is a prefix of $\aHT(n+1)$ for all $n \in \N$, so the $\aHT(n)$ are the prefixes of a unique $\aHTinf \in \AT^\omega$.

Let $\aHinf := \hOS(\sO,\aHTinf)$, let $\aHinf' := \hOS(\sO',\aHTinf')$, let $\delta(n) := \hOS(\sO, f_{\sO}  \circ \hOS(\sO',\aHTinf'_{\leq n+k_n}))$, and let us show that $\hOS(\sO' ,\aHTinf') \in W$ by invoking strong $W$-closedness. 
On the one hand
\begin{align*}
\delta(n)_{\leq n} & =  \hOS(\sO, f_{\sO}  \circ \hOS(\sO',\aHTinf'_{\leq n+k_n}))_{\leq n} = \hOS(\sO, f_{\sO}  \circ \hOS(\sO',\aHTinf'_{\leq n+k_n})_{\leq n}) \\
    & = \hOS(\sO, \aHT(n)) = \hOS(\sO, \aHTinf_{\leq n})= \hOS(\sO, \aHTinf)_{\leq n} = \aHinf_{\leq n}
\end{align*}
so there exists $\gamma(n)$ such that $\aHinf_{\leq n}\gamma(n) = \delta(n)$, which implies that $|\gamma(n)| = k_n$. On the other hand $\delta(n)  = \hOS(\sO, f_{\sO}  \circ \hOS(\sO',\aHTinf'_{\leq n+k_n})) \sim \hOS(\sO',\aHTinf'_{\leq n+k_n}) = \hOS(\sO',\aHTinf')_{\leq n+k_n} = \aHinf'_{\leq n+|\gamma(n)|}$, where the equivalence holds by Property (\ref{eq:s-s-f}). So $\aHinf' = \hOS(\sO' ,\aHTinf') \in W$ since $\aHinf = \hOS(\sO ,\aHTinf)$ and by strong $W$-closedness. Since $\aHTinf'$ was arbitrary, $\sO'$ is a winning strategy.
\end{proof}

Let us compare the above proof of Theorem~\ref{thm:main}.\ref{thm:main2} without memory awareness with the proof of Proposition~\ref{prop:finite-wawc}. First, it now invokes the axiom of choice, to obtain a well-order. Second, the well-order, which was obtained for free for finite $\AT$, is no longer used only as a choice function: its lexicographic extensions, due to their interaction, ensure a convenient convergence. Third, Property~(\ref{eq:s-s-f}) was not invoked in the last paragraph of the proof of Proposition~\ref{prop:finite-wawc}, but is now involved in its replacement above.

\begin{proof}[Proof of Corollary~\ref{cor:energy}]
Let $\aH \sim \aH'$ iff $|\aH| = |\aH'|$ and $\delta^+(\aH) = \delta^+(\aH')$ and $\sum \tr^{++}(\aH) = \sum \tr^{++}(\aH')$ and for all $n \leq |\aH|$ we have $0 \leq \sum \tr^{++}(\aH_{\leq n})$ iff $0 \leq \sum \tr^{++}(\aH'_{\leq n})$. Let us prove useful properties of $\sim$ below.
\begin{itemize}
 \item Time awareness: by definition.
 
 \item Closedness by adding a suffix: Let $\aH,\aH',\aH'' \in (\AO \times \AT)^*$ be such that $\aH \sim \aH'$. First, $|\aH| = |\aH'|$ implies $|\aH \aH''| = |\aH' \aH''|$; second, $\delta^+(\aH) = \delta^+(\aH')$ implies $\delta^+(\aH \aH'') = \delta^+(\aH' \aH'')$; third, $\sum \tr^{++}(\aH) = \sum \tr^{++}(\aH')$ implies $\sum \tr^{++}(\aH \aH'') = \sum \tr^{++}(\aH'\aH'')$; finally for all $n \leq |\aH \aH''|$ we have $0 \leq \sum \tr^{++}((\aH \aH'')_{\leq n})$ iff $0 \leq \sum \tr^{++}((\aH' \aH'')_{\leq n})$.

 \item Strong $W$-closedness: let $\aHinf,\aHinf' \in (\AO \times \AT)^\omega,$ be such that $(\forall n \in \N, \exists \gamma \in (\AO \times \AT)^*, \aHinf_{\leq n}\gamma \sim \aHinf'_{\leq n+|\gamma|})$. So for all $n \in \N$, we have $0 \leq \sum \tr^{++}(\aHinf_{\leq n})$ iff $0 \leq \sum \tr^{++}(\aHinf'_{\leq n})$, so $\aHinf \in W$ iff $\aHinf' \in W)$.
\end{itemize}
Since \pO\/ has a winning strategy, Theorem~\ref{thm:main}.\ref{thm:main2} implies that she has a winning $\sim$-strategy.

A priori, this $\sim$-strategy depends not only on the current time, state, and energy level, but also on the sign of the energy level on prefixes of the history. However, since it is winning, the energy level is always non-negative, so this $\sim$-strategy is a sought witness.
\end{proof}

\begin{proof}[Proof of Corollary~\ref{cor:muller-energy}]
Let $\sim_1$ and $\sim_2$ be the constraints from Corollaries~\ref{cor:shuffle} and \ref{cor:energy}, respectively, lifted to $(\Col \times \R)^*$. Let $W_1$ and $W_2$ be the winning conditions from Corollaries~\ref{cor:shuffle} and \ref{cor:energy}, respectively, lifted to $(\Col \times \R)^\omega$. The constraints $\sim_1$ and $\sim_2$ are time-aware, closed by adding a suffix, and strongly $W_1$ and $W_2$-closed, respectively, as shown in the proofs of Corollaries~\ref{cor:shuffle} and \ref{cor:energy}. So by Lemma~\ref {lem:constraint-algebra}, the constraint $\sim_1 \cap \sim_2$ is time-aware, closed by adding a suffix, and strongly $(W_1 \cap W_2)$-closed. Therefore, if \pO\/ has a winning strategy, she has a winning $(\sim_1 \cap \sim_2)$-strateggy by Theorem~\ref{thm:main}.\ref{thm:main2}
\end{proof}

\section{Weaker assumptions and conclusions with memory-aware strategies}
\label{sect:memory}

This section proves the part of Theorem~\ref{thm:main}.\ref{thm:main2} with memory awareness, gives an example where uniformization necessarily increases the used memory, and proves Corollaries~\ref{cor:main}.\ref{cor:main2} and \ref{cor:shuffle}.

\begin{proof}[Proof of Theorem~\ref{thm:main}.\ref{thm:main2} with memory awareness]
Let $\sO$ be a \pO\/ strategy with memory-aware implementation $(M,m_0,\sigma,\mf)$. 

{\bf Defining a new strategy:} Let $<$ be a strict total order over $M$. Let $((\AO \times \AT)^2,Q,q_0,F,\delta)$ be an automaton recognizing time-aware $\sim$, and let us define a memory-aware implementation $(\overline{M},\overline{m}_0,\overline{\sigma},\overline{\mf})$ that depends on the above objects. Below $\pi_1$ and $\pi_2$ are the projections on, respectively, the first and second component of a pair. Let $\overline{M} := \{ C \in \mathcal{P}(M \times Q) \mid \pi_2[C] \cap F \neq \emptyset\}$, where 
$\pi_2[C] \cap F \neq \emptyset \Leftrightarrow\exists (x,q) \in C, q \in F$.  Let $\overline{m}_0 := \{(m_0,q_0)\}$. Note that $\overline{m}_0 \in \overline{M}$ since $\pi_2[\{(m_0,q_0)\}] = \{q_0\} \subseteq F$.
\[
\begin{array}{rrcl@{\hspace{1cm}}}
\overline{\sigma} : & \overline{M}  & \longrightarrow & \AO\\ 
    & C & \longmapsto & \sigma(\min_< \pi_1[C \cap(M \times F)]) \\
    &&& \mbox{ where } \pi_1[C \cap(M \times F)] = \{x \in M \mid \exists q \in F, (x,q) \in C\}
\end{array}
\]
To see that $\overline{\sigma}$ is well-defined, note that $\pi_2[C] \cap F \neq \emptyset$ by definition of $\overline{M}$, so $\pi_1[C \cap(M \times F)] \neq \emptyset$,  so $\min_{<} \pi_1[C \cap(M \times F)]$ is well-defined.
\[
\begin{array}{rrcl}
\overline{\mf} : & \overline{M} \times \AT & \longrightarrow & \overline{M}\\
    & (C,\aT) & \longmapsto & \big\{\big(\mf(x,\aT'),\delta(q,(\overline{\sigma}(C),\aT),(\sigma(x),\aT')) \big) \mid (x,q) \in C \wedge \aT' \in \AT\big\}
\end{array}
\]
To argue that $\overline{\mf}$ is well-defined, let $(C,\aT) \in \overline{M} \times \AT$ and let us show that $C' := \overline{\mf}(C,\aT) \in \overline{M}$. Indeed $\pi_2[C'] = \big\{\delta(q,(\overline{\sigma}(C),\aT),(\sigma(x),\aT')) \mid (x,q) \in C \wedge \aT' \in \AT\big\}$. By definition, $\overline{\sigma}(C) = \sigma(y)$ for some $y \in M$, and there exists $q' \in F$ such that $(y,q') \in C$. By setting $(x,q,b') := (y,q',b)$ in the above expression of $\pi_2[C']$, one finds $\delta(q',(\sigma(y),\aT),(\sigma(y),\aT)) \in \pi_2[C']$. On the other hand, $\delta(q',(\sigma(y),\aT),(\sigma(y),\aT)) \in F$ by closedness by adding a suffix, so $\pi_2[C'] \cap F \neq \emptyset$.

Let $\overline{\sO}: \AT^* \to \AO$ be the strategy with memory-aware implementation $(\overline{M},\overline{m}_0,\overline{\sigma},\overline{\mf})$.

{\bf Connecting the local and the global:} 
Let us show by induction on $n$ that for all $n \in \N$, for all $\ol{\aHT} \in \AT^n$,
\begin{align}
\overline{\mf}^+(\ol{\aHT}) = \{(\mf^+(\aHT), \delta^+(\hOS(\overline{\sO},\ol{\aHT}) \| \hOS(\sO,\aHT))) \mid \aHT \in \AT^n\} \label{eq:last-mu-bar}
\end{align}
For the base case, $\overline{\mf}^+(\e) = \{(m_0,q_0)\} = \{(\mf^+(\e), \delta^+(\hOS(\ol{\sO},\e) \| \hOS(\sO,\e)))\}$. For the inductive case,
\begin{align*}
\ol{\mf}^+(\ol{\aHT}\ol{\aT}) & = \ol{\mf}(\ol{\mf}^+(\ol{\aHT}),\ol{\aT}) \mbox{ by definition of }\ol{\mf}^+,\\
    & = \big\{\big(\mf(x,\aT),\delta(q,(\ol{\sO}(\ol{\aHT}),\ol{\aT}),(\sigma(x),\aT)) \big) \mid (x,q) \in \ol{\mf}^+(\ol{\aHT}) \wedge \aT \in \AT\big\}\\
    & \quad\quad \mbox{ by definition of }\ol{\mf} \mbox{ and }\ol{\sO} ,\\
    & = \big\{\big(\mf(\mf^+(\aHT),\aT),\delta(q,(\ol{\sO}(\ol{\aHT})),\ol{\aT}),(\sigma \circ \mf^+(\aHT),\aT)) \big) \mid \aHT \in \AT^n \wedge \aT \in \AT\big\}\\
    & \quad\quad\mbox{ by I.H.}, \mbox{ where }q = \delta^+(\hOS(\overline{\sO},\ol{\aHT}) \| \hOS(\sO,\aHT)) ,\\
    & = \big\{\big(\mf^+(\aHT\aT),\delta(q,(\ol{\sO}(\ol{\aHT})),\ol{\aT}),(\sO(\aHT),\aT)) \big) \mid \aHT \in \AT^n \wedge \aT \in \AT\big\}\\
    & \quad\quad \mbox{ by definition of }\mf^+ \mbox{ and }\sO, \mbox{ where }q \mbox{ is as above},\\
    & = \{(\mf^+(\aHT \aT), \delta^+(\hOS(\overline{\sO},\ol{\aHT} \ol{\aT}) \| \hOS(\sO,\aHT \aT))) \mid \aHT \in \AT^n \wedge \aT \in \AT\}\\
    & \quad\quad \mbox{ by definition of }\delta^+,\, \| \mbox{ and }\hOS, \mbox{ thus completing the induction}.
\end{align*}
A straightforward induction would also show that $\ol{\mf}^+(\ol{\aHT}) \in \ol{M}$ for all $\ol{\aHT} \in \AT^*$. In particular, $\ol{\mf}^+(\ol{\aHT}) \cap (M \times F) \neq \emptyset$ for all $\ol{\aHT} \in \AT^*$.

{\bf Proving that $\overline{\sO}$ is a $\sim$-strategy:} 
First, let us show that for all $n \in \N$ and $\ol{\aHT},\ol{\aHT}' \in \AT^n$,
\begin{align}
\hOS(\overline{\sO}, \ol{\aHT}) \sim \hOS(\overline{\sO}, \ol{\aHT}') \Rightarrow \pi_1[\ol{\mf}^+(\ol{\aHT}) \cap (M \times F)] = \pi_1[\ol{\mf}^+(\ol{\aHT}') \cap (M \times F)]\label{impl:h-pi1-F}
\end{align}
Let us assume that $\hOS(\overline{\sO}, \ol{\aHT}) \sim \hOS(\overline{\sO}, \ol{\aHT}')$. Then, for all $x \in M$,
\begin{align*}
 x \in \pi_1[\ol{\mf}^+(\ol{\aHT}) \cap (M \times F)] &  \Leftrightarrow \exists \aHT \in \AT^n,\, x = \mf^+(\aHT) \wedge \delta^+(\hOS(\overline{\sO},\ol{\aHT}) \| \hOS(\sO,\aHT)) \in F\\
    & \quad\quad \mbox{ by Formula (\ref{eq:last-mu-bar})},\\
    & \Leftrightarrow \exists \aHT \in \AT^n,\, x = \mf^+(\aHT) \wedge \delta^+(\hOS(\overline{\sO},\ol{\aHT}') \| \hOS(\sO,\aHT)) \in F\\
    & \quad\quad \mbox{ since } \hOS(\overline{\sO}, \ol{\aHT}) \sim \hOS(\overline{\sO}, \ol{\aHT}') \mbox{ and } \forall \rho,\rho', \delta^+(\rho \| \rho') \in F \Leftrightarrow \rho \sim \rho',\\
    & \Leftrightarrow x \in \pi_1[\ol{\mf}^+(\ol{\aHT}') \cap (M \times F)] \mbox{ by Formula (\ref{eq:last-mu-bar}) again}.
\end{align*}

Let us now assume that $\hOS(\overline{\sO}, \ol{\aHT}) \sim \hOS(\overline{\sO}, \ol{\aHT}')$ for some $\ol{\aHT},\ol{\aHT}' \in \AT^n$. Then
\begin{align*}
\overline{\sO}(\ol{\aHT}) & = \overline{\sigma} \circ \overline{\mf}^+(\ol{\aHT}) \mbox{ since }(\overline{M}, \overline{m}_0,\overline{\sigma}, \overline{\mf}) \mbox{ is an implementation of }\overline{\sO},\\
  & = \sigma(\min_< \pi_1[\ol{\mf}^+(\ol{\aHT}) \cap M \times F]) \mbox{ by definition of }\overline{\sigma},\\
  & = \sigma(\min_< \pi_1[\ol{\mf}^+(\ol{\aHT}') \cap M \times F]) \mbox{ by Implication (\ref{impl:h-pi1-F}}),\\
  & = \overline{\sO}(\ol{\aHT}') \mbox{ for similar reasons as above, invoked in reverse order}.
\end{align*}

{\bf Proving that $\overline{\sO}$ is a winning strategy (if $\sO$ is):}
let us assume that $\sO$ is winning. Let $\ol{\aHTinf} \in \AT^\omega$. The remainder shows that $\hOS(\ol{\sO},\ol{\aHTinf}) \in W$. For all $n \in \N$, since $\ol{\mf}^+(\ol{\aHTinf}_{\leq n}) \cap (M \times F) \neq \emptyset$, let $\aHT^n \in \AT^n$ such that $\mf^+(\aHT^n) \in \pi_1[\ol{\mf}^+(\ol{\aHT}) \cap M \times F]$. 
So by (\ref{eq:last-mu-bar}), for all $n \in \N$
\begin{align}
\hOS(\ol{\sO},\ol{\aHTinf}_{\leq n}) \sim \hOS(\sO,\aHT^n) \label{eq:h-beta-n}
\end{align}
Consider the tree induced by the $\aHT^n$ (and their prefixes). It is finitely branching since $\AT$ is finite, and it is infinite since each $\aHT^n$ has length $n$. So by Koenig's Lemma it has an infinite branch $\aHTinf$, i.e. for all $n \in \N$ there exists $k_n \in \N$ such that $\aHTinf_{\leq n} = \aHT^{n+k_n}_{\leq n}$. So for all $n \in \N$ there exists $\gamma(n)$ such that $\hOS(\sO,\aHTinf_{\leq n})\gamma(n) = \hOS(\sO,\aHT^{n+|\gamma(n)|})$.

Let $\aHinf := \hOS(\sO,\aHTinf) \in W$ and $\aHinf' := \hOS(\ol{\sO},\ol{\aHTinf})$. So for all $n \in \N$, $\aHinf_{\leq n}\gamma(n) = \hOS(\sO,\aHTinf_{\leq n})\gamma(n) = \hOS(\sO,\aHT^{n+|\gamma(n)|}) \sim \hOS(\ol{\sO},\ol{\aHTinf}_{\leq n +|\gamma(n)|}) = \aHinf'_{\leq n +|\gamma(n)|}$, respectively by definition of $\aHinf$, by definition of $\gamma(n)$,  by (\ref{eq:h-beta-n}), and by definition of $\aHinf'$. 
So by strong $W$-closedness, $\hOS(\ol{\sO},\ol{\aHTinf}) =  \aHinf' \in W$. Since $\ol{\aHTinf}$ was arbitrary, $\ol{\sO}$ is a winning strategy.

\end{proof}

Let us comment on the above proof. First, the memory size is $\mathcal{P}(M_s \times M_{\sim})$ so there is an exponential blow-up compared to $M_s \times F_{\sim}$ from Theorem~\ref{thm:main}.\ref{thm:main1}. Second, whereas the memory-aware part of Theorem~\ref{thm:main}.\ref{thm:main1} was built on its memory-unaware part, ``simply'' by implementing $\sO \circ f_{\sO}$, the above proof has to start from scratch since $\sO'$ has no reason to be implementable with a small memory. Indeed, in the first part of Theorem~\ref{thm:main}.\ref{thm:main2} the choice of one representative for each opponent-history requires backtracking that relies on a new $f_\sO$ which has no reason to be computationally easy to handle. A new, suitable strategy is thus defined by combining elements of the implementation of $\sO$ and of $\sim$. Third, finiteness of $\AT$ was assumed in Proposition~\ref{prop:finite-wawc} but not in its improvement in Theorem~\ref{thm:main}.\ref{thm:main2}. The proof was strengthened by replacing Koenig's Lemma with an argument involving lexicographic well-orders. I failed to do the same for the memory-aware part of Theorem~\ref{thm:main}.\ref{thm:main2}.

\begin{proof}[Proof of Corollary~\ref{cor:main}.\ref{cor:main2}]
Let $\sO$ be a \pO\/ winning strategy. Let $\sim$ be an equivalence relation over $(\AO \times \AT)^*$ defined by $\aH \sim \aH'$ if $\tr^{++}(\aH) = \tr^{++}(\aH')$ and $\delta^{+}(\aH) = \delta^{+}(\aH')$. This $\sim$ is time-aware since $\tr^{++}(\aH) = \tr^{++}(\aH')$ implies $|\aH| = |\aH'|$. This $\sim$ is closed by adding a suffix: let us assume $\aH \sim \aH'$, and let $\aH'' \in (\AO \times \AT)^*$; it is straightforward to show by induction on $\aH''$ that $\aH \aH''\sim \aH' \aH''$. 

Let us show that $\sim$ is strongly $W$-closed. Let $\aHinf, \aHinf' \in (\AO \times \AT)^\omega$ be such that for all $n \in \N$ there exists $\gamma \in (\AO \times \AT)^*$ such that $\aHinf_{\leq n}\gamma \sim \aHinf'_{\leq n+|\gamma|}$, implying $\tr^{++}(\aHinf_{\leq n}\gamma) = \tr^{++}(\aHinf'_{\leq n+k})$, and $\tr^{++}( \aHinf_{\leq n}) = \tr^{++}( \aHinf'_{\leq n})$. Therefore $\tr^{\infty}(\aHinf) = \tr^{\infty}(\aHinf')$, and $\aHinf \in (\tr^{\infty})^{-1}[W]$ iff $\aHinf' \in (\tr^{\infty})^{-1}[W]$.

Recall that, by definition, the strategies in $G' = \langle \AO, \AT, (\tr^{\infty})^{-1}[W]\rangle$ are exactly the strategies in $G$, and that a strategy is winning in $G$ iff it is winning in $G'$. So, by Theorem~\ref{thm:main}.\ref{thm:main2} applied to the game $G'$, there exists a winning $\sim$-strategy in $G$.

For the memory, it suffices to show that $\sim$ is $2$-tape recognized by an automaton with $|Q|^2+1$ states. Let $q_{-1} \notin Q$ be a failure state and let us define the automaton $\langle (\AO \times \AT)^2,Q^2 \cup \{q_{-1}\}, (q_0,q_0), \{(q,q) \mid q \in Q\},\delta_{\sim}\rangle$, where
\begin{align*}
 \delta_{\sim}((q,q'),(\aO,\aT),(\aO',\aT')) & := (\delta(q,\aO,\aT),\delta(q',\aO',\aT'))\mbox{ if } \tr(q,\aO,\aT) = \tr(q',\aO',\aT')\\
 \delta_{\sim}((q,q'),(\aO,\aT),(\aO',\aT')) & := q_{-1} \mbox{ if } \tr(q,\aO,\aT) \neq \tr(q',\aO',\aT')\\
\delta_{\sim}(q_{-1},(\aO,\aT),(\aO',\aT')) & := q_{-1}
\end{align*}
\end{proof}

\begin{proof}[Proof of Corollary~\ref{cor:shuffle}]
The game $G$ is a Muller game, so, by \cite{GH82}, \pO\/ who has a winning strategy also has a finite-memory one. Let a constraint $\sim$ be defined by $\aH \sim \aH'$ if the multisets induced by $\tr^{++}(\aH)$ and $\tr^{++}(\aH')$ are equal and $\delta^+(\aH) = \delta^+(\aH')$. Let us prove useful properties of $\sim$ below. 
\begin{itemize}
 \item Time awareness: if $\aH \sim \aH'$, then $|\tr^{++}(\aH)| = |\tr^{++}(\aH')|$, and $|\aH| = |\aH'|$. 
 \item Closedness by adding a suffix: if $\aH \sim \aH'$, on the one hand, $\tr^{++}(\aH)$ and $\tr^{++}(\aH')$ induce the same multisets, and so do $\tr^{++}(\aH \aH'')$ and $\tr^{++}(\aH' \aH'')$; on the other hand, $\delta^+(\aH) = \delta^+(\aH')$, so $\delta^+(\aH\aH'') = \delta^+(\aH'\aH'')$; therefore $\aH \aH''\sim \aH'\aH''$.
 
 \item Strong $W$-closedness: Let $\aHinf \in W$ and $\aHinf' \in (\AO \times \AT)^\omega$ be such that $(\forall n \in \N, \exists \gamma \in (\AO \times \AT)^*, \aHinf_{\leq n}\gamma \sim \aHinf'_{\leq n+|\gamma|})$, so for all $n \in \N$, there exists $k \in \N$, such that every color occurring in $\tr^{++}(\aHinf_{\leq n})$ occurs at least as many times in $\tr^{++}(\aHinf'_{\leq k})$. Since $\aHinf \in W$, there exists some $C_i$ such that all the elements of $C_i$ occur infinitely often in $\tr^{\infty}(\aHinf)$, and also in $\tr^{\infty}(\aHinf')$ by the above remark, so $\aHinf' \in W$.
\end{itemize}
Therefore, by Theorem~\ref{thm:main}.\ref{thm:main2}, \pO\/ has a winning finite-memory $\sim$-strategy.
\end{proof}

 Example~\ref{exa:uni-mem} below is quite informative. It shows a game with a winning strategy that is smaller than all winning $\sim$-strategies. More specifically, a winning strategy can be implemented with memory $Q$, whereas implementations of winning $\sim$-strategies all require memory size greater than $Q$. The game is morally turn-based and finite but expressed in the general formalism.
\begin{example}\label{exa:uni-mem}
Let $\AO = \AT = \{0,1\}$. Let \pO\/ win if she plays in the third round what \pT\/ had played in the first round, or if \pT\/ plays $01$ or $10$ in the first two rounds. Let $\sim$ be the smallest constraint generated by $(\_,0)(\_,1) \sim (\_,1)(\_,0)$ and closed by adding a suffix. The underscore above means anything in $\{0,1\}$, quantified universally, i.e. $(x,0)(y,1) \sim (z,1)(t,0)$ for all $x,y,z,t \in \{0,1\}$. This $\sim$ is also time-aware and strongly $W$-closed. One can implement a winning strategy using the left-hand side automaton in Figure~\ref{fig:asws}, since it suffices to remember whether \pT\/ played $0$ or $1$ in the first round. But intuitively, implementing a winning $\sim$-strategy requires to remember the two actions of \pT\/ to detect $01$ and $10$ and react similarly for both, as in the right-hand side of Figure~\ref{fig:asws}.
 \begin{figure}
  \centering
\begin{tikzpicture}[shorten >=1pt,node distance=2cm,on grid,auto]
\node[state] (q1) {$q_1$};
\node[right of = q1, state, initial above] (q0) {$q_0$};
\node[state, right of = q0] (q2) {$q_2$};
\draw (q0) edge[->,above] node{$\_,0$} (q1)
(q1) edge[loop left] (q1)
(q0) edge[->] node{$\_,1$}(q2)
(q2) edge[loop right] (q2);
\end{tikzpicture}
\begin{tikzpicture}[shorten >=1pt,node distance=2cm,on grid,auto]
\node[state, initial] (q0) {$q_0$};
\node[right of = q0, state] (q2) {$q_2$};
\node[state, below of = q0] (q1) {$q_1$};
\node[right of = q1, state] (q3) {$q_3$};
\draw (q0) edge[->,left] node{$\_,0$} (q1)
(q1) edge[loop left] node{$\_,0$} (q1)
(q0) edge[->] node{$\_,1$}(q2)
(q2) edge[loop right] node{$\_,1$} (q2)
(q1) edge[->] node{$\_,1$}(q3)
(q2) edge[->] node{$\_,0$}(q3)
(q3) edge[loop right] (q3);
\end{tikzpicture}
  \caption{Absence of small winning $\sim$-strategies}\label{fig:asws}
\end{figure}
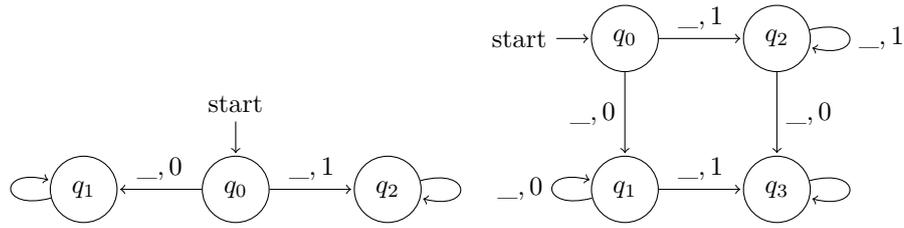
\end{example}

\section{Properties of used and alternative concepts}\label{sect:cbusc}

This section first compares the two notions of recognizability discussed in the introduction: $2$-tape recognizability, used in this article, and $1$-tape recognizability. Second, it compares strong $W$-closedness, weak $W$-closedness, and perfect recall in several ways. Third, it proves algebraic properties of the five constraint predicates. 

Proposition~\ref{prop:regular-eq-rel} below shows that, if considering time-aware equivalence relations, the concept of $2$-tape recognizability is more general than the concept of $1$-tape regularity. So this article uses the former to obtain stronger results. The {\bf time-aware restriction} $\sim_{ta}$ of an equivalence relation $\sim$ is defined by $\aH \sim_{ta} \aH'$ iff $\aH \sim \aH'$ and $|\aH| = |\aH'|$.

\begin{proposition}\label{prop:regular-eq-rel}
\begin{enumerate}
 \item\label{prop:regular-eq-rel1} If an equivalence relation is $1$-tape recognizable using memory $Q$ (to build an automaton recognizing it), its time-aware restriction is $2$-tape regular using memory $Q^2$.

 \item\label{prop:regular-eq-rel2} Every time-aware equivalence relation over $\Sigma^*$ is $2$-tape recognizable using memory $\Sigma^*$.

 \item\label{prop:regular-eq-rel3} Every $1$-tape regular equivalence relation is closed by adding a suffix.
 
 
 \item\label{prop:regular-eq-rel4} There exists a $2$-tape regular equivalence relation that is closed by adding a suffix, but is not the time-aware restriction of any $1$-tape regular equivalence relation.
\end{enumerate}

\begin{proof}
\begin{enumerate}
 \item Let $\sim$ be a $1$-tape regular equivalence relation, and let $(\Sigma,Q,q_0,\delta)$ be a corresponding automaton. Let $\mathcal{A} := (\Sigma^2,Q^2,(q_0,q_0),F,\delta_2)$, where $F := \{(q,q) \mid q \in Q\}$ and $\delta_2(q_1,q_2, a_1, a_2) := (\delta(q_1, a_1),\delta(q_2, a_2))$ for all $q_1,q_2 \in Q$ and $a_1,a_2 \in \Sigma$. It is easy to check that for all $u,v \in \Sigma^*$ of equal length, $\delta_2^+(u \| v) \in F$ iff $\delta^+(u) = \delta^+(v)$ iff $u \sim_{ta} v$. So $\sim_{ta}$ is $2$-tape recognizable.
 
 \item Let $\sim$ be a time-aware relation over $\Sigma^*$. The automaton $(\Sigma^2,Q,q_0,F,\delta)$ recognizes $\sim$, where $Q := \{(u,v) \in (\Sigma^*)^2 \mid |u| = |v|\}$, $q_0 := (\e,\e)$, $F := \sim$, and $\delta(u,v,a,b) := (ua,vb)$. 
 
 \item Reading the same word after reaching the same state leads to the same state.
 
 \item Let $\sim$ be defined by $0^n \sim 0^n$ and $0^n1u \sim 0^n1v$ for all $n \in \N$ and $u,v \in \{0,1\}^*$ such that $|u| = |v|$. It is closed by adding a suffix. The automaton depicted in Figure~\ref{fig:pairwise-not-single} witnesses that $\sim$ is $2$-tape recognizable. (Formally, the automaton is $(\{0,1\}^2,\{q_0,q_1,q_2\},q_0,\{q_0,q_1\},\delta)$, where $\delta(q_0,0,0) = q_0$, $\delta(q_0,1,1) = q_1$, $\delta(q_0,0,1) = \delta(q_0,1,0) = q_2$, and $\delta(q_i,\_,\_) = q_i$ for $i \in \{1,2\}$.) The words of length $n$ are partitioned into $n+1$ equivalence classes, so the relation $\sim$ is not the time-aware restriction of any $1$-tape regular relation using finite memory, since the latter would have finitely many equivalence classes.
 \begin{figure}
  \centering
\begin{tikzpicture}[shorten >=1pt,node distance=2cm,on grid,auto]
\node[state, accepting] (q1) {$q_1$};
\node[right of = q1, state, initial above, accepting] (q0) {$q_0$};
\node[state, right of = q0] (q2) {$q_2$};
\draw (q0) edge[loop below] node{$0,0$} (q0)
(q0) edge[->,above] node{$1,1$} (q1)
(q1) edge[loop left] (q1)
(q0) edge[->] node{$0,1$} node[swap]{$1,0$}(q2)
(q2) edge[loop right] (q2);
\end{tikzpicture}
  \caption{A $2$-tape recognizable relation not coming from a $1$-tape regular one}\label{fig:pairwise-not-single}
\end{figure}
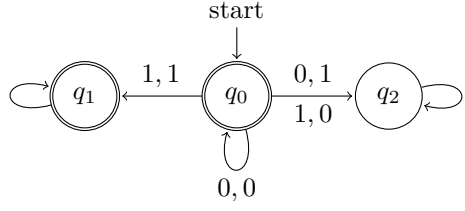

\end{enumerate}

\end{proof}
\end{proposition}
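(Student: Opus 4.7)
The plan is to treat the four items essentially independently, with the first three being short constructions directly from the definitions and the fourth requiring a concrete witness plus a counting argument.

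For item (\ref{prop:regular-eq-rel1}), given a $1$-tape recognizing automaton $(\Sigma,Q,q_0,\delta)$ for $\sim$, I would build the $2$-tape automaton that runs two independent copies of $(\Sigma,Q,q_0,\delta)$, one on each tape: state space $Q \times Q$, initial state $(q_0,q_0)$, transitions $((q,q'),(a,b)) \mapsto (\delta(q,a),\delta(q',b))$, and accepting set the diagonal $\{(q,q) \mid q \in Q\}$. Since word pairing forces $|u|=|v|$, after reading $u \| v$ the state is $(\delta^+(u),\delta^+(v))$, which is accepting iff $u \sim v$, i.e. iff $u \sim_{ta} v$. For item (\ref{prop:regular-eq-rel2}), the simplest construction uses the (infinite) state set $\{(u,v) \in (\Sigma^*)^2 \mid |u|=|v|\}$, initial state $(\e,\e)$, transition $((u,v),(a,b)) \mapsto (ua,vb)$, and accepting set $\sim$ itself; after reading $u \| v$ the state is exactly $(u,v)$, giving the required recognition. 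For item (\ref{prop:regular-eq-rel3}), if $u \sim v$ then $\delta^+(u) = \delta^+(v)$, and running the automaton on a common suffix $w$ starting from this shared state yields $\delta^+(uw) = \delta^+(vw)$, hence $uw \sim vw$.

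The main obstacle is item (\ref{prop:regular-eq-rel4}), where I need a relation that is $2$-tape regular and closed by adding a suffix, yet has unboundedly many classes per length. I would take $\Sigma = \{0,1\}$ and define $\sim$ by declaring $0^n \sim 0^n$ and, for each $n$, $0^n 1 u \sim 0^n 1 v$ whenever $|u|=|v|$; closedness by adding a suffix is immediate, since appending $w$ preserves the common prefix $0^n 1$ and the equality of tail lengths. A three-state $2$-tape automaton witnesses regularity: an accepting initial state looping on $(0,0)$, moving to a second accepting sink on $(1,1)$, and moving to a rejecting sink on $(0,1)$ or $(1,0)$.

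The non-collapsing argument is the only quantitative step. Any $1$-tape regular relation has finitely many classes in total, so its time-aware restriction has boundedly many classes at each length. For my $\sim$, the words of length $n$ split into $n+1$ classes: one containing $0^n$ alone, and, for each $k \in \{0,\dots,n-1\}$, one containing all words of the form $0^k 1 w$ with $|w|=n-k-1$. Since this count is unbounded in $n$, $\sim$ cannot be the time-aware restriction of any $1$-tape regular relation, which finishes the proof.
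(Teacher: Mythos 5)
Your proposal is correct and matches the paper's proof essentially verbatim: the same product automaton with diagonal acceptance for item~(\ref{prop:regular-eq-rel1}), the same infinite word-pair automaton for item~(\ref{prop:regular-eq-rel2}), the same one-line determinism argument for item~(\ref{prop:regular-eq-rel3}), and the identical witness relation, three-state $2$-tape automaton, and $(n+1)$-classes-per-length counting argument for item~(\ref{prop:regular-eq-rel4}). No gaps to report.
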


For the remainder of this section, let $\langle \AO,\AT,W \rangle$ be a game and $\sim$ be a constraint. In addition to comparing strong and weak $W$-closedness, let us define a notion lying in between, as will be proved, and used by a tightness result in Section~\ref{sect:tight}. The constraint $\sim$ is said to be $W$-closed, if $\forall \aHinf,\aHinf' \in (\AO \times \AT)^\omega,\,(\forall n \in \N, \exists k \in \N, \aHinf_{\leq n+k} \sim \aHinf'_{\leq n+k}) \Rightarrow (\aHinf \in W \Leftrightarrow \aHinf' \in W)$.

Proposition~\ref{prop:la-wlog} below shows that the assumption of time awareness is done without loss of generality in the following (weak) sense: if $\sim$ satisfies all the requirements of Theorem~\ref{thm:main} but time awareness, its time-aware restriction satisfies all requirements. The proofs are straightforward.
\begin{proposition}\label{prop:la-wlog}
\begin{enumerate}
 \item If $\sim$ is closed by adding a suffix, so is its time-aware restriction $\sim_{ta}$.
 
 \item If $\sim$ is perfectly recalling, so is its time-aware restriction $\sim_{ta}$.
 
 \item If $\sim$ is (strongly, weakly) $W$-closed, so is its time-aware restriction $\sim_{ta}$.
\end{enumerate}
\end{proposition}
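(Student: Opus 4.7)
The plan is to observe that $\sim_{ta}$ equals $\sim$ on pairs of equal length, and that each of the three predicates of interest only ever compares pairs of equal length (or can be shown to implicitly do so). Consequently each property transfers from $\sim$ to $\sim_{ta}$ essentially by inspection. In more detail, $\sim_{ta} \subseteq \sim$ always holds, and conversely, whenever $\aH \sim \aH'$ with $|\aH|=|\aH'|$, we have $\aH \sim_{ta} \aH'$; I would use this equivalence throughout.

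For closedness by adding a suffix, assume $\aH \sim_{ta} \aH'$, so $\aH \sim \aH'$ and $|\aH|=|\aH'|$. Closedness of $\sim$ by adding a suffix gives $\aH\aH'' \sim \aH'\aH''$, and trivially $|\aH\aH''|=|\aH|+|\aH''|=|\aH'|+|\aH''|=|\aH'\aH''|$, so $\aH\aH'' \sim_{ta} \aH'\aH''$. For perfect recall, given $\aH \sim_{ta} \aH'$ with $|\aH|=|\aH'|$, the assumption $\aH \sim \aH'$ together with perfect recall of $\sim$ yields $\aH_{\leq n}\sim \aH'_{\leq n}$ for all $n \leq |\aH|$; since $|\aH_{\leq n}|=n=|\aH'_{\leq n}|$, this upgrades to $\aH_{\leq n}\sim_{ta}\aH'_{\leq n}$.

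For weak and strong $W$-closedness, the key observation is that the prefixes $\aHinf_{\leq n}$ and $\aHinf'_{\leq n}$ (resp.\ the words $\aHinf_{\leq n}\gamma$ and $\aHinf'_{\leq n+|\gamma|}$) appearing in the definitions already have equal length, so on these pairs $\sim$ and $\sim_{ta}$ coincide. Hence any hypothesis of the form ``$\aHinf_{\leq n}\sim_{ta}\aHinf'_{\leq n}$'' (or the corresponding existential statement) is literally the same as its $\sim$-counterpart, and the conclusion about membership in $W$ follows from weak (resp.\ strong) $W$-closedness of $\sim$. The same argument works verbatim for the intermediate $W$-closedness notion introduced earlier in Section~\ref{sect:cbusc}.

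There is essentially no obstacle: the point of the proposition is conceptual rather than technical, namely that adding the time-awareness requirement for free does not destroy any of the other hypotheses of Theorem~\ref{thm:main}. The only mild subtlety worth flagging in the writeup is the asymmetry between the definition of perfect recall, whose hypothesis already contains $|\aH|=|\aH'|$, and the definitions of the $W$-closedness predicates, whose equality of length is implicit in the fact that they compare prefixes indexed by the same integer.
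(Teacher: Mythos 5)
Your proof is correct, and it is exactly the ``straightforward'' argument the paper has in mind --- the paper itself omits the proof with the remark that the proofs are straightforward. Your key observation, that $\sim$ and $\sim_{ta}$ coincide on pairs of equal length and that every pair compared in the definitions of suffix-closedness, perfect recall, and the three $W$-closedness predicates has equal length, is precisely what makes all three transfers immediate.
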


Proposition~\ref{prop:closednesses-impl} below justifies the attributes strong and weak for $W$-closedness.
\begin{proposition}\label{prop:closednesses-impl}
Strong $W$-closedness implies $W$-closedness, which implies weak $W$-closedness. Furthermore, the converse implications are false even under assumption of time awareness and closedness by adding a suffix.

\begin{proof}
It is straightforward to show that $W$-closedness implies weak $W$-closedness.

Consider some $\sim$ that is strongly $W$-closed, and let $\aHinf,\aHinf' \in (\AO \times \AT)^\omega$ be such that for all $n \in \N$ there is $k \in \N$ such that $\aHinf_{\leq n+k} \sim \aHinf'_{\leq n+k}$. Using such $n$ and $k$, let $\gamma(n)$ be such that $\aHinf_{\leq n}\gamma(n) = \aHinf_{\leq n+k}$, so $\aHinf_{\leq n}\gamma(n) \sim \aHinf'_{\leq n+|\gamma(n)|}$. So $\aHinf \in W \Rightarrow \aHinf' \in W$ by strong $W$-closedness. By swapping $\aHinf$ and $\aHinf'$, one obtains $\aHinf' \in W \Rightarrow \aHinf \in W$ and the equivalence, so $\sim$ is $W$-closed.

To show that weak $W$-closedness does not imply $W$-closedness, let $\AO = \{0,1\}$ and $\AT = \{0\}$ and let $\aH \sim \aH'$ iff $|\aH| = |\aH'|$ and $\sum_{i = 1}^{|\aH|}\pi_1(\aH_i) = \sum_{i = 1}^{|\aH|}\pi_1(\aH'_i)$. This $\sim$ is time-aware by definition, and it is clearly closed by adding a suffix. Furthermore, $\sim$ is weakly $W$-closed regardless of $W$ since $(\forall n \in \N,\aHinf_{\leq n} \sim \aHinf'_{\leq n})$ implies $\aHinf = \aHinf'$. This is proved by showing $\forall n \in \N,\aHinf_{\leq n} = \aHinf'_{\leq n}$ by induction on $n$. Now let $W = \{\aHinf\}$ where $\aHinf := ((0,0)(1,0))^\omega$, and let $\aHinf' := ((0,0)(0,0)(1,0)(1,0))^\omega \notin W$. Yet $\aHinf_{\leq 2n} \sim \aHinf'_{\leq 2n}$ for all $n \in \N$, so $\sim$ is not $W$-closed.   

To show that $W$-closedness does not imply strong $W$-closedness, let $\AO = \{0,1\}$ and $\AT = \{0\}$ and let $\aH \sim \aH'$ iff $|\aH| = |\aH'|$ and either $\aH = \aH' = (0,0)^{|\aH|}$ or they are both different from  $(0,0)^{|\aH|}$. This $\sim$ is time-aware by definition, and it is also clearly closed by adding a suffix. Let $W := \{(0,0)^\omega\}$. Then, $\sim$ is $W$-closed: let $\aHinf$ and $\aHinf'$ be such that for all $n \in \N$ there exists $k \in \N$ such that $\aHinf_{\leq n+k} \sim \aHinf'_{\leq n+k}$. Then either the two of them are in $(0,0)^*$, or $1$ occurs in each of them, so $\aHinf \in W \Leftrightarrow \aHinf' \in W$. However, $\sim$ is not strongly $W$-closed: let $\aHinf := (0,0)^\omega$ and $\aHinf' := (1,0)^\omega$, and $\gamma := (1,0)$. Thus, for all $n \in \N$ we have $\aHinf_{\leq n}\gamma = (0,0)^{n}(1,0) \sim (1,0)^{n+1} = \aHinf'_{\leq n+|\gamma|}$, yet $\aHinf \in W$ and $\aHinf' \notin W$.
\end{proof}

\end{proposition}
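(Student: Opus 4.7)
The plan is to handle the two positive implications first and then construct two counterexamples for the failures of the converses.

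For the implications, $W$-closedness $\Rightarrow$ weak $W$-closedness is immediate: given $\aHinf_{\leq n} \sim \aHinf'_{\leq n}$ for all $n$, just set $k=0$ in the definition of $W$-closedness. For strong $W$-closedness $\Rightarrow$ $W$-closedness, suppose $\aHinf_{\leq n+k} \sim \aHinf'_{\leq n+k}$ for suitable $k$. Let $\gamma$ be the suffix of $\aHinf_{\leq n+k}$ of length $k$, so that $\aHinf_{\leq n}\gamma = \aHinf_{\leq n+k} \sim \aHinf'_{\leq n+|\gamma|}$. Strong $W$-closedness then gives $\aHinf \in W \Rightarrow \aHinf' \in W$; swapping $\aHinf$ and $\aHinf'$ (allowed since the hypothesis is symmetric in the two runs) yields the reverse direction, hence the equivalence required by $W$-closedness.

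For the separation between weak $W$-closedness and $W$-closedness, I would use a setup where the equivalence relation is coarse enough to be non-trivial on arbitrarily long proper prefixes but still forces equality when matching on \emph{all} prefixes. Take $\AO=\{0,1\}$, $\AT=\{0\}$, and declare $\aH \sim \aH'$ iff $|\aH|=|\aH'|$ and the two words have the same number of $1$'s in the \pO\/ component. This is time-aware and closed by adding a suffix. Matching on all prefixes forces the sums to agree at every step, hence letter-by-letter agreement in the \pO\/ component, so $\aHinf=\aHinf'$; weak $W$-closedness therefore holds vacuously for any $W$. To break $W$-closedness, I would take $W$ to be a singleton $\{((0,0)(1,0))^\omega\}$ and compare it to, e.g., $((0,0)(0,0)(1,0)(1,0))^\omega \notin W$: at times $4n$ the two prefixes have the same number of $1$'s, producing arbitrarily long matching prefixes, while membership in $W$ differs.

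For the separation between $W$-closedness and strong $W$-closedness, I would exploit the asymmetry in the strong-closedness definition (an implication, not an equivalence, combined with the ability to insert a suffix $\gamma$). Take $\AO=\{0,1\}$, $\AT=\{0\}$, and let $\aH\sim\aH'$ iff they have equal length and either both equal $(0,0)^{|\aH|}$ or neither does. This is time-aware and closed by adding a suffix. Choose $W=\{(0,0)^\omega\}$. For $W$-closedness, if arbitrarily long prefixes of $\aHinf$ and $\aHinf'$ are $\sim$-equivalent, then either both runs are $(0,0)^\omega$ or both contain a $1$, so membership in $W$ agrees. But with $\aHinf=(0,0)^\omega$, $\aHinf'=(1,0)^\omega$, and $\gamma=(1,0)$, we get $\aHinf_{\leq n}\gamma\sim\aHinf'_{\leq n+1}$ for every $n$ since neither side is all-zero, yet $\aHinf\in W$ and $\aHinf'\notin W$, defeating strong $W$-closedness. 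The main subtlety is choosing $\sim$ so that $W$-closedness survives the two-sided implication while strong closedness fails the one-sided one; the asymmetry between the definitions is exactly what this example turns into a concrete obstruction.
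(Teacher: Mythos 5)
Your proposal is correct and follows essentially the same route as the paper: the same reduction of strong $W$-closedness to $W$-closedness via the suffix $\gamma$ of $\aHinf_{\leq n+k}$ (with symmetry giving the equivalence), the same $k=0$ observation for the easy implication, and the very same two counterexamples (sum-of-first-components with $W=\{((0,0)(1,0))^\omega\}$, and the all-$(0,0)$-or-not relation with $W=\{(0,0)^\omega\}$). One small point in your favor: your matching lengths $4n$ are the correct ones, since at lengths $4n+2$ the counts of $1$'s differ ($2n+1$ versus $2n$), whereas the paper's claim of equivalence at all lengths $2n$ is a minor slip that does not affect the argument.
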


Proposition~\ref{prop:closednesses-collapse} below show that the implications of Proposition~\ref{prop:closednesses-impl} are equivalences under perfect recall. This already suggests that perfect recall is a strong assumption.
\begin{proposition}\label{prop:closednesses-collapse}
A perfectly recalling, weakly $W$-closed constraint is also strongly $W$-closed.

\begin{proof}
Let $\sim$ be perfectly recalling and weakly $W$-closed. So $\aH \sim \aH'\,\Rightarrow\, \forall n \leq |\aH|,\, \aH_{\leq n} \sim \aH_{\leq n}$ and $(\forall n \in \N, \aHinf_{\leq n} \sim \aHinf'_{\leq n}) \Rightarrow (\aHinf \in W \Leftrightarrow \aHinf' \in W)$.

Let $\aHinf,\aHinf' \in (\AO \times \AT)^\omega$ be such that $\forall n \in \N,\exists \gamma \in (\AO \times \AT)^\omega, \aHinf_{\leq n} \gamma \sim \aHinf'_{\leq n+|\gamma|}$. So by perfect recall, $\aHinf_{\leq n}\sim \aHinf'_{\leq n}$ for all $n \in \N$. Therefore $\aHinf\sim \aHinf'$ by weak $W$-closedness.
\end{proof}
\end{proposition}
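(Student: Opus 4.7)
The plan is to unpack the hypothesis of strong $W$-closedness one index at a time and reduce it to the hypothesis of weak $W$-closedness via a single application of perfect recall.

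Fix two runs $\aHinf,\aHinf' \in (\AO \times \AT)^\omega$ satisfying the antecedent of strong $W$-closedness: for every $n \in \N$ there exists $\gamma \in (\AO \times \AT)^*$ with $\aHinf_{\leq n}\gamma \sim \aHinf'_{\leq n+|\gamma|}$. The first step is to observe that the two histories being related here have the same length, namely $n + |\gamma|$, so perfect recall applies to them and yields, for every $m \leq n + |\gamma|$, the prefix equivalence $(\aHinf_{\leq n}\gamma)_{\leq m} \sim (\aHinf'_{\leq n+|\gamma|})_{\leq m}$. Taking in particular $m = n$, the left-hand side collapses to $\aHinf_{\leq n}$ and the right-hand side to $\aHinf'_{\leq n}$, giving $\aHinf_{\leq n} \sim \aHinf'_{\leq n}$. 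Since $n$ was arbitrary, this establishes $\forall n \in \N, \aHinf_{\leq n} \sim \aHinf'_{\leq n}$.

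At this point the hypothesis of weak $W$-closedness is met, so one concludes $\aHinf \in W \Leftrightarrow \aHinf' \in W$, which is (even stronger than) the implication $\aHinf \in W \Rightarrow \aHinf' \in W$ demanded by strong $W$-closedness. Since $\aHinf$ and $\aHinf'$ were arbitrary runs satisfying the antecedent, strong $W$-closedness follows.

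There is essentially no obstacle here: the only subtlety is to notice that perfect recall is applicable because the two related histories $\aHinf_{\leq n}\gamma$ and $\aHinf'_{\leq n+|\gamma|}$ automatically have equal length, and that choosing the truncation index $m = n$ discards the ``extra'' suffix $\gamma$ on the left. This is also what makes the argument insensitive to \emph{which} $\gamma$ witnesses the hypothesis at index $n$.
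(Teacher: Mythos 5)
Your proof is correct and follows exactly the same route as the paper's: apply perfect recall to the equal-length pair $\aHinf_{\leq n}\gamma \sim \aHinf'_{\leq n+|\gamma|}$, truncate at index $n$ to obtain $\aHinf_{\leq n} \sim \aHinf'_{\leq n}$ for all $n$, then conclude by weak $W$-closedness. You merely spell out the two steps the paper leaves implicit (the equal-length check that licenses perfect recall, and the choice $m=n$ that discards $\gamma$), which is a faithful and slightly more careful rendition of the same argument.
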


\begin{proof}[Proof of Lemma~\ref{lem:constraint-algebra}]
 \begin{enumerate}
  \item Let $\aH (\cap_{i \in I}\sim_i) \aH'$, so $\aH \sim_i \aH'$ for all $i \in I$. In particular, $\aH \sim_i \aH'$ for some $i \in I$, so $|\aH| = |\aH'|$ if all $\sim_i$ are time aware. Let $\aH'' \in (\AO \times \AT)*$, so $\aH \aH'' \sim_i \aH' \aH''$ for all $i \in I$ if all the $\sim_i$ are closed by adding a suffix, implying $\aH \aH''(\cap_{i \in I}\sim_i) \aH'\aH''$. If $|\aH| = |\aH'|$ then for all $i \in I$ and all $n \leq |\aH|$ we have $\aH_{\leq n} \sim_i \aH_{\leq n}$. Hence, $\aH_{\leq n} \sim \aH_{\leq n}$ for all $n \leq |\aH|$.  
   
  \item Let $\aHinf,\aHinf' \in (\AO \times \AT)^\omega$ be such that $ \aHinf_{\leq n} (\cap_{i \in I}\sim_i)\aHinf'_{\leq n}$ for all $n \in \N$ and $\aHinf \in W := \cap_{i \in I}W_i$. So for all $i \in I$ we have $ \aHinf_{\leq n} \sim_i \aHinf'_{\leq n}$ for all $n \in \N$, so $\aHinf' \in W_i$ by weak $W_i$-closedness and since $\aHinf \in W_i$. So $\aHinf' \in W$.

\item Let $\aHinf,\aHinf' \in (\AO \times \AT)^\omega$ be such that for all $n \in \N$, there exists $\gamma \in (\AO \times \AT)^*$ such that $ \aHinf_{\leq n}\gamma (\cap_{i \in I}\sim_i) \aHinf'_{\leq n+|\gamma|}$, and such that $\aHinf \in W := \cap_{i \in I}W_i$. So for all $i \in I$, for all $n \in \N$, there exists $\gamma \in (\AO \times \AT)^*$ (the same $\gamma$) such that $ \aHinf_{\leq n}\gamma \sim_i \aHinf'_{\leq n+|\gamma|}$, so $\aHinf' \in W_i$ by strong $W_i$-closedness and since $\aHinf \in W_i$. So $\aHinf' \in W$.

  \item Let $\aHinf,\aHinf' \in (\AO \times \AT)^\omega$ be such that $ \aHinf_{\leq n}\sim \aHinf'_{\leq n}$ for all $n \in \N$, and such that $\aHinf \in W := \cup_{i \in I}W_i$. Let $j \in I$ be such that $\aHinf \in W_j$. By weak $W_j$-closedness, $\aHinf' \in W_j \subseteq W$.
  
  \item Let $\aHinf,\aHinf' \in (\AO \times \AT)^\omega$ be such that for all $n \in \N$, there exists $\gamma \in (\AO \times \AT)^*$ such that $ \aHinf_{\leq n}\gamma \sim \aHinf'_{\leq n+|\gamma|}$, and such that $\aHinf \in W := \cup_{i \in I}W_i$. Let $j \in I$ be such that $\aHinf \in W_j$. By strong $W_j$-closedness, $\aHinf' \in W_j \subseteq W$.
  
 \end{enumerate}
\end{proof}

\section{Tightness results}\label{sect:tight}

This section shows tightness results for Theorem~\ref{thm:main}. Dropping perfect recall falsifies Theorem~\ref{thm:main}.\ref{thm:main1}. Dropping either time awareness or closedness by adding a suffix
falsifies Theorem~\ref{thm:main}.\ref{thm:main2}. Despite this relative tightness, the end of the section shows well-known examples that are not captured by Theorem~\ref{thm:main}.\ref{thm:main2}.

Given a game with a \pO\/ winning strategy, given a constraint $\sim$, if there are no \pO\/ winning $\sim$-strategies, then $\sim$ is said to be harmful. Otherwise it is said to be harmless.

Proposition~\ref{prop:sasc-tp} below shows that perfect recall cannot be simply dropped in Theorem~\ref{thm:main}.\ref{thm:main1}.

\begin{proposition}\label{prop:sasc-tp}
There exist a game $\langle \{0,1\},\{0,1\}, W\rangle$ and a constraint that is time-aware, closed by adding a suffix, weakly $W$-closed, and yet harmful.

  \begin{proof}
Let $W$ be defined as follows: $\aHinf \in W$ iff $\pi_1(\aHinf_{1}) = \pi_1(\aHinf_{2}) = 1 - \pi_2(\aHinf_{0})$. (This is morally a three-round turn-based game.) Let us define $\sim$ as follows: for all $\aH,\aH' \in (\{0,1\}^2)^*$, let $\aH \sim \aH'$ iff $0 < |\aH| = |\aH'|$ and $\pi_2(\aH_0) + \sum_{i = 1}^{|\aH|-1}\pi_1(\aH_i) = \pi_2(\aH'_0) + \sum_{i = 1}^{|\aH|-1}\pi_1(\aH'_i)$. So, $\sim$ is time-aware by definition, it is clearly closed by adding a suffix, and also weakly $W$-closed.

$(0,0)(1,1) \sim (1,1)(0,0)$ but after $(0,0)(1,1)$ \pO\/ must play $1$ to win, and after $(1,1)(0,0)$ she must play $0$ to win. So there is no winning $\sim$-strategy.
  \end{proof}
\end{proposition}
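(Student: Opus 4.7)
}

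The plan is to build a simple three-round (morally turn-based) game in which Player~1 trivially has a winning strategy, but a carefully chosen constraint merges two length-2 histories after which Player~1's winning response is forced to differ.

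First I would fix the game: take $W \subseteq (\{0,1\}^2)^\omega$ consisting of those runs $\aHinf$ such that $\pi_1(\aHinf_1) = \pi_1(\aHinf_2) = 1 - \pi_2(\aHinf_0)$. Only the first three rounds matter, and only the single bit $\pi_2(\aHinf_0)$ played by Player~2 in round~0 drives the winning condition. Player~1 obviously has a winning strategy, e.g. the one that inspects the first opponent action $\aT_0$ and plays $1-\aT_0$ in rounds~1 and~2 (and anything afterwards).

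Next I would define the constraint $\sim$ on non-empty histories by
\begin{align*}
\aH \sim \aH' \quad \Longleftrightarrow \quad |\aH| = |\aH'| > 0 \,\wedge\, \pi_2(\aH_0) + \sum_{i=1}^{|\aH|-1}\pi_1(\aH_i) \;=\; \pi_2(\aH'_0) + \sum_{i=1}^{|\aH'|-1}\pi_1(\aH'_i),
\end{align*}
and extend it trivially on the empty word. Time awareness is immediate from the definition. Closedness by adding a suffix is also immediate: appending $\aH''$ to both sides increases both sums by the same quantity (and preserves lengths). For weak $W$-closedness, I would observe that if $\aHinf_{\leq n} \sim \aHinf'_{\leq n}$ holds for every $n\ge 1$, then comparing the relations at $n=1,2,3$ successively yields $\pi_2(\aHinf_0)=\pi_2(\aHinf'_0)$, then $\pi_1(\aHinf_1)=\pi_1(\aHinf'_1)$, then $\pi_1(\aHinf_2)=\pi_1(\aHinf'_2)$, which are exactly the three values determining membership in $W$; hence $\aHinf \in W \Leftrightarrow \aHinf' \in W$.

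Finally I would exhibit the harmful pair witnessing that perfect recall is being violated: the length-2 histories $\aH := (0,0)(1,1)$ and $\aH' := (1,1)(0,0)$ both have sum $0+1 = 1+0 = 1$, so $\aH \sim \aH'$, yet after $\aH$ (where Player~2 first played $0$) a winning strategy must answer $1$, while after $\aH'$ (where Player~2 first played $1$) a winning strategy must answer $0$. Thus no $\sim$-strategy can be winning, even though winning strategies exist. The construction is short; the only delicate step is to design $\sim$ so that its aggregate invariant is coarse enough to glue $\aH$ and $\aH'$ while remaining fine enough to enforce weak $W$-closedness via the iterated-difference argument above.
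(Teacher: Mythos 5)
Your proposal is correct and is essentially identical to the paper's own proof: same game $W$, same sum-based constraint $\sim$, and the same witness pair $(0,0)(1,1) \sim (1,1)(0,0)$ forcing contradictory responses. The extra detail you give (the $n=1,2,3$ iterated-difference argument for weak $W$-closedness, and the trivial extension of $\sim$ to the empty word) just fills in steps the paper leaves implicit.
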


Proposition~\ref{prop:wawc-la} below shows that time awareness cannot be simply dropped in Theorem~\ref{thm:main}.\ref{thm:main2}.

\begin{proposition}\label{prop:wawc-la}
There exist a game $\langle \{0,1\},\{0\}, W\rangle$ and a constraint that is closed by adding a suffix, strongly $W$-closed, and yet harmful.
\begin{proof}
Let $W$ be the set of runs such that \pO\/ plays both $0$ and $1$ infinitely often, and let two histories $\aH, \aH' \in (\{0,1\} \times \{0\})^*$ be $\sim$-equivalent if \pO\/ has played $1$ the same number of times. (Morally, it is a one-player game.)

This $\sim$ is clearly closed by adding a suffix. Let us argue that it is also strongly $W$-closed. Let $\aHinf \in W$ and $\aHinf' \in (\{0,1\} \times \{0\})^\omega$ be such that $\forall n \in \N,\exists \gamma \in (\{0,1\} \times \{0\})^*,\, \aHinf_{\leq n}\gamma \sim \aHinf'_{\leq n+|\gamma|}$. This implies that for all $n \in \N$, $\aHinf'$ involves at least as many $0$'s and as many $1$'s as $\aHinf_{\leq n}$. So $\aHinf'$ involves infinitely many $0$'s and $1$'s, so it is in $W$.

\pO\/ can win by playing $0$ and $1$ alternately, but any $\sim$-strategy always prescribes $0$ after the first time it prescribes $0$, since the number of occurrences of $1$ will then remain the same. So \pO\/ plays $1$ either always or only finitely many times, which is not winning.
\end{proof}
\end{proposition}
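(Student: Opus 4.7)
The plan is to build a morally one-player game (by taking $|\AT|=1$) with a winning condition that demands each of Player 1's two actions occur infinitely often, paired with a constraint that tracks only the running count of one of the actions. The intuition is that, lacking any length coordinate, this constraint cannot distinguish ``just played a $0$'' from ``did nothing'', so any uniform strategy that ever plays $0$ is trapped into playing $0$ forever, which precludes the ``infinitely many $1$'s'' requirement.

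Concretely, I would set $\AT = \{0\}$, let $W$ consist of those runs whose $\pi_1$-projection contains both $0$ and $1$ infinitely often, and define $\aH \sim \aH'$ iff the number of indices $i$ with $\pi_1(\aH_i)=1$ equals the corresponding number for $\aH'$ (with no constraint on length). Closure by adding a suffix is immediate, since appending the same suffix increases the $1$-count on both sides by the same amount. For strong $W$-closedness, I would take $\aHinf \in W$ and $\aHinf'$ such that for every $n$ there is $\gamma$ with $\aHinf_{\leq n}\gamma \sim \aHinf'_{\leq n+|\gamma|}$. Then $\aHinf'_{\leq n+|\gamma|}$ contains the same number of $1$'s as $\aHinf_{\leq n}\gamma$, hence at least as many as $\aHinf_{\leq n}$; and its $0$-count, obtained as length minus $1$-count, dominates the $0$-count of $\aHinf_{\leq n}$. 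Since $\aHinf \in W$, both counts in $\aHinf_{\leq n}$ tend to $\infty$ as $n$ grows, so the same holds along $\aHinf'$, giving $\aHinf' \in W$.

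For harmfulness, Player 1 has the obvious winning strategy of alternating $0$ and $1$. To rule out a winning $\sim$-strategy, I would take an arbitrary strategy $\sO$, write $a_k := \sO(0^k)$, and suppose $a_k=0$ for some $k$: then $\hOS(\sO,0^k)$ and $\hOS(\sO,0^{k+1})$ contain the same number of $1$'s (because $a_k=0$), hence are $\sim$-equivalent, forcing $\sO(0^{k+1})=\sO(0^k)=0$; an induction yields $a_j=0$ for all $j \geq k$, so $\sO$ produces only finitely many $1$'s. If on the other hand $a_k=1$ for every $k$, then $\sO$ produces no $0$ at all. Either way the induced run is not in $W$, so no $\sim$-strategy is winning. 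The only subtle step in this plan is the verification of strong $W$-closedness, where one must propagate the $0$-count and $1$-count in tandem and exploit that both tend to infinity along any run in $W$; the rest is a short structural observation enabled by the absence of time awareness in $\sim$.
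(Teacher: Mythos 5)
Your proposal is correct and coincides with the paper's own proof: same game ($\AT=\{0\}$, $W$ requiring both actions infinitely often), same count-of-$1$'s constraint, and the same three verifications, with your harmfulness argument being exactly the paper's observation that a $\sim$-strategy is trapped on $0$ once it first plays $0$. Your write-up merely spells out the strong $W$-closedness bookkeeping (the $0$-count as length minus $1$-count) a bit more explicitly than the paper does.
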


Proposition~\ref{prop:wawc-as} below shows that the assumption of closedness by adding a suffix cannot be simply dropped in Theorem~\ref{thm:main}.\ref{thm:main2}.

\begin{proposition}\label{prop:wawc-as}
There exist a game $\langle \{0,1\},\{0,1\}, W\rangle$ and a constraint that is time-aware, strongly $W$-closed, and yet harmful.
\begin{proof}
Let $W := (\_,0)(0,\_)(\{0,1\}^2)^\omega \cup (\_,1)(1,\_)(\{0,1\}^2)^\omega$, where the underscore means either $0$ or $1$. Said otherwise, for \pO\/ to win, her second action should imitate \pT\/'s first action. This is morally a two-round turn-based game. Let $\sim$ be defined by $(\_,\_) \sim (0,0)$ and $(\_,0)(0,\_)u \sim (\_,1)(1,\_)v$ and $(\_,1)(0,\_)u \sim (\_,0)(1,\_)v$ for all $u,v \in (\{0,1\}^2)^*$ such that $|u|=|v|$.

The definition of $\sim$ ensures time awareness. Let us argue that it is also strongly $W$-closed. Let $\aHinf \in W$, and let $\aHinf' \in (\{0,1\}^2)^\omega$ be such that $\forall n \in \N,\exists \gamma \in (\{0,1\}^2)^*, \aHinf_{\leq n} \gamma \sim \aHinf'_{\leq n+|\gamma|}$. For $n := 2$ this assumption provides some $\gamma$ such that $\aHinf_{\leq 2} \gamma \sim \aHinf'_{\leq 2+|\gamma|}$. Since $\aHinf \in W$, $\aHinf_{\leq 2}$ matches $(\_,0)(0,\_)$ or $(\_,1)(1,\_)$, and so does $\aHinf'_{\leq 2}$ by definition of $\sim$, so $\aHinf' \in W$.

Clearly \pO\/ has a winning strategy for this game, but every $\sim$-strategy prescribes the same action after $(\_,0)$ and $(\_,1)$, which is not winning.
\end{proof}
\end{proposition}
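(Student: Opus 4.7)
The plan is to exhibit a game where Player 1 must react to Player 2's first action at round 2, paired with a constraint that lumps all length-$1$ histories into a single class, forcing every $\sim$-strategy to commit to a fixed round-$2$ action blindly. I would take $\AO = \AT = \{0,1\}$ and let $W$ be the two-round winning condition: $\aHinf \in W$ iff at round $2$ Player 1's action matches Player 2's action at round $1$; actions beyond round $2$ are irrelevant. Player 1 visibly wins by waiting for Player 2's opening move and copying it.

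For $\sim$, I would stipulate that (a) all length-$1$ histories form a single equivalence class, so that Player 1's second move cannot depend on what Player 2 played first, and (b) on histories of length $\geq 2$, two histories are equated exactly when their first two pairs lie in the same ``winning prefix / losing prefix'' class --- concretely, the two winning starts $(\_,0)(0,\_)$ and $(\_,1)(1,\_)$ form one class, and the two losing starts $(\_,0)(1,\_)$ and $(\_,1)(0,\_)$ form another, with longer histories inheriting the label of their length-$2$ prefix. Time awareness is immediate from the construction.

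The main verification, and the part requiring care, is strong $W$-closedness. Since $W$ depends only on the first two pairs of actions, given $\aHinf \in W$ and $\aHinf'$ satisfying $\forall n\,\exists \gamma,\ \aHinf_{\leq n}\gamma \sim \aHinf'_{\leq n+|\gamma|}$, I would instantiate $n = 2$, obtaining some $\gamma$ with $\aHinf_{\leq 2}\gamma \sim \aHinf'_{\leq 2+|\gamma|}$. Since $\aHinf_{\leq 2}$ is a winning prefix and clause (b) preserves the winning/losing-prefix distinction on length $\geq 2$, the history $\aHinf'_{\leq 2+|\gamma|}$ must lie in a winning-prefix class, forcing $\aHinf'_{\leq 2}$ itself to be a winning prefix, whence $\aHinf' \in W$.

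Harmfulness follows directly: by clause (a), for any \pO\/ strategy $\sO$, the induced length-$1$ histories after Player 2's two possible openings are $\sim$-equivalent, so any $\sim$-strategy must prescribe the same action at round $2$ in both cases --- failing to copy one of the two first actions on at least one opponent-run. The very failure of closedness by adding a suffix is visible here: extending two equated length-$1$ histories by a round-$2$ pair lands one in a winning-prefix class and the other in a losing-prefix class, which is precisely the gap that makes Theorem~\ref{thm:main}.\ref{thm:main2} break without that hypothesis.
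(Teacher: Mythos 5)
Your construction is exactly the paper's proof: the same game ($W$ = ``Player 1's second action imitates Player 2's first''), the same constraint (a single class of length-$1$ histories, plus length-stratified winning-start/losing-start classes determined by the first two pairs), the same instantiation $n=2$ of the strong $W$-closedness premise, and the same harmfulness argument that any $\sim$-strategy plays the same second action after both openings. The only nitpick is that clause (b) should explicitly require equal lengths (the paper writes $|u|=|v|$), since as literally stated it would equate histories of different lengths and time awareness would fail; your claim that time awareness is ``immediate'' shows this is what you intended.
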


\paragraph*{Limitations and opportunity for meaningful generalizations}

Despite its relative tightness, Theorem~\ref{thm:main} does not imply all known results that can be seen as instances of strategy uniformization problems, so there is room for meaningful generalizations. E.g., due to time awareness requirement, Theorem~\ref{thm:main} does not imply positional determinacy of parity games~\cite{EJ91,Mostowski91}, where two histories are equivalent if they lead to the same state. Nor does it imply countable compactness of first-order logic~\cite{Goedel30}, which is also an instance of a uniformization problem: Let $(\varphi_n)_{n \in \N}$ be first-order formulas, and define a turn-based game: \emph{Spoiler} plays only at the first round by choosing $m \in \N$. Then \emph{Verifier} gradually builds a countable structure over the signature of $(\varphi_n)_{n \in \N}$. More specifically, at every round she either chooses the value of a variable, or the output value of a function at a given input value, or the Boolean value of a relation for a given pair of values. Only countably many pieces of information are needed to define the structure, and one can fix an order (independent of $m$) in which they are provided. Verifier wins if the structure she has defined is a model of $\land_{0 \leq k \leq m} \varphi_{k}$. Let all histories of equal length be $\sim$-equivalent. Compactness says that if each $\land_{0 \leq k \leq m} \varphi_{k}$ has a model, so does $\land_{0 \leq k} \varphi_{k}$. Said otherwise, if Verifier has a winning strategy, she has a winning $\sim$-strategy, i.e. independent of Spoiler's first move. This $\sim$ satisfies all the conditions of Theorem~\ref {thm:main}.\ref{thm:main1} but weak $W$-closedness: the premise $(\forall n \in \N, \aHinf_{\leq n} \sim \aHinf'_{\leq n})$ holds by universality, but the conclusion $(\aHinf \in W \Leftrightarrow \aHinf' \in W)$ is false since a model for $\land_{0 \leq k \leq m} \varphi_{k}$ need not be a model for $\land_{0 \leq k \leq m+1} \varphi_{k}$.

\section{Maximal harmless contraints}\label{sect:mhc}

This section studies the existence of maximal harmless constraints. Theorem~\ref{thm:finite-coarsest-sim} provides a basic sufficient condition for existence to hold, and examples show its relative tightness.

Remark: the problem of strategy maximal uniformization was advertised in the introduction as relevant to security (to maximize information concealment), but it is also relevant, e.g., to minimize the memory size of a finite-memory winning strategy. 

Proposition~\ref{prop:no-coarsest-sim} below dashes the hope for uniqueness of maximal harmless constraints even in finite turn-based games. (Below, openness and closedness both refer to the topology induced by the distance from Section~\ref{sect:main-def-res}, and clopen means closed and open.)

\begin{proposition}\label{prop:no-coarsest-sim}
There exists a game $\langle\{0,1\}, \{0,1,2\}, W \rangle$ with clopen $W$ (i.e. the game is morally finite), a \pO\/ winning strategy, but no unique maximal harmless constraint.
\begin{proof}
Consider the game $\langle\{0,1\}, \{0,1,2\}, W \rangle$ where a run is in $W$ iff it satisfies the following: if \pT\/ picks $i \in \{0,1\}$ in the first round, \pO\/ picks the same $i$ in the second round. $W$ is a clopen set, and the game is morally a two-round turn-based game. The issue comes from \pO\/'s winning inconditionnally if \pT\/ has picked $2$ in the first round.

Formally, let $\sim_0$ and $\sim_1$ consist of two equivalence classes each: for each $i \in \{0,1\}$, one class of $\sim_i$ is $(\_,i) = \{(0,i),(1,i)\}$. For each $i \in \{0,1\}$, the \pO\/ strategy consisting in picking $i$ after $(\_,i)$, and in picking $1-i$ otherwise is a \pO\/ winning $\sim_i$-strategy. However, the only equivalence relation including $\sim_0$ and $\sim_1$ is the universal relation $\sim$, but there is no \pO\/ winning $\sim$-strategy, since \pO\/ must pick $0$ after $(\_,0)$ and $1$ after $(\_,1)$.
\end{proof}
\end{proposition}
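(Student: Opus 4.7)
The plan is to exhibit a game so short that winning depends on only the first two rounds, together with two harmless constraints whose join as an equivalence relation is harmful, forcing the non-existence of any single coarsest (i.e.\ unique maximal) harmless constraint.

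First I would set up the game. Take $\AO=\{0,1\}$, $\AT=\{0,1,2\}$ and declare \pO\/ to win at a run $\aHinf$ iff either $\pi_2(\aHinf_0)=2$, or $\pi_2(\aHinf_0)\in\{0,1\}$ and $\pi_1(\aHinf_1)=\pi_2(\aHinf_0)$. Membership in $W$ depends only on the first two pairs of actions, so $W$ is clopen and the game is morally a two-round turn-based game where \pT\/ moves first. The \pO\/ strategy that after $(\_,i)$ plays $i$ for $i\in\{0,1\}$ (and plays anything, say $0$, after $(\_,2)$) is a winning strategy.

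Next I would build two harmless constraints $\sim_0$ and $\sim_1$ that disagree on how they group \pT\/'s opening move. For $i\in\{0,1\}$ let $\sim_i$ be the time-aware equivalence relation on length-one histories with classes $\{(\_,i)\}$ and $\{(\_,j)\mid j\neq i\}$, extended to longer histories however is convenient (say, by making two histories equivalent iff their length-one prefixes are equivalent and their suffixes coincide). A winning $\sim_0$-strategy is then: after any history $\sim_0$-equivalent to $(\_,0)$, play $0$; otherwise play $1$. This wins because if \pT\/ opened with $0$ the action $0$ matches; if with $1$ the action $1$ matches; and if with $2$ any action wins. The symmetric construction yields a winning $\sim_1$-strategy.

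The key step is the failure of a common refinement. Any equivalence relation $\sim$ containing both $\sim_0$ and $\sim_1$ satisfies $(\_,0)\sim_0(\_,2)$ and $(\_,2)\sim_1(\_,1)$, hence $(\_,0)\sim(\_,1)$ by transitivity. Any $\sim$-strategy $\sO$ must then prescribe the same value $\sO(0)=\sO(1)$, so for whichever $i\in\{0,1\}$ is not equal to that value, \pT\/ playing $i$ in the first round defeats $\sO$. Thus no equivalence relation above both $\sim_0$ and $\sim_1$ is harmless, which rules out the existence of a unique maximum harmless constraint.

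The only genuine subtlety is the choice of definition of $\sim_i$ on longer histories, since the notion of ``harmless'' requires an equivalence relation on all of $(\AO\times\AT)^*$, not only on length-one histories. The simplest fix is to declare $\aH\sim_i\aH'$ iff $|\aH|=|\aH'|$ and their length-one prefixes are $\sim_i$-equivalent in the coarse sense above and their suffixes (pairs from index~$1$ onwards) are literally equal; the argument above goes through unchanged, because the obstruction already occurs on length-one histories.
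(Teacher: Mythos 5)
Your construction is correct and is essentially the paper's own proof: the same game, the same pair of harmless constraints isolating \pT's first move $i\in\{0,1\}$, and the same obstruction that any equivalence relation coarsening both $\sim_0$ and $\sim_1$ must merge $(\_,0)$ with $(\_,1)$ via $(\_,2)$ and is therefore harmful. One subscript swap should be fixed: with your definitions the correct instances are $(\_,0)\sim_1(\_,2)$ and $(\_,2)\sim_0(\_,1)$, not $(\_,0)\sim_0(\_,2)$ and $(\_,2)\sim_1(\_,1)$; the transitivity step then goes through verbatim. Your $\sim_i$ is slightly finer than the paper's (which simply takes two classes on all of $(\AO\times\AT)^*$, namely $(\_,i)$ and everything else), but since the obstruction lives entirely on length-one histories this changes only bookkeeping, not the argument.
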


\begin{theorem}\label{thm:finite-coarsest-sim}
Consider a game $\langle \AO, \AT, W \rangle$. There is (at least) one largest harmless constraint with finitely many equivalence classes, if either of the following holds:
\begin{enumerate}
\item\label{thm:finite-coarsest-sim1} there is a \pO\/ winning $\sim$-strategy for a $\sim$ with finitely many equivalence classes,

\item\label{thm:finite-coarsest-sim2} there is a \pO\/ winning strategy and $\AO$ is finite,

\item\label{thm:finite-coarsest-sim3} there is a \pO\/ winning strategy, $\AT$ is finite, and $W$ is open. 
\end{enumerate}
\begin{proof}
\begin{enumerate}
\item Keep merging any two equivalence classes of $\sim$, thus producing larger and larger $\sim'$ relations, as long as there are \pO\/ winning $\sim'$-strategies. The process stops by finiteness, and yields a sought $\sim'$.

\item Let $\sO$ be a \pO\/ winning strategy, and let $\aH \sim \aH'$ iff $\sO(\aH) = \sO(\aH')$. The equivalence relation $\sim$ has finitely many classes since $\AO$ is finite, and $\sO$ is a \pO\/ winning $\sim$-strategy. One concludes by Theorem~\ref{thm:finite-coarsest-sim},\ref{thm:finite-coarsest-sim1}. 

\item Let $\sO$ be a \pO\/ winning strategy. Consider the set $T$ of the $\aHT \in \AT^*$ such that $\hOS(\sO,\aHT) \cdot (\AO \times \AT)^\omega \not\subseteq W$, and let us show that $T$ is a tree. For all $\aHT,\aHT' \in \AT^*$, if the prefix relation $\aHT \sqsubseteq \aHT'$ holds, $\hOS(\sO,\aHT) \sqsubseteq\hOS(\sO,\aHT')$, in which case $\hOS(\sO,\aHT') \cdot (\AO \times \AT)^\omega \subseteq \hOS(\sO,\aHT) \cdot (\AO \times \AT)^\omega$, and subsequently $\hOS(\sO,\aHT') \cdot (\AO \times \AT)^\omega \not\subseteq W$ implies $\hOS(\sO,\aHT) \cdot (\AO \times \AT)^\omega \not\subseteq W$. This shows that $T$ is a tree. Since $\sO$ is winning, $\hOS(\sO,\AT^\omega) \subseteq W$, so since $W$ is open, $T$ as no infinite path. Also, $T$ is finite-branching since $A_2$ is finite, so by Koenig's Lemma $T$ is finite.

Let the strategy $\sO'$ be defined by $\sO'(\aH) := \sO(\aH)$ for all $\aH \in \hOS(\sO,T)$, and $\sO'(\aH) := \sO(\e)$ otherwise. So $\sO'$ is a \pO\/ winning strategy: it behaves like $\sO$ until the run ``leaves'' $T$ from its leaves, but then \pO\/ wins no matter what, by definition of $T$. Let $\sim$ be defined by $\aH \sim \aH'$ iff $\sO'(\aH) = \sO'(\aH')$. It has finitely many equivalence classes (exactly $|\sO \circ \hOS(\sO,T)|$), so by Theorem~\ref{thm:finite-coarsest-sim}.\ref{thm:finite-coarsest-sim1} there is (at least) one largest harmless constraint.
\end{enumerate}
\end{proof}
\end{theorem}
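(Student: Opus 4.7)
My plan handles the three cases in order, reducing (\ref{thm:finite-coarsest-sim2}) and (\ref{thm:finite-coarsest-sim3}) to (\ref{thm:finite-coarsest-sim1}) by producing a winning strategy that takes only finitely many distinct values.

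For (\ref{thm:finite-coarsest-sim1}), given a harmless $\sim$ with finitely many classes, I would iteratively merge a pair of classes whenever harmlessness is preserved. The process must terminate since each merge strictly decreases the class count. The terminal $\sim^\ast$ is maximal among coarsenings of $\sim$, and therefore maximal among all harmless finite-class constraints: any harmless $\sim' \supseteq \sim^\ast$ would also contain $\sim$, hence be a coarsening of $\sim$, and would coincide with $\sim^\ast$ by the stopping condition.

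For (\ref{thm:finite-coarsest-sim2}), given a winning $\sO$ with finite $\AO$, I would define $\aH \sim \aH'$ on $(\AO\times\AT)^\ast$ by $\sO(\pi_2(\aH)) = \sO(\pi_2(\aH'))$, where $\pi_2$ extracts the opponent-action sequence from a history. Since $\sO$ has codomain $\AO$, this relation has at most $|\AO|$ classes; and since $\pi_2 \circ \hOS(\sO,\cdot)$ is the identity on $\AT^\ast$, the defining implication of a $\sim$-strategy is automatic for $\sO$. Thus $\sO$ is a winning $\sim$-strategy and part~(\ref{thm:finite-coarsest-sim1}) applied to this $\sim$ yields the desired maximal constraint.

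For (\ref{thm:finite-coarsest-sim3}), I would first prune $\sO$ to a winning strategy $\sO'$ that depends on only finitely many opponent-histories, and then invoke~(\ref{thm:finite-coarsest-sim2}). Let $T \subseteq \AT^\ast$ be the set of $\aHT$ for which $\hOS(\sO,\aHT)\cdot(\AO\times\AT)^\omega \not\subseteq W$. This $T$ is prefix-closed (a shorter prefix admits every bad continuation of the longer history) and finitely branching (since $\AT$ is finite); moreover it has no infinite branch, because for each opponent-run $\aHTinf$ the winning run $\hOS(\sO,\aHTinf)$ lies in the open set $W$, so some $\hOS(\sO,\aHTinf_{\le n})\cdot(\AO\times\AT)^\omega \subseteq W$, whence $\aHTinf_{\le n}\notin T$. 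K\"onig's Lemma then forces $T$ to be finite. Define $\sO'$ to agree with $\sO$ on $T$ and to play a fixed default action elsewhere: $\sO$ and $\sO'$ coincide on histories in $T$, so on any opponent-run the play evolves identically until the first opponent-history outside $T$, after which by definition of $T$ every continuation lies in $W$, making $\sO'$ winning. Since $\sO'$ takes only finitely many values, the construction from~(\ref{thm:finite-coarsest-sim2}) applies, feeding~(\ref{thm:finite-coarsest-sim1}). The step worth flagging is that the argument genuinely uses both hypotheses together: openness of $W$ bounds the depth at which the game becomes ``decided'' along each opponent-run, and finiteness of $\AT$ is what lets K\"onig's Lemma convert absence of infinite branches into finiteness of $T$.
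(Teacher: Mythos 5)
Your proof is correct and takes essentially the same route as the paper's: greedy pairwise merging of classes for case~(1), the kernel relation of the winning strategy for case~(2), and the prefix-closed tree $T$ of ``undecided'' opponent-histories, finite by openness of $W$ plus K\"onig's Lemma, followed by pruning to a finitely-valued winning strategy for case~(3). Your only deviations are cosmetic refinements: composing with $\pi_2$ to repair the history/opponent-history type mismatch that the paper glosses over in case~(2), routing case~(3) through case~(2) instead of directly to case~(1), and spelling out why the stopping condition of the merging process yields maximality (harmlessness is inherited by finer constraints, so a strictly coarser harmless relation would permit a further single merge).
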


Proposition~\ref{prop:no-coarsest-sim2} below shows the tightness of Theorem~\ref{thm:finite-coarsest-sim}.\ref{thm:finite-coarsest-sim2}.

\begin{proposition}\label{prop:no-coarsest-sim2}
There exists a game $\langle \N, \N, W \rangle$ with clopen $W$ and a \pO\/ winning strategy, but no maximal harmless constraint.
\begin{proof}
Consider the game $\langle \N, \N, W \rangle$ where \pO\/ wins iff  her pick in the second round is greater than \pT\/'s pick in the first round. $W$ is clopen, and the game amounts to a two-round turn-based game. The \pO\/ strategy consisting in picking $n+1$, where $n$ was \pT\/'s pick, is clearly winning. Let a constraint $\sim$ be such that there exists a winning $\sim$-strategy $\sO$. 

Since $\sO$ is winning, $n < \sO(n)$ for all $n \in \N$. In particular $\sO(0) < \sO(\sO(0))$, so the $\sim$-classes of $(\sO(\e),0)$ and $(\sO(\e),\sO(0))$ are distinct since $\sO$ is a $\sim$-strategy. Let $\sim'$ be defined by merging these two classes, so that $\sim'$ is strictly greater than $\sim$. Let us derive a new \pO\/ strategy $\sO'$ as follows: $\sO'(\aH) := \sO(\sO(0))$ if $\aH \sim (\sO(\e),0)$, and $\sO'(\aH) := \sO(\aH)$ otherwise. Thus, $\sO'$ is a \pO\/ winning $\sim'$-strategy.
\end{proof}
\end{proposition}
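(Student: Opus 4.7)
The plan is to use a morally two-round game over $\N$: Player 2 picks $n$ in round one, Player 1 picks $m$ in round two, and Player 1 wins iff $m > n$ (Player 1's round-one move and both players' later moves are irrelevant). Formally, take $W := \{\aHinf \in (\N \times \N)^\omega \mid \pi_1(\aHinf_1) > \pi_2(\aHinf_0)\}$, which is clopen because membership depends only on the first two coordinates. The strategy $\sO_0(\aHT) := \aHT_0 + 1$ for nonempty $\aHT$ (arbitrary at $\e$) is manifestly winning.

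The heart of the argument is to show that for every constraint $\sim$ admitting a winning $\sim$-strategy $\sO$, one can produce a strictly larger harmless constraint $\sim'$. The key observation is that any winning $\sO$ must satisfy $\sO(n) > n$ for all $n \in \N$, since otherwise the opponent-run $n \cdot 0^\omega$ would defeat $\sO$. Setting $k := \sO(0)$ therefore yields $0 < k < \sO(k)$, so in particular $\sO(0) \neq \sO(k)$. Because $\sO$ is a $\sim$-strategy and $\hOS(\sO, 0) = (\sO(\e), 0)$, $\hOS(\sO, k) = (\sO(\e), k)$, these two induced histories must lie in distinct $\sim$-classes. I would then let $\sim'$ be the smallest equivalence relation extending $\sim$ that identifies these two histories; this makes $\sim \subsetneq \sim'$.

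To construct a winning $\sim'$-strategy $\sO'$, I would prescribe $\sO(k)$ on all opponent-histories $\aHT$ whose $\sO$-induced history lies in the $\sim$-class of $(\sO(\e), 0)$, and keep $\sO'(\aHT) := \sO(\aHT)$ otherwise; note that on the $\sim$-class of $(\sO(\e), k)$, the original $\sO$ already prescribes $\sO(k)$, so the two classes are unified by $\sO'$. Since $\sO(k) > k > 0$, this strategy remains winning: after an opening of $0$ or $k$ Player 1 now plays $\sO(k)$, which exceeds both, and after any other opening $\sO'$ agrees with $\sO$.

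The main obstacle is verifying that $\sO'$ genuinely respects $\sim'$ rather than merely agreeing with $\sO$ in spirit. The delicate point is that changing $\sO$ into $\sO'$ at some opponent-histories of length one may shift $\hOS(\sO', \cdot)$ away from $\hOS(\sO, \cdot)$ at longer opponent-histories, so new $\sim'$-equivalences could in principle arise. I would dispatch this by a short case analysis according to whether the length-one prefix of each opponent-history falls inside or outside the merged class: outside the merger $\sO' = \sO$ and the $\sim$-strategy property of $\sO$ carries over to $\sim'$, whereas inside the merger $\sO'$ is constantly $\sO(k)$ by definition, so every pair of $\sim'$-equivalent induced histories automatically receives the same prescription. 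Since $\sim$ was an arbitrary harmless constraint and $\sim' \supsetneq \sim$ is also harmless, no harmless constraint is maximal.
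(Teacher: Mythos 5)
Your proposal follows exactly the paper's route: same game, same observation $\sO(n)>n$, same pair of classes $[(\sO(\e),0)]$ and $[(\sO(\e),\sO(0))]$ merged into $\sim'$, same modified strategy. You also correctly isolate the one delicate point, which the paper passes over in silence: redefining the strategy shifts its induced histories at longer opponent-histories. However, your dispatch of that point fails. Your case split is on the \emph{length-one prefix} of the opponent-history, whereas your $\sO'$ prescribes $\sO(k)$ only where the \emph{full} $\sO$-induced history lies in the class of $(\sO(\e),0)$; so ``inside the merger'' $\sO'$ is not constantly $\sO(k)$, and ``outside'' the $\sim$-strategy property of $\sO$ concerns $\hOS(\sO,\cdot)$, not the shifted $\hOS(\sO',\cdot)$. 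A concrete counterexample: take $\sO(\e)=5$, $\sO(n):=n+1$ for all $n\in\N$ (so $k=\sO(0)=1$ and $\sO(k)=2$), $\sO(0\cdot 7):=42$, $\sO(1\cdot 7):=43$, and let $\sim$ have exactly one non-singleton class, namely $\{(5,0)(2,7),\,(5,1)(2,7)\}$. Then $\sO$ is a winning $\sim$-strategy: the history $(5,0)(2,7)$ is not $\sO$-induced (that would require $\sO(0)=2$), so the nontrivial class contains only one induced history and imposes no constraint. Your $\sO'$ changes only the prescription at the opponent-history $0$, from $1$ to $2$. But then $\hOS(\sO',0\cdot 7)=(5,0)(2,7)\sim(5,1)(2,7)=\hOS(\sO',1\cdot 7)$ while $\sO'(0\cdot 7)=42\neq 43=\sO'(1\cdot 7)$: your $\sO'$ is not even a $\sim$-strategy, a fortiori not a $\sim'$-strategy. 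Note that both branches of your case analysis classify $0\cdot 7$ and $1\cdot 7$ identically (both length-one prefixes fall inside the merged class), yet the prescriptions differ, so the analysis cannot close the gap.

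In fairness, the paper's own proof defines $\sO'$ in the same way and has the identical gap, so you reproduced its argument faithfully and deserve credit for at least naming the danger. The proposition itself survives under a repaired construction: define the new strategy $\tau$ so that its prescription is a function $p$ of the $\sim'$-class of its \emph{own} induced history, by induction on length --- this is well-defined since $\hOS(\tau,\aHT)$ depends only on prescriptions at proper prefixes, the same device as in the proof of Proposition~\ref{prop:finite-wawc}. Set $p:=\sO(\sO(0))$ on the merged class; on any other class containing some length-one history $(\sO(\e),n)$ set $p:=\sO(n)$, which is constant across such members because every $(\sO(\e),n)$ is $\sO$-induced and $\sO$ is a $\sim$-strategy; set $p:=\sO(\e)$ on the class of $\e$ (consistent with the previous rule, since $\e\sim(\sO(\e),n)$ forces $\sO(\e)=\sO(n)$); and choose $p$ arbitrarily elsewhere. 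Then $\tau$ is a $\sim'$-strategy by construction, and it is winning: in the merged class every length-one member $(\sO(\e),n)$ satisfies either $\sO(n)=\sO(0)$, whence $n<\sO(0)<\sO(\sO(0))$, or $\sO(n)=\sO(\sO(0))>n$; in any other class the common value $\sO(n)$ exceeds each member's opponent move; and $W$ ignores all prescriptions beyond the second round.
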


The action sets in Proposition~\ref{prop:no-coarsest-sim2} are infinite. Proposition~\ref{prop:no-coarsest-sim3} below also shows that there may not be any maximal harmless constraint, but with a finite action set for \pT\/. This is done at the cost of a slight set-theoretic complexification of the winning condition, which is still closed but no longer open. The benefit is that Proposition~\ref{prop:no-coarsest-sim3} shows the tightness of both Theorem~\ref{thm:finite-coarsest-sim}.\ref{thm:finite-coarsest-sim2} and \ref{thm:finite-coarsest-sim}.\ref{thm:finite-coarsest-sim3}.

\begin{proposition}\label{prop:no-coarsest-sim3}
There exists a game $\langle \N, \{0,1\}, W \rangle$, where $W$ is a non-open closed set, and there is a \pO\/ winning strategy, but no maximal harmless constraint.
\begin{proof}
Consider the game $\langle \N, \{0,1\}, W \rangle$, where a run is in $W$ iff it satisfies the following: if \pT\/ always picks $1$, \pO\/ wins; otherwise, the round after the first time that \pT\/ picks $0$, \pO\/ must pick a number greater than the number of $1$ that have been picked by \pT\/ so far. $W$ is closed but not open, the latter can be either checked directly or later inferred from Theorem~\ref{thm:finite-coarsest-sim}.\ref{thm:finite-coarsest-sim3}. Let a constraint $\sim$ be such that there exists a winning $\sim$-strategy $\sO$. Since $\sO$ is winning, $n < \sO(1^{n}0)$ for all $n \in \N$. In particular, $\sO(0) < \sO( 1^{\sO(0)}0)$, so the $\sim$-classes of $(\sO(\e),0)$ and $ \hOS(\sO,1^{\sO(0)}0)$ are distinct since $\sO$ is a $\sim$-strategy. Let $\sim'$ be defined by merging these two classes, so that $\sim'$ is strictly greater than $\sim$. Let us derive a new \pO\/ strategy $\sO'$ as follows: $\sO'(\aH) := \sO(1^{\sO(0)}0)$ if $\aH \sim (\sO(\e),0)$, and $\sO'(\aH) := \sO(\aH)$ otherwise. Thus, $\sO'$ is a winning $\sim'$-strategy.
\end{proof}
\end{proposition}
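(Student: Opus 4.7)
My plan mirrors the proof of Proposition~\ref{prop:no-coarsest-sim2}. First, I would verify the basic properties of the game. $W$ is closed because a losing run is witnessed by a finite prefix: \pT\/ must play a first $0$ at some round $n+1$ and \pO\/ must play some value at most $n$ at round $n+2$, so the complement of $W$ is a countable union of cylinders and hence open. A winning strategy exists: \pO\/ plays $n+1$ after opponent-history $1^n 0$ (arbitrary otherwise). That $W$ is not open then follows from Theorem~\ref{thm:finite-coarsest-sim}.\ref{thm:finite-coarsest-sim3}: if it were, then the finiteness of $\AT$ together with the presence of a winning strategy would force a maximal harmless constraint to exist, contradicting the main claim that is about to be proved.

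Now let $\sim$ be a harmless constraint with winning $\sim$-strategy $\sO$. Since $\sO$ wins, $\sO(1^n 0) > n$ for every $n \geq 0$. Setting $k := \sO(0)$, we get $k \geq 1$ and $\sO(1^k 0) > k$, so $\sO(0) \neq \sO(1^k 0)$. Since $\sO$ is a $\sim$-strategy, the induced histories $\hOS(\sO, 0) = (\sO(\e), 0)$ and $\hOS(\sO, 1^k 0)$ must lie in distinct $\sim$-classes. Let $\sim'$ be obtained from $\sim$ by merging those two classes, so $\sim' \supsetneq \sim$. Define $\sO'$ by $\sO'(\aHT) := \sO(1^k 0)$ whenever $\hOS(\sO, \aHT) \sim (\sO(\e), 0)$, and $\sO'(\aHT) := \sO(\aHT)$ otherwise. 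Winningness of $\sO'$ is checked on the only relevant inputs $\aHT = 1^n 0$: if $\aHT$ lies in the modified class then $\sO(\aHT) = k > n$ (since $\sO$ is already a winning $\sim$-strategy), hence $\sO'(\aHT) = \sO(1^k 0) > k > n$; otherwise $\sO'(\aHT) = \sO(\aHT) > n$.

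The main obstacle is confirming that $\sO'$ is really a $\sim'$-strategy: the definition uses $\hOS(\sO, \cdot)$, while the defining property of a $\sim'$-strategy concerns $\hOS(\sO', \cdot)$. As in the analogous step of Proposition~\ref{prop:no-coarsest-sim2}, this is handled by observing that $\sO'$ takes the single value $\sO(1^k 0)$ uniformly on the merged $\sim'$-class and agrees with $\sO$ off it, so any $\sim'$-equivalence either lies inside the merged class (where $\sO'$ is constant by construction) or inside a $\sim$-class on which the $\sim$-strategy $\sO = \sO'$ is already constant. Therefore $\sO'$ is a winning $\sim'$-strategy and $\sim$ is not maximal.
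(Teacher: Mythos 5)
Your proposal reproduces the paper's argument essentially step for step: same game, same reason why $W$ is closed, same winning strategy, same appeal to Theorem~\ref{thm:finite-coarsest-sim}.\ref{thm:finite-coarsest-sim3} for non-openness, the same pair of classes merged, and the same definition of $\sO'$. You also correctly isolate the one genuinely delicate point, which the paper dispatches in a single word (``Thus, $\sO'$ is a winning $\sim'$-strategy''): the branch condition defining $\sO'$ refers to $\hOS(\sO,\cdot)$, whereas the $\sim'$-strategy property quantifies over histories induced by $\sO'$ itself.

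However, your patch for that point does not hold up as stated. The dichotomy ``every $\sim'$-equivalence lies inside the merged class, where $\sO'$ is constant by construction, or inside a $\sim$-class on which $\sO'=\sO$ is constant'' silently replaces membership of $\hOS(\sO',\aHT)$ in a class by membership of $\hOS(\sO,\aHT)$: the constancy ``by construction'' is only established on $\{\aHT \mid \hOS(\sO,\aHT)\ \mbox{in the merged class}\}$. As soon as a proper prefix $\beta$ of $\aHT$ is a modified input (i.e.\ $\hOS(\sO,\beta)\sim(\sO(\e),0)$, where $\sO'(\beta)=\sO(1^k0)\neq k=\sO(\beta)$), the history $\hOS(\sO',\aHT)$ diverges from $\hOS(\sO,\aHT)$ and can land in the merged class --- or in any other class, which breaks the second branch of the dichotomy too --- at an input where $\sO'$ takes the ``else'' value. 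Concretely, take $\sO(\e)=5$, $\sO(1^n0)=n+1$ (so $k=\sO(0)=1$ and $\sO(10)=2$), $\sO(00)=7$, and let the only non-singleton $\sim$-class be $\{(5,0),\,(5,0)(2,0)\}$; since $(5,0)(2,0)$ is not $\sO$-induced, $\sO$ is (vacuously) a winning $\sim$-strategy, yet your $\sO'$ gives $\hOS(\sO',00)=(5,0)(2,0)\sim'(5,0)=\hOS(\sO',0)$ while $\sO'(00)=7\neq 2=\sO'(0)$, so $\sO'$ is not a $\sim'$-strategy. To be fair, this is precisely the verification the paper itself omits, and the paper's own $\sO'$ suffers from the same phenomenon (under either reading of its ambiguous branch condition); a genuine repair needs either control over the histories induced by $\sO'$ (e.g.\ a recursive branch condition on $\hOS(\sO',\cdot)$ together with a treatment of the escaped branches) or a differently constructed winning $\sim'$-strategy. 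So your attempt is faithful to the paper --- indeed more self-aware than it --- but the step you flagged is a real gap in both, and your proposed justification for it is not sound.
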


\bibliography{article}

\end{document}